\documentclass{article}
\usepackage{graphicx} 
\usepackage{amsthm}
\usepackage{amsfonts}
\usepackage{amssymb}
\usepackage{amsmath}
\usepackage{dsfont}
\usepackage{fullpage}
\usepackage[english]{babel}
\usepackage[T1]{fontenc}
\usepackage{multirow}
\usepackage{multicol}
\usepackage{blkarray}
\usepackage{float}
\usepackage{biblatex}
\usepackage{placeins}
\bibliography{main.bib}

\usepackage{lscape}
\usepackage{mathtools}
\usepackage{enumitem}
\usepackage{subcaption}
\usepackage[ruled,vlined,linesnumbered, onelanguage]{algorithm2e}
\DontPrintSemicolon

\usepackage{tikz}
\usepackage{xcolor}
\usepackage{colortbl}
\usepackage{arydshln}
\usepackage{etoolbox}
\usepackage{enumitem}
\setlist[itemize]{label=\textbullet}

\usetikzlibrary{patterns.meta,calc}
\usetikzlibrary{shapes,arrows,shadows,positioning}
\usetikzlibrary {shapes.multipart}
\usetikzlibrary{decorations.pathreplacing}
\usetikzlibrary{patterns}
\usetikzlibrary{matrix}
\usetikzlibrary{chains}
\usetikzlibrary{shapes.misc}

\usepackage{euler} 
\usepackage{palatino} 
\usepackage{fourier-orns} 

\usepackage{pifont}
\newcommand{\xmark}{\ding{55}}%

\definecolor{shadethmcolor}{rgb}{0.96,0.93,1.0} 
\definecolor{ourpurple}{rgb}{0.60,0.50,0.71}

\usepackage{mdframed}

\newmdtheoremenv[backgroundcolor=shadethmcolor, %
innertopmargin =0pt , %
innerleftmargin=0pt,%
topline=false,bottomline=false,leftline=false,rightline=false,]{theorem}{Theorem}

\newmdtheoremenv[backgroundcolor=shadethmcolor, %
innertopmargin =0pt , %
innerleftmargin=0pt,%
topline=false,bottomline=false,leftline=false,rightline=false,]{example}{Example}

\newmdtheoremenv[backgroundcolor=shadethmcolor, %
innertopmargin =0pt , %
innerleftmargin=0pt,%
topline=false,bottomline=false,leftline=false,rightline=false,]{proposition}{Proposition}

\newmdtheoremenv[backgroundcolor=shadethmcolor, %
innertopmargin =0pt , %
innerleftmargin=0pt,%
topline=false,bottomline=false,leftline=false,rightline=false,]{lemma}{Lemma}

\newmdtheoremenv[backgroundcolor=shadethmcolor, %
innertopmargin =0pt , %
innerleftmargin=0pt,%
topline=false,bottomline=false,leftline=false,rightline=false,]{corollary}{Corollary}

\newmdtheoremenv[backgroundcolor=shadethmcolor, %
innertopmargin =0pt , %
innerleftmargin=0pt,%
topline=false,bottomline=false,leftline=false,rightline=false,]{remark}{Remark}

\newmdtheoremenv[backgroundcolor=shadethmcolor, %
innertopmargin =0pt , %
innerleftmargin=0pt,%
topline=false,bottomline=false,leftline=false,rightline=false,]{definition}{Definition}

\newmdtheoremenv[backgroundcolor=shadethmcolor, %
innertopmargin =0pt , %
innerleftmargin=0pt,%
topline=false,bottomline=false,leftline=false,rightline=false,]{conjecture}{Conjecture}

\newcommand{\ch}[1]{\textnormal{\texttt{#1}}}

 \renewenvironment{proof}{{\noindent\bfseries Proof.}}{\hfill{\color{ourpurple}{\decoone}}}

\newcommand{\textnumbering}[2][1]{%
  \let\nodecontents\empty%
  \let\counters\empty%
  \begin{tikzpicture}%
  \foreach \x[count=\currentX from 0] in {#2} {
    \pgfmathtruncatemacro\number{\currentX+#1}
    \expandafter\gappto\expandafter\nodecontents\expandafter{\x\&}
    \expandafter\gappto\expandafter\counters\expandafter{\number\&}
  }
  \matrix[inner sep=0pt, matrix of nodes, ampersand replacement=\&, anchor=base, nodes={inner sep=0pt},
  row 1/.style={font=\tiny}]
  {\counters\\[2pt]
  \nodecontents\\};
\end{tikzpicture}%
}

\usepackage{hyperref}
\hypersetup{
colorlinks=true,
breaklinks=true,
urlcolor= ourpurple,
linkcolor= ourpurple,
citecolor= ourpurple,
}

\colorlet{lblue}{blue!50!white}
\colorlet{lred}{red!50!white}
\colorlet{lgreen}{green!50!white}
\colorlet{lpurple}{purple!50!white}
\colorlet{lorange}{orange!50!white}
\colorlet{lpink}{pink!50!white}
\colorlet{lbrown}{brown!50!white}
\colorlet{lyellow}{yellow!50!white}
\colorlet{lolive}{olive!50!white}

\DeclareMathOperator{\rc}{\text{rc}}

\setlength{\parindent}{0pt}
\setlength{\parskip}{\medskipamount}

\title{On the number of $k$-mers admitting\\ a given lexicographical minimizer}
\author{Florian Ingels\\\url{florian.ingels@univ-lille.fr} 
\and 
Camille Marchet\\ \url{camille.marchet@univ-lille.fr}
\and
Mikaël Salson\\ \url{mikael.salson@univ-lille.fr}}
\date{Univ. Lille, CNRS, Centrale Lille, UMR 9189 CRIStAL, F-59000 Lille, France}

\begin{document}
\maketitle

\begin{abstract}
\noindent
The minimizer of a word of size $k$ (a $k$-mer) is defined as its smallest substring of size $m$ (with $m\leq k$), according to some ordering on $m$-mers. minimizers have been used in bioinformatics --- notably --- to partition sequencing datasets, binning together $k$-mers that share the same minimizer. It is folklore that using the lexicographical order lead to very unbalanced partitions, resulting in an abundant literature devoted to devising alternative orders for achieving better balanced partitions. To the best of our knowledge, the unbalanced-ness of lexicographical-based minimizer partitions has never been investigated from a theoretical point of view. In this article, we aim to fill this gap and determine, for a given minimizer, how many $k$-mers would admit the chosen minimizer --- i.e. what would be the size of the bucket associated to the chosen minimizer in the worst case, where all $k$-mers would be seen in the data.
We show that this number can be computed in $O(km)$ space and $O(km^2)$ time. We further introduce approximations that can be computed in $O(k)$ space and $O(km)$ time. We also show on genomic datasets that the practical number of $k$-mers associated to a minimizer are closely correlated to the theoretical expected number. We introduce two conjectures that could help closely approximating the total number of $k$-mers sharing a minimizer. We believe that characterising the distribution of the number of $k$-mers per minimizer will help devise efficient lexicographic-based minimizer bucketting.
\end{abstract}


\section{Introduction}

\subsection{Context \& motivations}

$k$-mers, fragments of sequences of fixed size $k$, are a key tool in bioinformatics for processing and analysing data from DNA or RNA sequencing, and have been used successfully to tackle problems such as genome assembly, sequence alignment, species assignment, gene quantification, and so on \cite{marcais_sketching_2019,jenike2024guide}. Technological advances are resulting in increasingly larger datasets to process, which are growing faster than the computing capacity of computers \cite{schatz2010cloud,stephens2015big,katz2022sequence} --- requiring the implementation of ever more powerful methods and algorithms. In particular, even storing data on disk or loading it into memory --- before any processing --- is a challenge in itself. Consider for instance the Sequence Read Archive (SRA) \footnote{\url{https://www.ncbi.nlm.nih.gov/sra/}, accessed in December 2024.} and the European Nucleotide Archive (ENA)\footnote{\url{https://www.ebi.ac.uk/ena/browser/home}, accessed in December 2024.}, two examples of public databases that index tens of petabytes of sequencing data \cite{sra,enapeta}.

A common strategy for addressing this problem is to partition the $k$-mers into bins, and process/store each bin separately. A popular approach to partitioning $k$-mers is to group together those that share a common \emph{minimizer}. Introduced first in 2003 in \cite{schleimer2003winnowing}, this technique is based on the following principle: for a fixed $m\leq k$, each $k$-mer is subdivided into its constitutive $m$-mers. Provided a total order over the set of all possibles $m$-mers, e.g. the lexicographical order, the minimizer of a $k$-mer is defined as its smallest $m$-mer. Consecutive $k$-mers\footnote{I.e. $k$-mers that overlap over $k-1$ characters.} in the data often share the same minimizer, guaranteeing some form of locality preservation in partitioning. In addition, we can define super-$k$-mers, i.e. sequences obtained by concatenating all consecutive $k$-mers sharing the same minimizer. These super-$k$-mers are stored in bins corresponding to their minimizer, for a fraction of the space needed to store the corresponding $k$-mers directly \cite{li2015mspkmercounter,deorowicz2015kmc}. Very recent work has even taken this principle all the way to defining hyper-$k$-mers, see \cite{martayan2024hyper}.

In the context of partitioning sequences using minimizers, many approaches in the literature rely on a common heuristic framework~\cite{pibiri2023locality,marchet2023scalable,marchet2021blight}. 
The process begins by selecting a minimizer using a primary hash-based method. 
Once the minimizer is chosen, its sequence is hashed using a reversible hash function, and the result is taken modulo $b$, the total number of partitions, to assign it to a specific partition.
This process is used to address the observed imbalance happening in practice when $k$-mers are partitioned with super-$k$-mers. By allowing different minimizers to be mapped to the same partition, it distributes $k$-mers more evenly across all partitions.

\begin{figure}[ht!]
    \centering
\begin{subfigure}{0.49\textwidth}
\centering
\includegraphics[width=\textwidth]{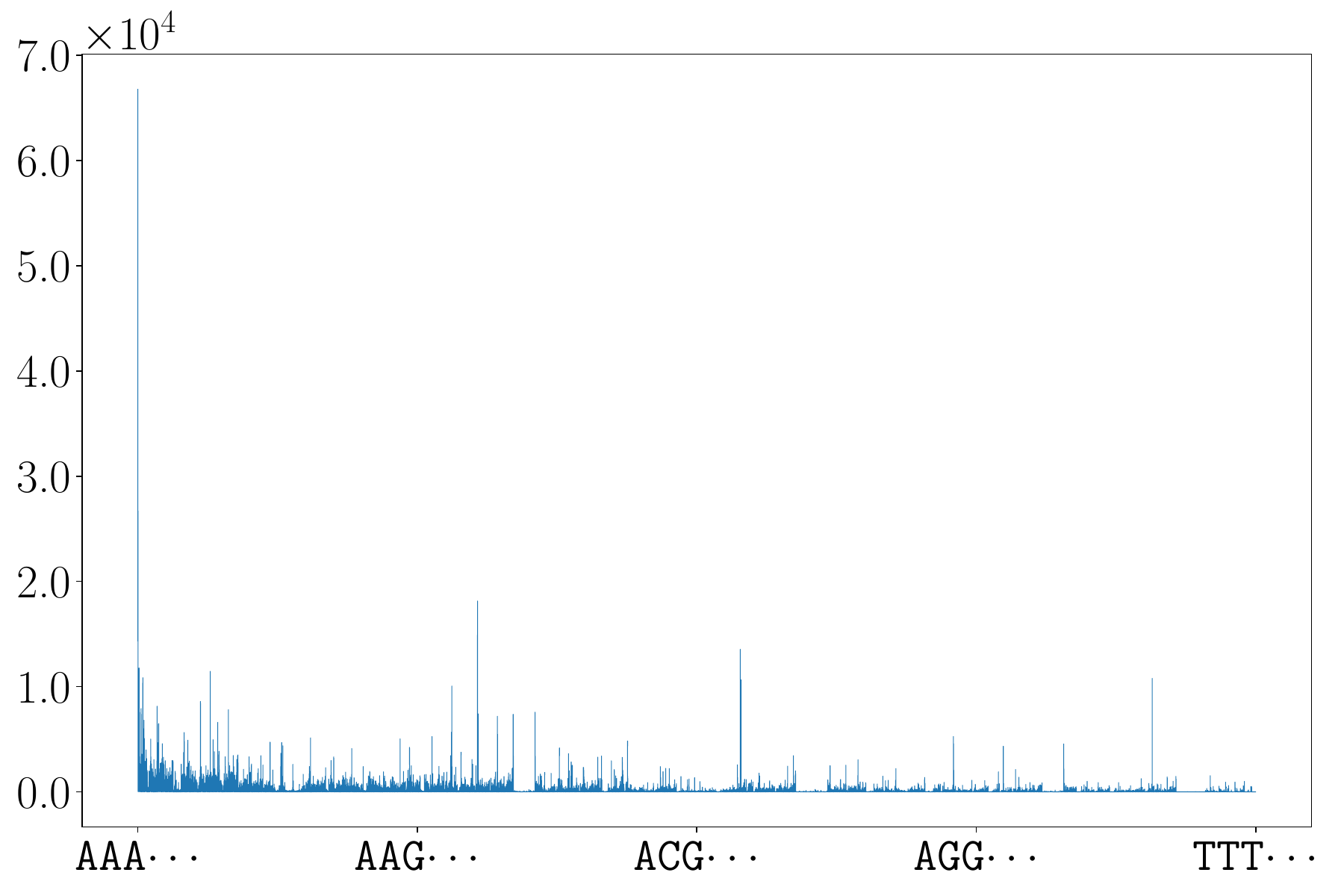}
    \caption{Chromosome Y, $k=31, m=10$}
    \label{fig:empirical_partition:a} 
\end{subfigure}\hfill
\begin{subfigure}{0.49\textwidth}
\centering
\includegraphics[width=\textwidth]{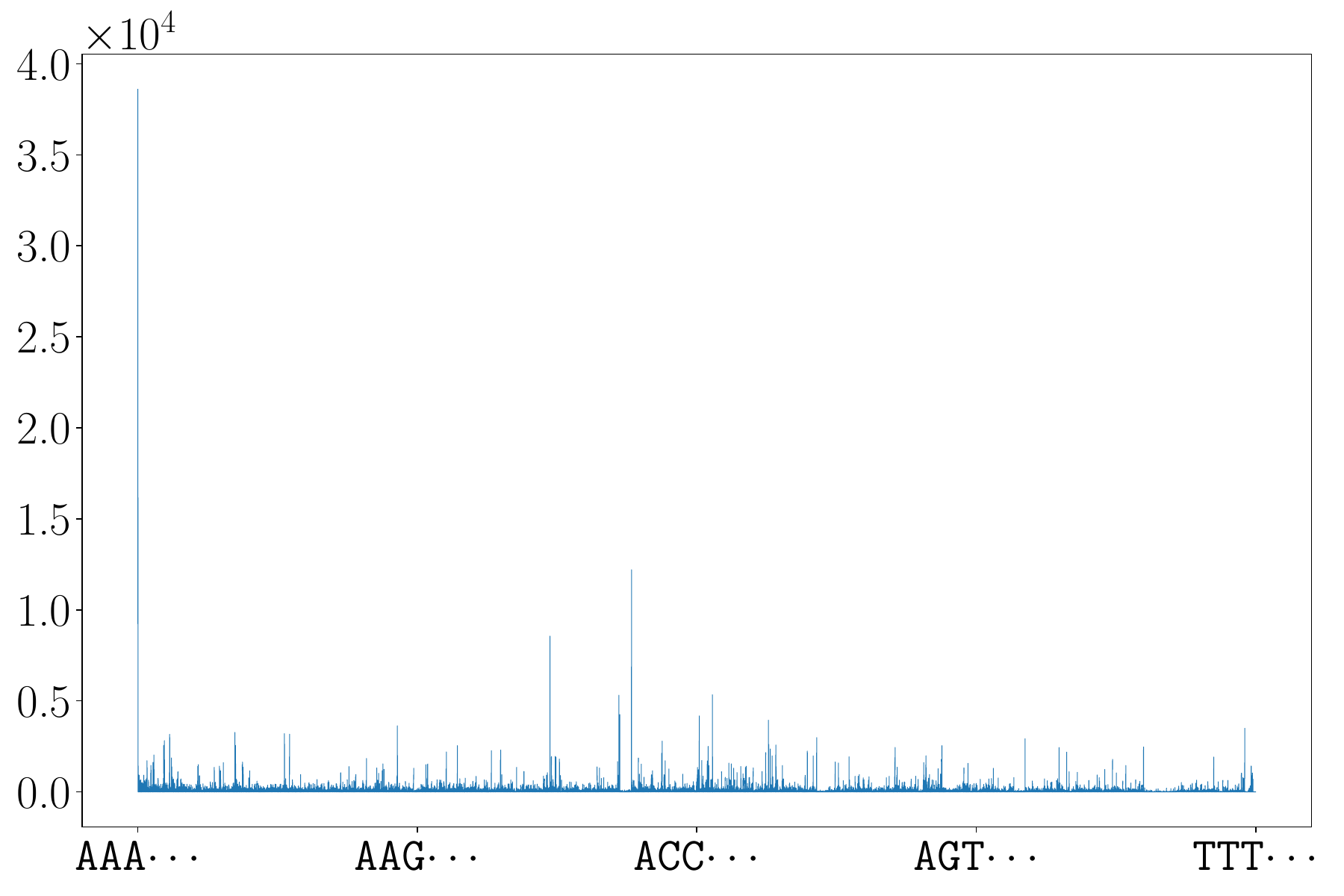}
    \caption{Chromosome 1, $k=31, m=21$}
    \label{fig:empirical_partition:b} 
\end{subfigure}

\begin{subfigure}{0.49\textwidth}
\centering
\includegraphics[width=\textwidth]{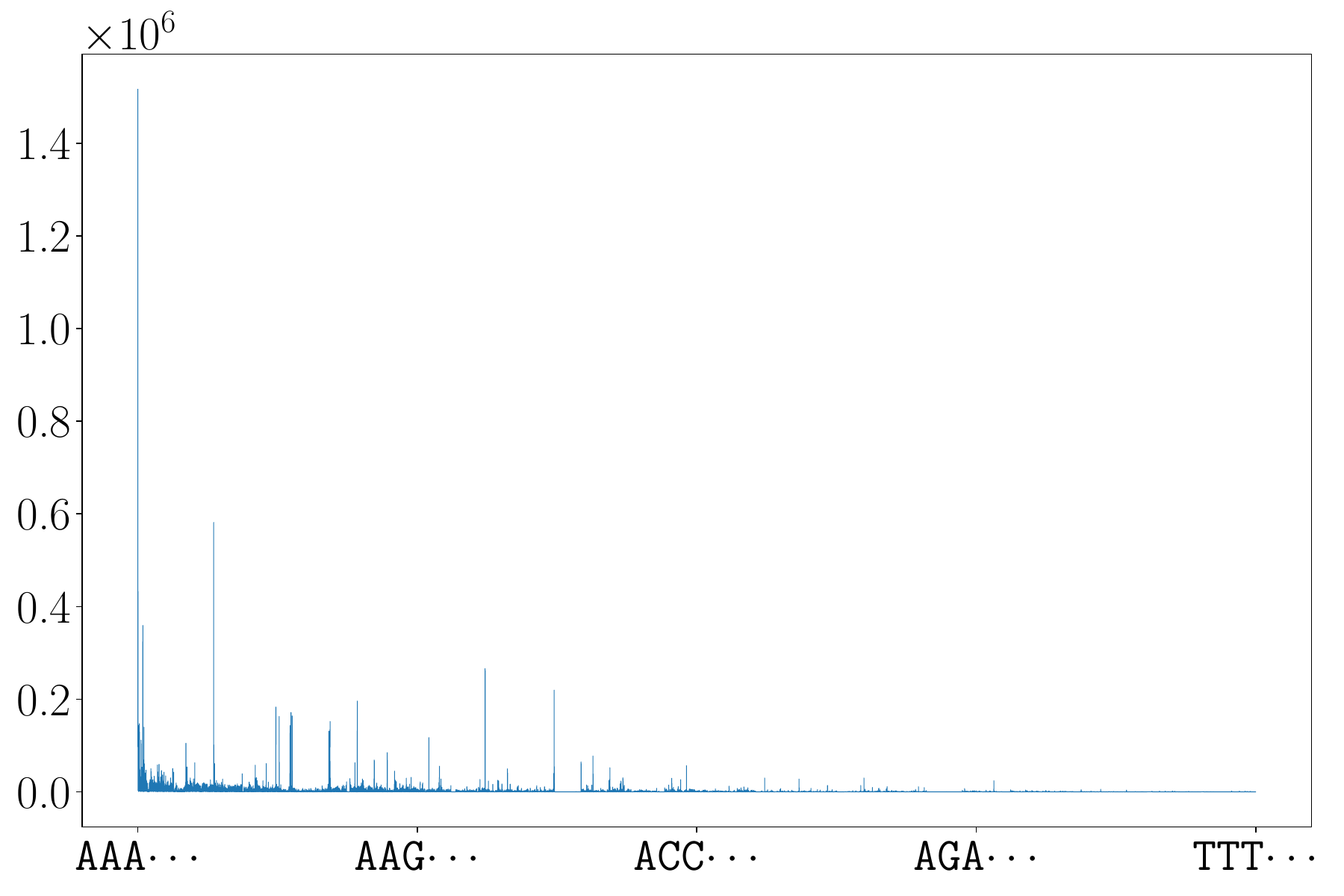}
    \caption{RNA fusion, $k=61, m=10$}
    \label{fig:empirical_partition:c} 
\end{subfigure}\hfill
\begin{subfigure}{0.49\textwidth}
\centering
\includegraphics[width=\textwidth]{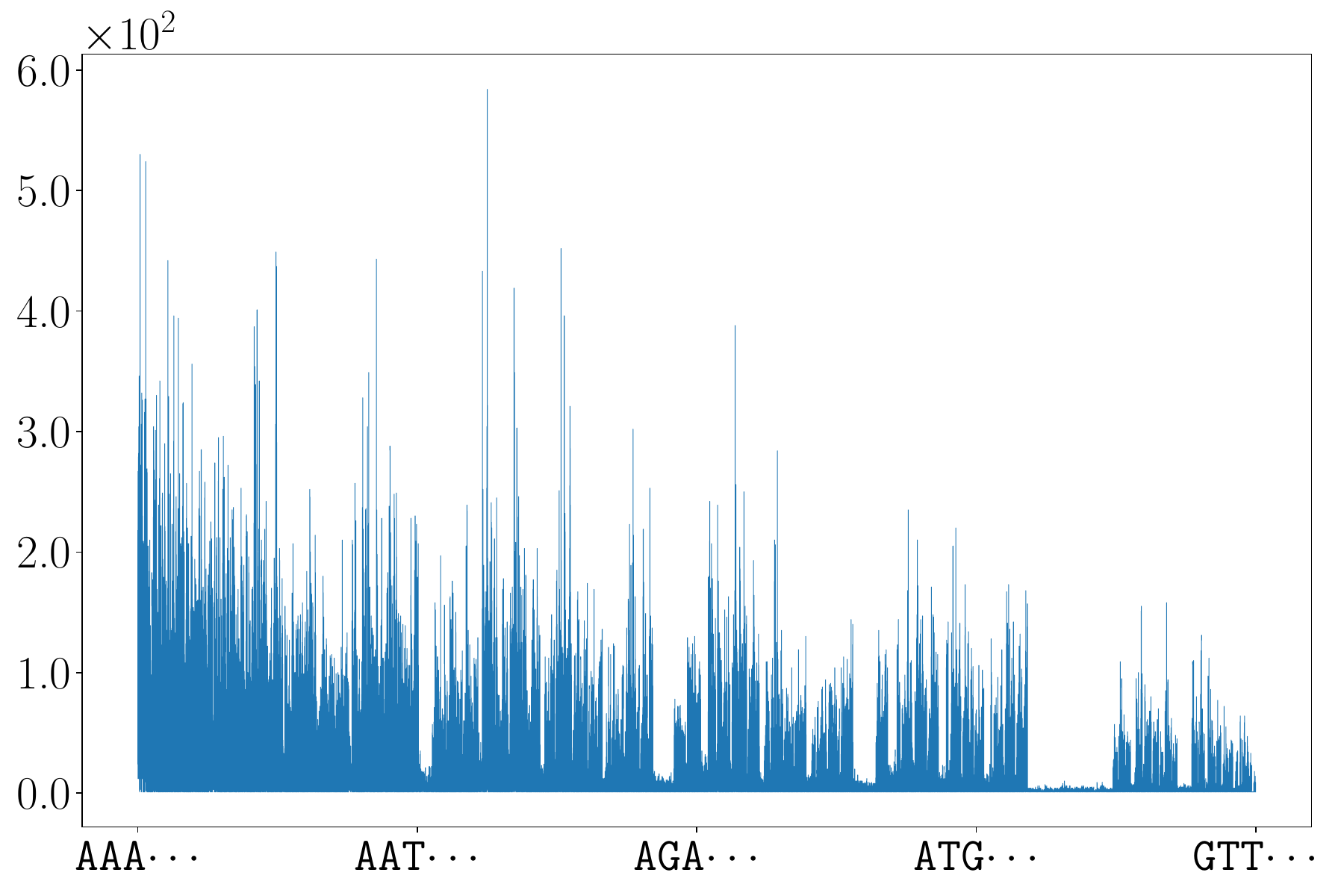}
    \caption{\emph{Escherichia Coli}, $k=21, m=10$}
    \label{fig:empirical_partition:d} 
\end{subfigure}
    \caption{Examples of empirical partitions obtained on different datasets --- see Section~\ref{ss:imabalance} for details. The $x$-axis depicts the observed minimizers, sorted by lexicographical order, and the $y$-axis displays the number of $k$-mers found in the partition associated with each minimizer for that particular dataset.}
    \label{fig:empirical_partition}
\end{figure}

As noted in \cite{roberts2004reducing}, using the lexicographic order for $m$-mers generates undesirable effects as the partitions obtained are empirically highly skewed, where one would rather wish for a fairly balanced subdivision of the data space. Several examples of empirical imbalance are showed in Figure~\ref{fig:empirical_partition}. This problem has been widely addressed in the literature, and several approaches have been proposed. Some suggest modifying the lexicographical order, for example by reversing the order of certain letters according to their observed frequency in the data \cite{roberts2004reducing}, or by giving lower priority to minimizers that start with a certain combination of letters or have a particular pattern \cite{deorowicz2015kmc}. Others use random orders, obtained with hash functions, which counter the effect of lexicographic order, but without any guarantee that we will not find, nonetheless, highly repeated $m$-mers with low order, which would lead to partitions that are again very unbalanced, as remarked in \cite{chikhi2014representation}. There are also methods that order $m$-mers according to their frequency of appearance in the data \cite{chikhi2014representation} --- or in a sample of the data \cite{nystrom2021compact,flomin2022data}. Finally, compact universal hitting sets --- small sets of $m$-mers for which it is guaranteed that any $k$-mer contains at least one of them --- have recently been introduced \cite{orenstein2016compact,ekim2020randomized}. A comparative study of most of these methods can be found in \cite{marccais2017improving}. For practical applications of minimizers, we refer the interested reader to the recent review in \cite{ndiaye2024less}.

Density, which measures the expected fraction of distinct minimizers sampled over a sequence, plays an important role when minimizers are used for sampling, alignment, or creating efficient $k$-mer set representations.  The density of random orders is still being explored, with first results in 2003 in \cite{schleimer2003winnowing} until very recently \cite{golan2024expected}. To the best of our knowledge, density considerations are the only theoretical works addressing the behaviour of lexicographic minimizers~\cite{zheng2020improved}. However, when minimizers are applied for partitioning $k$-mers, an independent question arises: how challenging is it to achieve balanced partitioned buckets? In practice, heuristics often exclude certain minimizers, particularly those with low complexity~\cite{deorowicz2015kmc} or ``randomize'' lexicographic minimizers in a reversible way~\cite{wood2019improved}, as they tend to attract an excessive number of $k$-mers. Consequently, most methods doing partitions have moved away from using lexicographic minimizers and instead rely on random minimizers~\cite{chikhi2016compacting,marchet2023scalable,fan2024fulgor}.

In this article, we open a new perspective to better understand lexicographic partitions. We are interested in precisely quantifying the size of partitions in the worst possible case, i.e. if all $k$-mers admitting the same minimizer were to be observed in the data. More precisely, provided a certain $m$-mer $w$, we are interested in evaluating the number of $k$-mers admitting $w$ as their minimizer.

We believe that a better theoretical understanding of the distribution of lexicographical minimizers would make it possible to design new partition heuristics to compensate for the imbalances observed, and give lexicographical minimizers a new opportunity for practical use. Indeed, using lexicographic minimizers offers at least two key advantages. First, since the minimizer sequence is known, it eliminates the need for explicit storage, saving memory and reducing the number of bits written. Second, when minimizers are used for tasks like sampling, their sequences can be directly compared instead of relying on hash values. Comparing lexicographic signals leads to more precise comparisons, avoiding hash collisions and preserving the true relationships between sequences, with recent works showing it improves accuracy in measuring sequence distances~\cite{rouze2023fractional,greenberg2023lexichash}.

\subsection{Preliminaries}

Let $\Sigma$ be a totally ordered set, called the alphabet, whose elements are called letters. As a convention, indeterminate letters will be noted $a,b,c,\dots$, while determined letters will be noted in small capitals $\ch{A},\ch{B},\ch{C},\dots$. A word over $\Sigma$ is a finite sequence of letters $w=a_1\cdots a_n$, with $a_i\in\Sigma$. As a convention, words will be denoted $w,x,\dots$. The size of $w$, denoted by $|w|$, is equal to $n$. A word of size $k>0$ is called a $k$-mer, and the set of all $k$-mers over $\Sigma$ is denoted by $\Sigma^k$. We recall that the lexicographical order over $\Sigma^k$ is defined as follows. Let $x=a_1\cdots a_k$ and $y=b_1\cdots b_k$ be two $k$-mers; then $x>y$ if and only if either (i) $a_1>b_1$ or (ii) there exists $1\leq i\leq k-1$ so that $a_1\cdots a_i = b_1\cdots b_i$ and $a_{i+1}>b_{i+1}$. We denote by $\varepsilon$ the empty word, so that $\varepsilon < x$ for any $k$-mer $x$; in particular, $\varepsilon < a$ for any letter $a\in \Sigma$. Finally, for $a\in \Sigma$, we define $\varphi_>(a)$ the number of letter strictly greater than $a$; i.e. $\varphi_>(a) = |\lbrace a' \in \Sigma : a'>a\rbrace|$. By convention, $\varphi_>(\varepsilon)=|\Sigma|$. Besides, $\varphi_> : \Sigma \cup \lbrace \varepsilon \rbrace \to [\![ 0, |\Sigma| ]\!]$ is a bijection.

Since the motivation for this work comes from bioinformatics, we shall use the DNA alphabet $\Sigma = \lbrace \ch{A}, \ch{C},\ch{G},\ch{T} \rbrace$ for our examples. However, we stress that the results presented here are valid whatever the alphabet chosen. In particular, these results are also valid if the order associated with the alphabet is not the standard lexicographical order, but an arbitrary order on the letters. In particular, from the perspective of bioinformatics applications, this means that using the order $\ch{C} < \ch{A} < \ch{T} < \ch{G}$ as proposed by \cite{roberts2004reducing} in no way prevents the observation of unbalanced partitions in practice, as discussed in the introduction.


In this paper, we focus on the notion of lexicographical minimizers of $k$-mers. Let $1\leq m \leq k$ and let $x=a_1\cdots a_k$ be a $k$-mer.

\begin{definition}
The minimizer of $x$, denoted by $\min(x)$, is the smallest $m$-mer contained in $x$.
\end{definition}

In other words, for any $1\leq i \leq k-m+1$, $\min(x) \leq a_i \cdots a_{i+m-1}$. We denote $f_{\min} : \Sigma^k \to [\![ 1,k-m+1]\!]$ the function that returns the smallest index $i$ so that $\min(x) = a_i\cdots a_{i+m-1}$ --- i.e.

$$f_{\min}(x) = \min \lbrace 1\leq i\leq k-m+1 : a_i\cdots a_{i+m-1} \leq a_{i'}\cdots a_{i'+m-1}, \forall 1\leq i'\leq k-m+1\rbrace$$ 

and $\min(x) = a_{f_{\min}(x)}\cdots a_{f_{\min}(x)+m-1}$. In the literature, the function $f_{\min}$ can be defined for any total order over $\Sigma^k$, not necessarily the lexicographical order, and is called a minimizer scheme \cite{roberts2004reducing,schleimer2003winnowing}. 

Provided $1\leq m \leq k$ and a $m$-mer $w$, we denote by $\Sigma^k_w$ the set of $k$-mers that admit $w$ as their minimizer; i.e. $\Sigma^k_w = \lbrace x\in \Sigma^k : \min(x)=w\rbrace$. Note that $\Sigma^k_w$ is never empty, since it contains at least the $k$-mer formed by $k-m$ copies of $\max(\Sigma)$ followed by $w$. Therefore, we can partition $\Sigma^k$ into the following disjoint union
$$\Sigma^k = \bigsqcup_{w\in \Sigma^m} \Sigma^k_w.$$

If $k$-mers were to be equally distributed over the $\Sigma^k_w$'s, we would have $|\Sigma_w^k| = |\Sigma|^{k-m}$. It is folklore that this is not the case and that this partition is very unevenly distributed, as stated in the introduction and illustrated in Figure~\ref{fig:empirical_partition}. However, to the best of our knowledge, nobody seems to have actually determined theoretically to what extent. We introduce the following counting function, aimed to fill this gap. 

\begin{definition}\label{def:minimizer_counting_function}
The minimizer counting function $\pi_k : \Sigma^m \to [\![1,|\Sigma|^k]\!]$ is defined, for $w\in\Sigma^m$, by $\pi_k(w) = |\Sigma^k_w|$.
\end{definition}

The rest of the paper is organised as follows:
\begin{itemize}
    \item Section~\ref{sec:computing},~\ref{sec:antemers} and~\ref{sec:postmers} are devoted to the computation of $\pi_k(w)$, by establishing systems of recursive equations that can be calculated using dynamic programming;
    \item Section~\ref{sec:numerical} offers a number of numerical illustrations, in particular comparing theoretical values with those observed in practice;
    \item a conclusion outlining a number of future research directions.
\end{itemize}

In this paper, we use Iverson brackets \cite{iverson1962programming} to denote indicator functions; that is, the Iverson bracket of property $P$, denoted by $[P]$, is defined as
$$[P] = \begin{cases} 1 & \text{if } P \text{ is true};\\
0 & \text{otherwise}.
\end{cases}$$

\section{Computing $\pi_k(w)$}\label{sec:computing}

In this section, we assume that $1\leq m\leq k$ are fixed, as well as the $m$-mer $w  = a_1\cdots a_m$. Our goal is to compute $\pi_k(w)$, that is, as per Definition~\ref{def:minimizer_counting_function}, the number of $k$-mers that admits $w$ as their minimizer. In other words,
$$\pi_k(w) = \left| \Sigma_w^k\right| = \left| \lbrace x\in \Sigma^k : \min(x) = w \rbrace\right|.$$

\subsection{General strategy}

\begin{minipage}[c]{0.475\textwidth}
Let $x$ be a $k$-mer, so that $\min(x)=w$. There exists (possibly empty) words $y$ and $z$, so that $x = ywz$, and with $f_{\min}(x)= |y|+1$. By definition of $f_{\min}$, all $m$-mers of $x$ preceding $w$ are necessarily $>w$; whereas all $m$-mers succeeding $w$ are necessarily $\geq w$ --- as illustrated in Figure~\ref{fig:anatomy_kmer}. To respect these constraints, it should be noted that the letters in $y$ are dependent on each other, as are those in $z$; however, the letters in $y$ and $z$ are independent of each other. Finally, $|y|+|z| = k-m$.
\end{minipage}\hfill
\begin{minipage}[c]{0.475\textwidth}
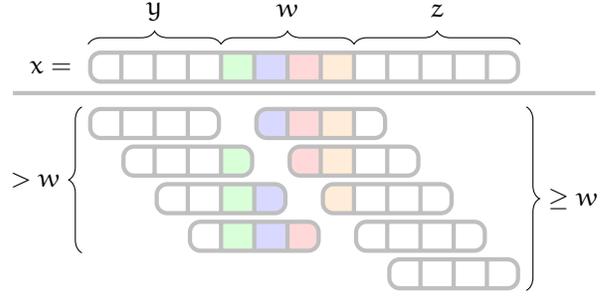
\begin{figure}[H]
    \centering
\begin{tikzpicture}
\tikzstyle{kmer}=[rectangle,rectangle split, rectangle split horizontal,rectangle split parts=13,ultra thick,draw=lightgray,rounded corners]

\tikzstyle{mmer}=[rectangle,rectangle split, rectangle split horizontal,rectangle split parts=4,ultra thick,draw=lightgray,rounded corners]

\tikzstyle{c4}=[rectangle split part fill={white,white,white,white,lgreen!30,lblue!30,lred!30,lorange!30,white}]

\node[kmer,c4] (s) at (0,0) {\nodepart{one} \nodepart{two} \nodepart{three}  \nodepart{four}  \nodepart{five} \nodepart{six} \nodepart{seven} \nodepart{eight} \nodepart{nine} \nodepart{ten} };

\node at ([xshift=-0.5cm]s.west) {$x = $};

\draw [decorate,decoration={brace,amplitude=5pt,raise=0.5ex}]
  (s.north west) -- (s.four split north) node[midway,yshift=1.5em]{$y$};

\draw [decorate,decoration={brace,amplitude=5pt,raise=0.5ex}]
  (s.eight split north) -- (s.north east) node[midway,yshift=1.5em]{$z$};

  \draw [decorate,decoration={brace,amplitude=5pt,raise=0.5ex}]
  (s.four split north) -- (s.eight split north) node[midway,yshift=1.5em]{$w$};

\draw[ultra thick,lightgray] ([shift={(-1cm,-0.125cm)}]s.south west)--([shift={(1cm,-0.125cm)}]s.south east);

\node[mmer] (m1) at ([yshift=-0.75cm]s.two split) {};
\node[mmer,rectangle split part fill={white,white,white,lgreen!30}] (m2) at ([yshift=-1.25cm]s.three split) {};
\node[mmer,rectangle split part fill={white,white,lgreen!30,lblue!30}] (m3) at ([yshift=-1.75cm]s.four split) {};
\node[mmer,rectangle split part fill={white,lgreen!30,lblue!30,lred!30}] (m4) at ([yshift=-2.25cm]s.five split) {};
\node[mmer,rectangle split part fill={lblue!30,lred!30,lorange!30,white}] (m5) at ([yshift=-0.75cm]s.seven split) {};
\node[mmer,rectangle split part fill={lred!30,lorange!30,white}] (m6) at ([yshift=-1.25cm]s.eight split) {};
\node[mmer,rectangle split part fill={lorange!30,white}] (m7) at ([yshift=-1.75cm]s.nine split) {};
\node[mmer] (m8) at ([yshift=-2.25cm]s.ten split) {};
\node[mmer] (m9) at ([yshift=-2.75cm]s.eleven split) {};

\draw [decorate,decoration={brace,amplitude=5pt,raise=0.5ex}]
  ([yshift=2cm]m9.north east) -- (m9.south east) node[midway,xshift=2em]{$\geq w$};

  \draw [decorate,decoration={brace,amplitude=5pt,mirror,raise=0.5ex}]
  (m1.north west) -- ([yshift=-1.5cm]m1.south west) node[midway,xshift=-2em]{$> w$};
\end{tikzpicture}
    \caption{Anatomy of a $k$-mer $x$ so that $\min(x)=w$.}
    \label{fig:anatomy_kmer}
\end{figure}
\end{minipage}

So, it appears that determining $\pi_k(w)$ is a matter of counting how many words $y$ and words $z$ there are for each of the possible starting positions of $w$ in a $k$-mer. We introduce the concepts of antemers (corresponding to $y$, placed before $w$) and postmers (corresponding to $z$, placed after $w$) to this end. It is clear that both the possible antemers and postmers depend strongly on $w$; in the sequel, this dependency on $w$ will be implicitly assumed, and therefore omitted from the notations in order to simplify them.

\begin{definition}\label{def:antemer}
An $\alpha$-antemer is a word $y\in \Sigma^\alpha$ so that, for any $m$-mer $w'$ of the word $yw$ but the last one (which is $w$ itself), we have $w'>w$.
\end{definition}

We denote by $A(\alpha)$ the counting function of $\alpha$-antemers --- i.e. the number of $\alpha$-mers that are antemers. Also, for $0\leq i < m$, we denote by $A_i(\alpha)$ the number of $\alpha$-antemers that share with $w$ a prefix of size $i$. Note that $i=m$ is forbidden, otherwise there would be a $m$-mer equal to $w$ in the antemer. Naturally, we have
\begin{equation}\label{eq:antemer_def}
A(\alpha) = \sum_{i=0}^{m-1} A_i(\alpha)
\end{equation}
and $A(0)=1$. Note also that $A_i(\alpha)=0$ for $i > \alpha$.

\begin{definition}\label{def:postmer}
A $\beta$-postmer is a word $z\in\Sigma^\beta$ so that, for any $m$-mer $w'$ of $z$, we have $w'\geq w$.
\end{definition}

\begin{remark}\label{rmk:postmer_def}
If we take $w=\ch{ACA}$ and $z=\ch{ACC}$, for example, then $z$ is a postmer according to this definition, despite the word $wz=\ch{ACAACC}$ containing the $m$-mer \ch{AAC} which is $<w$. If we have not defined postmers by requiring that all $m$-mers of the word $wz$ --- instead of $z$ as written --- be $\geq w$, this is to simplify certain notations in the sequel, when recursively computing the number of postmers. To reconcile this concern for simplification with the issue mentioned above, in practice we will consider $(\beta+m)$-postmers of the form $wz$, i.e. whose prefix is fixed and equal to $w$ --- while $z$ remains to be determined.
\end{remark}

We denote by $P(\beta)$ the counting function of $\beta$-postmers --- \textit{i.e.} the number of $\beta$-mers that are postmers. For $0\leq i \leq m$, we also denote by $P_i(\beta)$ the number of $\beta$-postmers that share with $w$ a prefix of size $i$. Again, we have
\begin{equation}\label{eq:postmer_def}
P(\beta) = \sum_{i=0}^m P_i(\beta)
\end{equation}
and $P(0)=1$. Note also that $P_i(\beta)=0$ for $i > \beta$.

An example of antemers and postmers is provided in Figure~\ref{fig:antemer_example}.

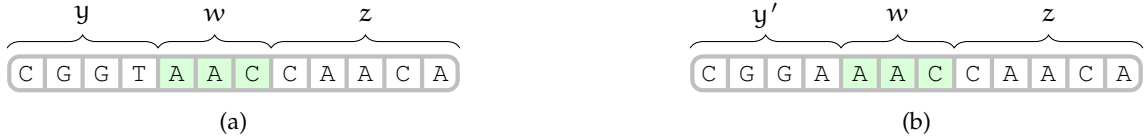
\begin{figure}[H]
    \centering
\begin{subfigure}{0.45\textwidth}
\centering
    \begin{tikzpicture}
    \tikzstyle{kmer}=[rectangle,rectangle split, rectangle split horizontal,rectangle split parts=12,ultra thick,draw=lightgray,rounded corners,rectangle split part fill={white,white,white,white,lgreen!30,lgreen!30,lgreen!30,white}]
    
    \node[kmer] (s) at (0,0) {\nodepart{one} \ch{C} \nodepart{two} \ch{G}\nodepart{three}\ch{G} \nodepart{four} \ch{T}\nodepart{five} \ch{A}\nodepart{six} \ch{A}\nodepart{seven} \ch{C} \nodepart{eight} \ch{C}\nodepart{nine} \ch{A} \nodepart{ten} \ch{A} \nodepart{eleven} \ch{C} \nodepart{twelve} \ch{A}};
    
    \draw [decorate,decoration={brace,amplitude=5pt,raise=0.5ex}]
  (s.north west) -- (s.four split north) node[midway,yshift=1.5em]{$y$};

\draw [decorate,decoration={brace,amplitude=5pt,raise=0.5ex}]
  (s.seven split north) -- (s.north east) node[midway,yshift=1.5em]{$z$};

  \draw [decorate,decoration={brace,amplitude=5pt,raise=0.5ex}]
  (s.four split north) -- (s.seven split north) node[midway,yshift=1.5em]{$w$};

    \end{tikzpicture}
    \caption{\label{fig:antemer_example:a}}
\end{subfigure}\hfill
\begin{subfigure}{0.45\textwidth}
\centering
    \begin{tikzpicture}
    \tikzstyle{kmer}=[rectangle,rectangle split, rectangle split horizontal,rectangle split parts=12,ultra thick,draw=lightgray,rounded corners,rectangle split part fill={white,white,white,white,lgreen!30,lgreen!30,lgreen!30,white}]
    
    \node[kmer] (s) at (0,0) {\nodepart{one} \ch{C} \nodepart{two} \ch{G}\nodepart{three}\ch{G} \nodepart{four} \ch{A}\nodepart{five} \ch{A}\nodepart{six} \ch{A}\nodepart{seven} \ch{C} \nodepart{eight} \ch{C}\nodepart{nine} \ch{A} \nodepart{ten} \ch{A} \nodepart{eleven} \ch{C} \nodepart{twelve} \ch{A}};
    
    \draw [decorate,decoration={brace,amplitude=5pt,raise=0.5ex}]
  (s.north west) -- (s.four split north) node[midway,yshift=1.5em]{$y'$};

\draw [decorate,decoration={brace,amplitude=5pt,raise=0.5ex}]
  (s.seven split north) -- (s.north east) node[midway,yshift=1.5em]{$z$};

  \draw [decorate,decoration={brace,amplitude=5pt,raise=0.5ex}]
  (s.four split north) -- (s.seven split north) node[midway,yshift=1.5em]{$w$};

    \end{tikzpicture}
    \caption{\label{fig:antemer_example:b}}
\end{subfigure}
    \caption{Examples of antemers and postmers, with $w=\ch{AAC}$ and $m=3$. In Figure~\ref{fig:antemer_example:a}, $y$ is an $4$-antemer because \ch{CGG}, \ch{GGT}, \ch{GTA} and \ch{TAA} are strictly greater than \ch{AAC}. $wz$ is a $8$-postmer --- cf. Remark~\ref{rmk:postmer_def} --- because \ch{AAC}, \ch{ACC}, \ch{CCA}, \ch{CAA}, \ch{AAC} and \ch{ACA} are greater or equal than \ch{AAC}. In Figure~\ref{fig:antemer_example:b}, $y'$ is not an  $4$-antemer because $\ch{AAA} \leq \ch{AAC}$. Thus, $w$ is not a minimizer in Figure~\ref{fig:antemer_example:b}.}
    \label{fig:antemer_example}
\end{figure}

As a result of the previous discussion and considering Remark~\ref{rmk:postmer_def}, we have the following result.
\begin{theorem}
\begin{equation}\label{eq:pi_calcul}
  \pi_k(w) = \sum_{\alpha+\beta = k-m} A(\alpha) \cdot P_{m}(\beta+m).
\end{equation}
\end{theorem}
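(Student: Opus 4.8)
The plan is to establish a bijection between $\Sigma^k_w$ and a disjoint union, taken over the starting position of $w$, of pairs consisting of an antemer and a postmer, and then count both sides. First I would fix an arbitrary $x \in \Sigma^k_w$ and use the factorisation described before the statement: setting $\alpha = f_{\min}(x) - 1$, there are unique words $y \in \Sigma^\alpha$ and $z \in \Sigma^\beta$ with $x = ywz$ and $\alpha + \beta = k - m$, so that $w$ occupies positions $\alpha+1, \ldots, \alpha+m$ and this is its leftmost occurrence as the minimum $m$-mer. Since $f_{\min}(x)$ ranges over $[\![1, k-m+1]\!]$, the quantity $\alpha$ ranges over $\{0, \ldots, k-m\}$, so summing over pairs $(\alpha, \beta)$ with $\alpha + \beta = k-m$ accounts for every $x$ exactly once, provided the factorisation is a genuine bijection.

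The key step is to show that the defining condition ``$\min(x) = w$ with $f_{\min}(x) = \alpha+1$'' decouples into independent conditions on $y$ and on $z$. I would observe that every $m$-mer of $x$ starting at a position $i \leq \alpha$ lies entirely within the prefix $yw$, since it ends at position $i + m - 1 \leq \alpha + m$, while every $m$-mer starting at a position $i \geq \alpha+1$ lies entirely within the suffix $wz$. Hence the requirement that the $m$-mers left of $w$ be $> w$ depends only on $y$ (with $w$ fixed) and is, by Definition~\ref{def:antemer}, exactly the statement that $y$ is an $\alpha$-antemer; and the requirement that the $m$-mers at or right of position $\alpha+1$ be $\geq w$ depends only on $z$ and is, by Definition~\ref{def:postmer} together with Remark~\ref{rmk:postmer_def}, exactly the statement that $wz$ is a $(\beta+m)$-postmer beginning with $w$. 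Because $y$ and $wz$ overlap only in the fixed block $w$, these two conditions are independent, so the number of admissible pairs at a fixed $\alpha$ is $A(\alpha)$ times the number of $(\beta+m)$-postmers sharing a prefix of size $m$ with $w$, which is $P_m(\beta+m)$ by definition of $P_i$.

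The point needing the most care — and where the strict inequality in the antemer definition does the real work — is checking that the two conditions together are equivalent to $\min(x) = w$ \emph{and} that $\alpha+1$ is the leftmost minimising position, not merely that $w$ is \emph{some} minimizer. I would verify both directions. If $y$ is an $\alpha$-antemer and $wz$ a $(\beta+m)$-postmer starting with $w$, then every $m$-mer of $x$ is $\geq w$, the $m$-mer at position $\alpha+1$ equals $w$, and positions $1, \ldots, \alpha$ are \emph{strictly} greater than $w$; these three facts force $\min(x) = w$ and $f_{\min}(x) = \alpha+1$. Conversely, the factorisation of any $x \in \Sigma^k_w$ at $f_{\min}(x)$ produces exactly such a pair. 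It is precisely the asymmetry between $>$ on the antemer side and $\geq$ on the postmer side that makes the decomposition at the first occurrence well defined and collision-free, so this two-sided check is the crux of the argument.

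Finally, summing the product $A(\alpha) \cdot P_m(\beta+m)$ over all $\alpha + \beta = k-m$ yields Equation~\eqref{eq:pi_calcul}. I expect no computational difficulty here: once the bijection at the first occurrence is in place, the remainder is combinatorial bookkeeping, and the only genuine obstacle is the careful verification that antemers (strict) and postmers-with-prefix-$w$ (non-strict) jointly pin down both the value $w$ and its leftmost position.
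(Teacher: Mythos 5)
Your proof is correct and takes essentially the same approach as the paper: the decomposition $x=ywz$ at the leftmost minimizing position, with the strict/non-strict asymmetry between antemers and postmers guaranteeing both that the pair $(y,wz)$ characterises $\min(x)=w$ with $f_{\min}(x)=\alpha+1$ and that distinct values of $\alpha$ cannot produce the same $k$-mer. You merely spell out in full the two-sided verification that the paper delegates to its preceding ``General strategy'' discussion, its proof addressing only the disjointness point.
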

\begin{proof}
It is easy to see that it is not possible to generate the same $k$-mer via two distinct pairs $(\alpha,\beta)$ and $(\alpha',\beta')$, by virtue of the fact that for $x=ywz$ with $|y|=\alpha$ and $|z|=\beta$, $f_{\min}(x) = \alpha+1$.
\end{proof}

{
\color{blue}
\begin{remark}
Words of the form $yw$ (corresponding to antemers) or $wz$ (corresponding to postmers we are interested in practice, in light of Remark~\ref{rmk:postmer_def}) are designed in the literature by the term \emph{closed syncmers} \cite{edgar2021syncmers}.
\end{remark}
}

It has already been established that $A$ and $P$ are independent of each other; consequently in the following we will detail their calculation in a dedicated section for each. The strategy will be the same for both: find recurrence relations between the values of $A$ and $A_0,\dots,A_{m-1}$ (resp. $P$ and $P_0,\dots, P_m$); in particular our approach is based on the study of words sharing a common prefix of size $i$ with $w$, and the different ways of choosing the $(i+1)$-th letter of this word. Before proceeding with this analysis, we introduce our main tool in the next section.

\subsection{Autocorrelation matrix}

Classically, the autocorrelation vector of a word $w=a_1\cdots a_m$ is defined as a binary vector $(b_1,\dots,b_m)$ where $b_i =1 \iff a_{i}\cdots a_m = a_1\cdots a_{m-i+1}$ --- \textit{i.e.} $b_i=1$ if and only if the suffix of size $m-i+1$ of $w$ is equal to its prefix of size $m-i+1$ \cite{rivals2003combinatorics}. In a similar way, we are interested in this section in how the various substrings of $w$ compare with its prefixes --- the reason of which will emerge later. More precisely, for any $1\leq j\leq i\leq m$, we are interested in the lexicographic order between the substring $a_j\cdots a_i$ and the prefix $a_1\cdots a_{i-j+1}$. Unlike for the autocorrelation vector, we do not just want to know whether these two substrings are equal; if they are different, we need to know whichever is greater than the other. Since our notion generalizes the autocorrelation vector, we propose to use the name autocorrelation matrix to designate these comparisons.

\begin{definition}\label{def:autocorrelation_matrix}
The autocorrelation matrix of $w$ is defined as the lower triangular matrix denoted by $\mathbf{R}$, with $\mathbf{R}_{i,j} \in \lbrace <,=,>\rbrace$ for $1\leq j \leq i \leq m$, so that $a_j \cdots a_i \mathrel{\mathbf{R}_{i,j}} a_1\cdots a_{i-j+1}$.
\end{definition}

The practical computation of this matrix is covered in Appendix~\ref{app:computation_autocorrelation_matrix}. An illustration is provided in Figure~\ref{fig:autocorrelation_matrix_example}.

\begin{figure}[H]
    \centering

\def\scale{0.75}
\begin{tikzpicture}[scale=\scale]

\tikzstyle{mmer}=[rectangle,rectangle split, rectangle split horizontal,rectangle split parts=6,ultra thick,draw=lightgray,rounded corners,scale=\scale]

\tikzstyle{fleche}=[->,>=latex,thick]

\def\x{3.25}
\def\y{1.5}

\node[anchor=east] at (0.5*\x,0) {$\overset{\downarrow}{a_j}\cdots\underset{\uparrow}{a_i}$};

\node (i1) at (\x/3,-\y) {$1$};
\node at (\x/3,-2*\y) {$2$};
\node at (\x/3,-3*\y) {$3$};
\node  at (\x/3,-4*\y) {$4$};
\node at (\x/3,-5*\y) {$5$};
\node (i6) at (\x/3,-6*\y) {$6$};
\draw [decorate,decoration={brace,amplitude=5pt,raise=1ex,mirror}]
  (i1.north) -- (i6.south) node[midway,xshift=-2em]{$i$};


\node[anchor=east] at (0.5*\x,-8*\y) {$a_1\cdots a_{i-j+1}$};

\node[mmer,rectangle split part fill={lgreen!30,white}] (t1) at (6*\x,-8*\y) {\nodepart{one} \ch{A}};

\node[mmer,rectangle split part fill={lgreen!30,lblue!30,white}] (t2) at (5*\x,-8*\y) {\nodepart{one} \ch{A} \nodepart{two} \ch{C}};

\node[mmer,rectangle split part fill={lgreen!30,lblue!30,lred!30,white}] (t3) at (4*\x,-8*\y) {\nodepart{one} \ch{A}\nodepart{two} \ch{C} \nodepart{three}\ch{A}};

\node[mmer,rectangle split part fill={lgreen!30,lblue!30,lred!30,lorange!30,white}] (t4) at (3*\x,-8*\y) {\nodepart{one} \ch{A}\nodepart{two} \ch{C} \nodepart{three}\ch{A} \nodepart{four}\ch{C}};

\node[mmer,rectangle split part fill={lgreen!30,lblue!30,lred!30,lorange!30,lolive!50,white}] (t5) at (2*\x,-8*\y) {\nodepart{one} \ch{A}\nodepart{two} \ch{C} \nodepart{three}\ch{A} \nodepart{four}\ch{C} \nodepart{five}\ch{A}};

\node[mmer,rectangle split part fill={lgreen!30,lblue!30,lred!30,lorange!30,lolive!50,lpurple!50}] (t6) at (\x,-8*\y) {\nodepart{one} \ch{A}\nodepart{two} \ch{C} \nodepart{three}\ch{A} \nodepart{four}\ch{C} \nodepart{five}\ch{A} \nodepart{six}\ch{A}};


\draw[ultra thick,red!25,->,>=latex] (\x,-\y)--(6*\x,-6*\y) to node[draw,fill=white] {\color{black}$\overset{?}{\leq}$} (t1);
\draw[ultra thick,red!25,->,>=latex] (\x,-2*\y)--(5*\x,-6*\y)to node[draw,fill=white] {\color{black}$\overset{?}{\leq}$}(t2);
\draw[ultra thick,red!25,->,>=latex] (\x,-3*\y)--(4*\x,-6*\y)to node[draw,fill=white] {\color{black}$\overset{?}{\leq}$}(t3);
\draw[ultra thick,red!25,->,>=latex] (\x,-4*\y)--(3*\x,-6*\y)to node[draw,fill=white] {\color{black}$\overset{?}{\leq}$}(t4);
\draw[ultra thick,red!25,->,>=latex] (\x,-5*\y)--(2*\x,-6*\y)to node[draw,fill=white] {\color{black}$\overset{?}{\leq}$}(t5);
\draw[ultra thick,red!25,->,>=latex] (\x,-6*\y)--(\x,-6*\y)to node[draw,fill=white] {\color{black}$\overset{?}{\leq}$}(t6);


\node[mmer,rectangle split part fill={lgreen!30,white}] (s11) at (\x,-\y) {\nodepart{one} \ch{A}};
\draw[fleche] ([shift={(0,-\y/4)}]s11.one south)--(s11.one south);
\draw[fleche] ([shift={(0,\y/4)}]s11.one north)--(s11.one north);

\node[mmer,rectangle split part fill={lgreen!30,lblue!30,white}] (s21) at (\x,-2*\y) {\nodepart{one} \ch{A} \nodepart{two} \ch{C}};
\draw[fleche] ([shift={(0,-\y/4)}]s21.two south)--(s21.two south);
\draw[fleche] ([shift={(0,\y/4)}]s21.one north)--(s21.one north);

\node[mmer,rectangle split part fill={white,lblue!30,white}] (s22) at (2*\x,-2*\y) {\nodepart{two} \ch{C}};
\draw[fleche] ([shift={(0,\y/4)}]s22.two north)--(s22.two north);
\draw[fleche] ([shift={(0,-\y/4)}]s22.two south)--(s22.two south);


\node[mmer,rectangle split part fill={lgreen!30,lblue!30,lred!30,white}] (s31) at (\x,-3*\y) {\nodepart{one} \ch{A}\nodepart{two} \ch{C} \nodepart{three}\ch{A}};
\draw[fleche] ([shift={(0,-\y/4)}]s31.three south)--(s31.three south);
\draw[fleche] ([shift={(0,\y/4)}]s31.one north)--(s31.one north);

\node[mmer,rectangle split part fill={white,lblue!30,lred!30,white}] (s32) at (2*\x,-3*\y) {\nodepart{two} \ch{C} \nodepart{three}\ch{A}};
\draw[fleche] ([shift={(0,-\y/4)}]s32.three south)--(s32.three south);
\draw[fleche] ([shift={(0,\y/4)}]s32.two north)--(s32.two north);

\node[mmer,rectangle split part fill={white,white,lred!30,white}] (s33) at (3*\x,-3*\y) { \nodepart{three}\ch{A}};
\draw[fleche] ([shift={(0,-\y/4)}]s33.three south)--(s33.three south);
\draw[fleche] ([shift={(0,\y/4)}]s33.three north)--(s33.three north);


\node[mmer,rectangle split part fill={lgreen!30,lblue!30,lred!30,lorange!30,white}] (s41) at (\x,-4*\y) {\nodepart{one} \ch{A}\nodepart{two} \ch{C} \nodepart{three}\ch{A} \nodepart{four}\ch{C}};
\draw[fleche] ([shift={(0,-\y/4)}]s41.four south)--(s41.four south);
\draw[fleche] ([shift={(0,\y/4)}]s41.one north)--(s41.one north);

\node[mmer,rectangle split part fill={white,lblue!30,lred!30,lorange!30,white}] (s42) at (2*\x,-4*\y) {\nodepart{two} \ch{C} \nodepart{three}\ch{A}\nodepart{four}\ch{C}};
\draw[fleche] ([shift={(0,-\y/4)}]s42.four south)--(s42.four south);
\draw[fleche] ([shift={(0,\y/4)}]s42.two north)--(s42.two north);

\node[mmer,rectangle split part fill={white,white,lred!30,lorange!30,white}] (s43) at (3*\x,-4*\y) { \nodepart{three}\ch{A}\nodepart{four}\ch{C}};
\draw[fleche] ([shift={(0,-\y/4)}]s43.four south)--(s43.four south);
\draw[fleche] ([shift={(0,\y/4)}]s43.three north)--(s43.three north);

\node[mmer,rectangle split part fill={white,white,white,lorange!30,white}] (s44) at (4*\x,-4*\y) {\nodepart{four}\ch{C}};
\draw[fleche] ([shift={(0,\y/4)}]s44.four north)--(s44.four north);
\draw[fleche] ([shift={(0,-\y/4)}]s44.four south)--(s44.four south);


\node[mmer,rectangle split part fill={lgreen!30,lblue!30,lred!30,lorange!30,lolive!50,white}] (s51) at (\x,-5*\y) {\nodepart{one} \ch{A}\nodepart{two} \ch{C} \nodepart{three}\ch{A} \nodepart{four}\ch{C} \nodepart{five}\ch{A}};
\draw[fleche] ([shift={(0,-\y/4)}]s51.five south)--(s51.five south);
\draw[fleche] ([shift={(0,\y/4)}]s51.one north)--(s51.one north);

\node[mmer,rectangle split part fill={white,lblue!30,lred!30,lorange!30,lolive!50,white}] (s52) at (2*\x,-5*\y) {\nodepart{two} \ch{C} \nodepart{three}\ch{A}\nodepart{four}\ch{C}\nodepart{five}\ch{A}};
\draw[fleche] ([shift={(0,-\y/4)}]s52.five south)--(s52.five south);
\draw[fleche] ([shift={(0,\y/4)}]s52.two north)--(s52.two north);

\node[mmer,rectangle split part fill={white,white,lred!30,lorange!30,lolive!50,white}] (s53) at (3*\x,-5*\y) { \nodepart{three}\ch{A}\nodepart{four}\ch{C}\nodepart{five}\ch{A}};
\draw[fleche] ([shift={(0,-\y/4)}]s53.five south)--(s53.five south);
\draw[fleche] ([shift={(0,\y/4)}]s53.three north)--(s53.three north);

\node[mmer,rectangle split part fill={white,white,white,lorange!30,lolive!50,white}] (s54) at (4*\x,-5*\y) {\nodepart{four}\ch{C}\nodepart{five}\ch{A}};
\draw[fleche] ([shift={(0,-\y/4)}]s54.five south)--(s54.five south);
\draw[fleche] ([shift={(0,\y/4)}]s54.four north)--(s54.four north);

\node[mmer,rectangle split part fill={white,white,white,white,lolive!50,white}] (s55) at (5*\x,-5*\y) {\nodepart{five}\ch{A}};
\draw[fleche] ([shift={(0,\y/4)}]s55.five north)--(s55.five north);
\draw[fleche] ([shift={(0,-\y/4)}]s55.five south)--(s55.five south);


\node[mmer,rectangle split part fill={lgreen!30,lblue!30,lred!30,lorange!30,lolive!50,lpurple!50}] (s61) at (\x,-6*\y) {\nodepart{one} \ch{A}\nodepart{two} \ch{C} \nodepart{three}\ch{A} \nodepart{four}\ch{C} \nodepart{five}\ch{A} \nodepart{six}\ch{A}};
\draw[fleche] ([shift={(0,-\y/4)}]s61.six south)--(s61.six south);
\draw[fleche] ([shift={(0,\y/4)}]s61.one north)--(s61.one north);

\node[mmer,rectangle split part fill={white,lblue!30,lred!30,lorange!30,lolive!50,lpurple!50}] (s62) at (2*\x,-6*\y) {\nodepart{two} \ch{C} \nodepart{three}\ch{A}\nodepart{four}\ch{C}\nodepart{five}\ch{A}\nodepart{six}\ch{A}};
\draw[fleche] ([shift={(0,-\y/4)}]s62.six south)--(s62.six south);
\draw[fleche] ([shift={(0,\y/4)}]s62.two north)--(s62.two north);

\node[mmer,rectangle split part fill={white,white,lred!30,lorange!30,lolive!50,lpurple!50}] (s63) at (3*\x,-6*\y) { \nodepart{three}\ch{A}\nodepart{four}\ch{C}\nodepart{five}\ch{A}\nodepart{six}\ch{A}};
\draw[fleche] ([shift={(0,-\y/4)}]s63.six south)--(s63.six south);
\draw[fleche] ([shift={(0,\y/4)}]s63.three north)--(s63.three north);

\node[mmer,rectangle split part fill={white,white,white,lorange!30,lolive!50,lpurple!50}] (s64) at (4*\x,-6*\y) {\nodepart{four}\ch{C}\nodepart{five}\ch{A}\nodepart{six}\ch{A}};
\draw[fleche] ([shift={(0,-\y/4)}]s64.six south)--(s64.six south);
\draw[fleche] ([shift={(0,\y/4)}]s64.four north)--(s64.four north);

\node[mmer,rectangle split part fill={white,white,white,white,lolive!50,lpurple!50}] (s65) at (5*\x,-6*\y) {\nodepart{five}\ch{A}\nodepart{six}\ch{A}};
\draw[fleche] ([shift={(0,-\y/4)}]s65.six south)--(s65.six south);
\draw[fleche] ([shift={(0,\y/4)}]s65.five north)--(s65.five north);

\node[mmer,rectangle split part fill={white,white,white,white,white,lpurple!50}] (s66) at (6*\x,-6*\y) {\nodepart{six}\ch{A}};
\draw[fleche] ([shift={(0,-\y/4)}]s66.six south)--(s66.six south);
\draw[fleche] ([shift={(0,\y/4)}]s66.six north)--(s66.six north);

\node (j1) at (\x,0) {$1$};
\node at (2*\x,0) {$2$};
\node at (3*\x,0) {$3$};
\node at (4*\x,0) {$4$};
\node at (5*\x,0) {$5$};
\node (j6) at (6*\x,0) {$6$};
\draw [decorate,decoration={brace,amplitude=5pt,raise=1ex}]
  (s11.west|-,0) -- (s66.east|-,0) node[midway,yshift=2em]{$j$};

\end{tikzpicture}

    \caption{Each prefix of the minimizer (first column, $j=1$) must be compared to each substring of the minimizer having the same length as the prefix (substrings that are on the same diagonal as the prefix) for computing the autocorrelation matrix. The figure shows an example for $m=6$, with \ch{ACACAA} as an example minimizer. For instance, the strings $\colorbox{lgreen!30}{\ch{A}}\colorbox{lblue!30}{\ch{C}}\colorbox{lred!30}{\ch{A}}\colorbox{lorange!30}{\ch{C}}\colorbox{lolive!50}{\ch{A}}$ ($(i,j)=(5,1)$) and $\colorbox{lblue!30}{\ch{C}}\colorbox{lred!30}{\ch{A}}\colorbox{lorange!30}{\ch{C}}\colorbox{lolive!50}{\ch{A}}\colorbox{lpurple!50}{\ch{A}}$ ($(i,j) = (6,2)$) are both to be compared with the string $\colorbox{lgreen!30}{\ch{A}}\colorbox{lblue!30}{\ch{C}}\colorbox{lred!30}{\ch{A}}\colorbox{lorange!30}{\ch{C}}\colorbox{lolive!50}{\ch{A}}$ ($i-j+1=5$) --- leading, respectively, to $\mathbf{R}_{5,1}=``="$ and $\mathbf{R}_{6,2}=``>"$. The final autocorrelation matrix for this example can be found in upcoming Example~\ref{ex:autocorrelation_matrix}.}
    \label{fig:autocorrelation_matrix_example}
\end{figure}
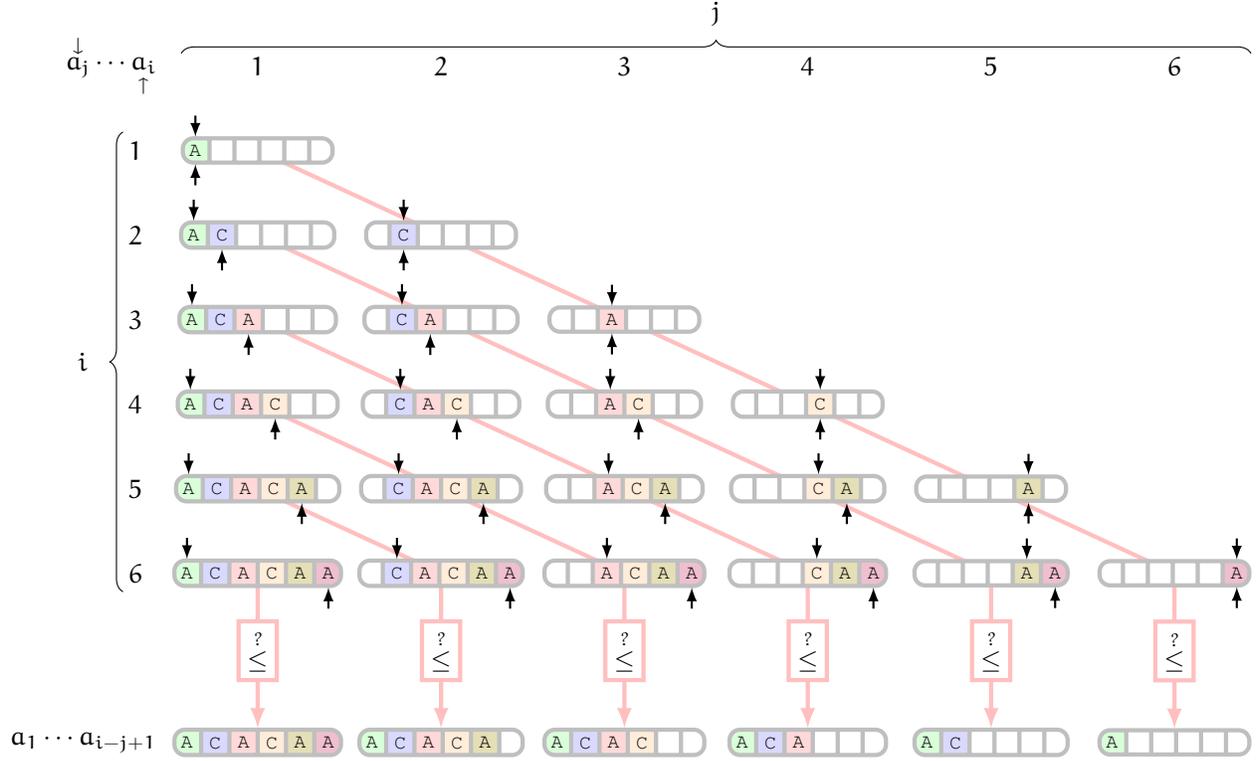

 To simplify notations, in the sequel we shall use the following binary variables.

\begin{definition}\label{def:binary_variables}
For $\star\in \lbrace <,=,>\rbrace$ and $1\leq j\leq i \leq m$, we define the binary variable $\textbf{R}_{i,j}^\star$, so that $\textbf{R}_{i,j}^\star$ is true if and only if $\textbf{R}_{i,j} = \star$.
\end{definition}

In other words, the sentence ``if $\mathbf{R}_{i,j} = \star$" will simply be written ``if $\mathbf{R}_{i,j}^\star$". Whenever useful, $\mathbf{R}_{i,j}^\star$ will also be treated as a number in $\lbrace 0,1 \rbrace$. By definition, the aforementioned classical autocorrelation vector corresponds to the binary vector $(\mathbf{R}^=_{1,m},\dots,\mathbf{R}^=_{m,m})$.

\begin{example}\label{ex:autocorrelation_matrix}
In the sequel of this paper, we shall follow two minimizers examples, $w_1 = \ch{\textnumbering{A,C,A,C,A,A}}$ and $w_2=\ch{\textnumbering{A,C,A,C,A,C}}$, whose autocorrelation matrices are given below (resp. left and right). 
\begin{center}
\begin{minipage}{0.45\textwidth}
  \centering
$\begin{blockarray}{ccccccc}
w_1&1 &2 & 3 & 4& 5& 6\\
\begin{block}{c(cccccc)}
  1 & = & . & . & . & . &.  \\
  2 & = & > & . & .  & . &. \\
  3 & = & > & = & . & . &.  \\
  4 & = & > & = & > & . &.  \\
  5 & = & > & = & > & =&.  \\
  6 & = & > &<& > & <& = \\
\end{block}
\end{blockarray}$
 \end{minipage}~
\begin{minipage}{0.45\textwidth}
 \centering
$\begin{blockarray}{ccccccc}
w_2&1 &2 & 3 & 4& 5& 6\\
\begin{block}{c(cccccc)}
  1 & = & . & . & . & . &.  \\
  2 & = & > & . & .  & . &. \\
  3 & = & > & = & . & . &.  \\
  4 & =& > & = & > & . &.  \\
  5 & = & > & = & > & = &.  \\
  6 & = & > &= &>  &= &>  \\
\end{block}
\end{blockarray}$
\end{minipage}
\end{center}
\end{example}

\subsection{Deductions from $\mathbf{R}_{i,j}^<$}\label{subsec:deduction_from_autocorrelation_matrix}

The autocorrelation matrix provides information about the constraints weighing on antemers and postmers. The $\mathbf{R}_{i,j}^>$ and $\mathbf{R}_{i,j}^=$ cases will be explored later; in this section we focus on $\mathbf{R}_{i,j}^<$. Consider, for instance, one minimizer coming from Example~\ref{ex:autocorrelation_matrix}, $w_1=\ch{ACACAA}$. For $(i,j)$ being either $(6,3)$ or $(6,5)$, we have $\mathbf{R}_{i,j}^<$. Imagine that $w_1$ is placed at the beginning of some $k$-mer, so that the situation is as follows: 
\begin{center}
\begin{tikzpicture}
\tikzstyle{mmer}=[rectangle,rectangle split, rectangle split horizontal,rectangle split parts=13,ultra thick,draw=lightgray,rounded corners,fill=white]

\node[mmer] (s) at (0,0) {\nodepart{one}\ch{A}\nodepart{two}\ch{C}\nodepart{three}\ch{A}\nodepart{four}\ch{C}\nodepart{five}\ch{A}\nodepart{six}\ch{A} \nodepart{twelve} $\cdots$};

\draw [decorate,decoration={brace,amplitude=5pt,raise=0.5ex}]
  (s.two split north) -- (s.eight split north) node[midway,yshift=1.5em]{$< w_1$};

\draw [decorate,decoration={brace,amplitude=5pt,raise=0.5ex,mirror}]
  (s.four split south) -- (s.ten split south) node[midway,yshift=-1.5em]{$< w_1$};

\end{tikzpicture}
\end{center}
We can immediately see that if the $k$-mer is large enough, no matter how we fill in the missing letters, we will obtain an $m$-mer $<w$, and so $w$ will not be the minimizer anymore of the $k$-mer. This simple example motivates the following key result.

\begin{proposition}\label{prop:key_result_matrix}
If there exist $2\leq j\leq i\leq m$ so that $\mathbf{R}_{i,j}^<$ then
\begin{itemize}
    \item $A_i(\alpha)=0$ for $\alpha \geq i$
    \item $P_i(\beta) =0$ for $\beta \geq m+j-1$
\end{itemize}
\end{proposition}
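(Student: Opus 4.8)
The plan is to reduce both statements to a single structural observation: the hypothesis $\mathbf{R}_{i,j}^<$ says exactly that, as words of length $i-j+1$, one has $a_j\cdots a_i < a_1\cdots a_{i-j+1}$. I will show that whenever a candidate word (antemer or postmer) shares a prefix of size $i$ with $w$ and is long enough to contain the $m$-mer starting at its $j$-th position, that particular $m$-mer is automatically $<w$, which violates the defining inequality of antemers (resp. postmers). The only lexicographic fact I need is the following: for two words of equal length, if their length-$\ell$ prefixes compare strictly, then the full words compare the same way, since the first differing position already lies within the prefix.

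For the antemer bound, I would let $y$ be an $\alpha$-antemer with $\alpha\geq i$ that shares a prefix of size $i$ with $w$, so $y_1\cdots y_i = a_1\cdots a_i$. Because $j\leq i\leq\alpha$, positions $j,\dots,i$ fall inside this shared prefix, giving $y_j\cdots y_i = a_j\cdots a_i$, and the $m$-mer of $yw$ starting at position $j$ both fits inside $yw$ and is not the last one (which sits at position $\alpha+1>i\geq j$). Its length-$(i-j+1)$ prefix is $a_j\cdots a_i$, strictly smaller than the corresponding prefix $a_1\cdots a_{i-j+1}$ of $w$ by $\mathbf{R}_{i,j}^<$; hence this $m$-mer is $<w$, contradicting Definition~\ref{def:antemer}. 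Therefore no such $y$ exists and $A_i(\alpha)=0$ for all $\alpha\geq i$.

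The postmer bound is the same argument applied to a $\beta$-postmer $z$ sharing a prefix of size $i$ with $w$. Here the decisive point is the range of validity: the $m$-mer of $z$ starting at position $j$ exists precisely when $j+m-1\leq\beta$, i.e. when $\beta\geq m+j-1$, which is exactly the claimed threshold. Under this condition $z_j\cdots z_i = a_j\cdots a_i$ (again because $i\leq m\leq m+j-1\leq\beta$), so the same prefix comparison forces this $m$-mer below $w$, contradicting Definition~\ref{def:postmer}; hence $P_i(\beta)=0$ for $\beta\geq m+j-1$.

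I expect no serious obstacle; the work is entirely in the index bookkeeping. The points to verify carefully are that positions $j,\dots,i$ genuinely lie inside the shared prefix (guaranteed by $i\leq\alpha$, resp. $i\leq\beta$) and that the exhibited $m$-mer is one of the constrained $m$-mers --- a non-final window for antemers, and a window entirely inside $z$ for postmers, which is precisely where the asymmetric thresholds $\alpha\geq i$ versus $\beta\geq m+j-1$ originate. Finally, the restriction $j\geq 2$ requires no attention, since $\mathbf{R}_{i,1}$ compares $a_1\cdots a_i$ with itself and is therefore always ``$=$'', so $\mathbf{R}_{i,j}^<$ can only occur for $j\geq 2$.
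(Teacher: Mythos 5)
Your proof is correct and takes essentially the same approach as the paper's: both exhibit the $m$-mer starting at position $j$, whose length-$(i-j+1)$ prefix equals $a_j\cdots a_i$ and is strictly below the corresponding prefix $a_1\cdots a_{i-j+1}$ of $w$ by $\mathbf{R}_{i,j}^<$, and both read off the thresholds from when that window fits inside the constrained word (the paper merely unifies the two cases through a single length parameter $\gamma$, with $\gamma=\beta$ for postmers and $\gamma=\alpha+m$ for antemers, where you treat them separately). Your explicit index checks --- that positions $j,\dots,i$ lie inside the shared prefix and that the exhibited window is not the exempted final $m$-mer of $yw$ --- are sound and if anything slightly more careful than the paper's bookkeeping.
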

\begin{proof}
Let $i,j$ so that $\mathbf{R}_{i,j}^<$. Let $\gamma> i$ and consider the $\gamma$-mer $x$ sharing with $w$ a prefix of size $i$, i.e. $x = a_1\cdots a_i b_{i+1}\cdots b_\gamma$. We can rewrite $x$ as
$$x =\underbrace{a_1\cdots \overbrace{a_j\cdots a_i}^{i-j+1}}_{i} \underbrace{b_{i+1}\cdots b_\gamma}_{\gamma-i}.$$

Suppose $m$-mers of $x$ to be either all $>w$ or all $\geq w$. $\mathbf{R}_{i,j}^<$ translates into $a_j\cdots a_i < a_1\cdots a_{i-j+1}$.  Among the $\gamma-i$ remaining letters of $x$, if there are enough to complete $a_j\cdots a_i$ into a $m$-mer, then this $m$-mer will be $<w$, regardless of the choice of remaining letters --- thus a contradiction. Therefore we must have $(\gamma -i) + (i-j+1) < m$, i.e. $\gamma < m+j-1$.

We address the case of $\beta$-postmers by posing $\gamma=\beta$. We indeed have $P_i(\beta)=0$ as soon as $\beta\geq m+j-1$. Concerning $\alpha$-antemers, given that we add $w$ at the end when we consider the condition of $m$-mers $>w$, we have $\gamma =\alpha+m$. Remember that $A_i(\alpha)=0$ if $\alpha <i$. With $\alpha \geq i \geq j-1$, we do have $\alpha +m\geq m+j-1$, hence the result.
\end{proof}

Applying Proposition~\ref{prop:key_result_matrix} to the special case of $(\beta+m)$-postmers sharing a prefix of size $m$ with $w$, we obtain the following result.

\begin{corollary}\label{corr:postmer_beta_max}
If there exists $2\leq j\leq m$ so that $\mathbf{R}_{m,j}^<$, then $P_{m}(\beta+m)=0$ for $\beta \geq j-1$.
\end{corollary}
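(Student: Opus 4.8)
The plan is to obtain this statement as a direct specialization of Proposition~\ref{prop:key_result_matrix}, with the only real work being a change of the size index. The hypothesis gives some $2\le j\le m$ with $\mathbf{R}_{m,j}^<$, so I would first observe that the pair $(i,j)=(m,j)$ satisfies the requirement $2\le j\le i\le m$ of the Proposition (taking $i=m$). I can therefore invoke the second bullet of that Proposition directly at $i=m$.

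Concretely, I would write: by Proposition~\ref{prop:key_result_matrix} applied with $i=m$, we have $P_{m}(\gamma)=0$ for every $\gamma\ge m+j-1$. Here $P_m(\gamma)$ counts the $\gamma$-postmers sharing a prefix of size $m$ with $w$, which (in light of Remark~\ref{rmk:postmer_def}) are exactly the postmers of the form $wz$ with $|z|=\gamma-m$ --- precisely the objects appearing in Equation~\eqref{eq:pi_calcul}. It then remains only to reparametrize by the length of the trailing part: setting $\gamma=\beta+m$, the threshold condition $\gamma\ge m+j-1$ is equivalent to $\beta+m\ge m+j-1$, i.e.\ to $\beta\ge j-1$. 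Substituting back yields $P_{m}(\beta+m)=0$ whenever $\beta\ge j-1$, which is the claim.

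I do not expect any genuine obstacle here: the corollary is a pure reindexing of the general bound, and no new combinatorial argument is needed once Proposition~\ref{prop:key_result_matrix} is granted. The one point I would be careful to state explicitly is that the specialization is legitimate --- namely that $i=m$ is an admissible choice of index in the Proposition, which holds because the hypothesis already guarantees $j\ge 2$ and $j\le m$, so $2\le j\le i=m\le m$. With that verification in place, the substitution $\gamma=\beta+m$ closes the argument immediately.
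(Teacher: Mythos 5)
Your proof is correct and is exactly the paper's route: the paper derives the corollary as the immediate specialization $i=m$ of Proposition~\ref{prop:key_result_matrix}, with the same reindexing $\gamma=\beta+m$ turning the threshold $\gamma\geq m+j-1$ into $\beta\geq j-1$. Your explicit check that $i=m$ is an admissible index is a sound (if minor) addition to what the paper leaves implicit.
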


We derive from Corollary~\ref{corr:postmer_beta_max} the maximum value that $\beta$ can take without $P_m(\beta+m)$ being zero.

\begin{definition}\label{def:postmer_beta_max}
$\beta_{\max}(w)=\min\left(k-m,\begin{cases} \min \lbrace 2\leq j \leq m : \mathbf{R}_{m,j}^< \rbrace -2 & \text{if this set is non-empty;}\\
\infty & \text{otherwise.}
\end{cases}\right)$
\end{definition}

\begin{example}\label{ex:beta_max_postmer}
Considering the minimizers $w_1=\ch{ACACAA}$ and $w_2=\ch{ACACAC}$ from Example~\ref{ex:autocorrelation_matrix}, a symbol $<$ can only be found for $w_1$, with $(i,j)$ equal to either $(6,3)$ or $(6,5)$. We derive from it that $P_m(\beta+m)=0$ whenever $\beta\geq 2$, hence $\beta_{\max}(w_1)= \min(k-m,1)$.
\end{example}

Since $\alpha+\beta=k-m$, it follows that \eqref{eq:pi_calcul} can be rewritten as

\begin{equation}\label{eq:pi_calcul_beta_max}
\pi_k(w) = \sum_{\beta=0}^{\beta_{\max}(w)} A(k-m-\beta) \cdot P_m(\beta+m).
\end{equation}

We immediately derive the following upper bound:
\begin{lemma}\label{lemma:trivial_upper_bound}
$\pi_k(w) \leq (\beta_{\max}(w)+1) \cdot |\Sigma|^{k-m}.$
\end{lemma}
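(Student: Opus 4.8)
The plan is to bound each summand of the rewritten expression \eqref{eq:pi_calcul_beta_max} individually by $|\Sigma|^{k-m}$, and then observe that the outer sum ranges over exactly $\beta_{\max}(w)+1$ values of $\beta$. Since \eqref{eq:pi_calcul_beta_max} already collapses the double sum of \eqref{eq:pi_calcul} into a single sum truncated at $\beta_{\max}(w)$, the whole argument reduces to a term-by-term estimate followed by a count of the terms.

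First I would recall that, by Definition~\ref{def:antemer}, every $\alpha$-antemer is in particular a word of $\Sigma^\alpha$, so the antemer count $A(\alpha)$ is at most the number of \emph{all} $\alpha$-mers, namely $A(\alpha)\leq|\Sigma|^\alpha$. Applying this with $\alpha=k-m-\beta$ gives $A(k-m-\beta)\leq|\Sigma|^{k-m-\beta}$. Next I would bound $P_m(\beta+m)$: by definition it counts the $(\beta+m)$-postmers sharing a prefix of size $m$ with $w$, and any such word has $w$ as its length-$m$ prefix, hence is of the form $wz$ with $z\in\Sigma^\beta$. The suffix $z$ ranges over at most all $\beta$-mers, so $P_m(\beta+m)\leq|\Sigma|^\beta$.

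Multiplying the two bounds, each term of the sum satisfies
$$A(k-m-\beta)\cdot P_m(\beta+m)\leq|\Sigma|^{k-m-\beta}\cdot|\Sigma|^{\beta}=|\Sigma|^{k-m},$$
independently of $\beta$. Since the index $\beta$ in \eqref{eq:pi_calcul_beta_max} runs over the $\beta_{\max}(w)+1$ integers $0,1,\dots,\beta_{\max}(w)$, summing yields the claimed inequality. The only points requiring care are purely bookkeeping: correctly counting the number of summands as $\beta_{\max}(w)+1$ rather than $\beta_{\max}(w)$ (an off-by-one being the easiest slip), and remembering that the prefix $w$ is \emph{fixed} inside $P_m(\beta+m)$ — this is exactly what reduces the free choices from $|\Sigma|^{\beta+m}$ down to $|\Sigma|^{\beta}$ and makes the product telescope to $|\Sigma|^{k-m}$. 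I expect no substantive obstacle beyond these elementary observations, which is consistent with the name of the lemma.
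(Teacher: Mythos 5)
Your proof is correct and follows exactly the paper's argument: the paper's one-line proof bounds $A(k-m-\beta)\leq|\Sigma|^{k-m-\beta}$ and $P_m(\beta+m)\leq|\Sigma|^{\beta}$ term by term in \eqref{eq:pi_calcul_beta_max} and counts the $\beta_{\max}(w)+1$ summands, just as you do. Your version merely spells out why each bound holds (antemers are words of $\Sigma^{\alpha}$; the prefix $w$ is fixed in $P_m(\beta+m)$), which is a faithful elaboration rather than a different route.
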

\begin{proof}
Trivially, $A(k-m-\beta)\leq |\Sigma|^{k-m-\beta}$ and $P_m(\beta+m) \leq |\Sigma|^{\beta}$.
\end{proof}

From Proposition~\ref{prop:key_result_matrix}, we can also rewrite \eqref{eq:antemer_def} as

\begin{equation}\label{eq:antemer_final}
A(\alpha) = \sum_{i=0}^{i_{\max}(w)-1} A_i(\alpha)
\end{equation}
where $i_{\max}(w)$ is defined as the following.

\begin{definition}\label{def:prefix_max_size}
$i_{\max}(w) = \begin{cases} \min \lbrace 2\leq i \leq m : \exists\, 2\leq j \leq i, \mathbf{R}_{i,j}^< \rbrace & \text{if this set is non-empty;}\\
m & \text{otherwise.}
\end{cases}$
\end{definition}

\begin{example}
From Example~\ref{ex:beta_max_postmer}, we deduce that $i_{\max}(\ch{ACACAA}) = 6$ since the minimal $i$ for which there exists $j$ so that $\mathbf{R}_{i,j}^<$ is $6$. In this case, it is equal to $m$ and does not further restrict the size of the common prefix that an antemer can share with $\ch{ACACAA}$.
\end{example}

Both $\beta_{\max}(w)$ and $i_{\max}(w)$ can be computed on the fly while computing the autocorrelation matrix $\mathbf{R}$, see Appendix~\ref{app:computation_autocorrelation_matrix}.

\section{Computing $A(\alpha)$}\label{sec:antemers}

Recall that 
\begin{equation}
A(\alpha) = \sum_{i=0}^{i_{\max}(w)-1} A_i(\alpha) \tag{\ref{eq:antemer_final}}
\end{equation}
where $i_{\max}(w)$ has been defined in Definition~\ref{def:prefix_max_size}. In this section, we are interested in finding recurrence relationships between the values of $A(\alpha)$ and $A_i(\alpha)$, where $i < i_{\max}(w)$. It follows that we have, for all $1\leq j\leq i$, $\mathbf{R}_{i,j} \in \lbrace =,>\rbrace$.

\subsection{Edge cases}

\paragraph{Case $i=0$.} In this case, we count the $\alpha$-antemers $x=b_1\cdots b_\alpha$ that share no prefix with $w$. Necessarily, $b_1>a_1$ --- therefore there are $\varphi_>(a_1)$ choices for $b_1$. Then, there are $A(\alpha-1)$ ways of completing $b_2\cdots b_\alpha$.

\begin{proposition}\label{prop:antemer_0}
$A_0(\alpha) = \varphi_>(a_1) \cdot A(\alpha-1)$.
\end{proposition}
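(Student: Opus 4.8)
The plan is to set up a bijection between the $\alpha$-antemers counted by $A_0(\alpha)$ and pairs consisting of an admissible first letter and an $(\alpha-1)$-antemer. Write such an antemer as $x=b_1\cdots b_\alpha$; since it shares no prefix with $w$ we have $b_1\neq a_1$, and the first task is to upgrade this to $b_1>a_1$. To this end I would look at the leftmost $m$-mer of the word $xw$, which starts at position $1$ and therefore begins with $b_1$ --- it equals $b_1\cdots b_m$ when $\alpha\geq m$ and $b_1\cdots b_\alpha a_1\cdots a_{m-\alpha}$ when $\alpha<m$. Because its first letter differs from $a_1$, its lexicographic comparison with $w$ is settled at the first position: $b_1<a_1$ would force this $m$-mer to be $<w$, contradicting Definition~\ref{def:antemer}. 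Hence $b_1>a_1$, leaving exactly $\varphi_>(a_1)$ choices for $b_1$.

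Next I would show that, with $b_1>a_1$ fixed, the residual antemer condition is exactly that $b_2\cdots b_\alpha$ be an $(\alpha-1)$-antemer. The non-terminal $m$-mers of $xw$ sit at starting positions $1,\dots,\alpha$. The one at position $1$ is already $>w$ by the choice of $b_1$ and is the only $m$-mer containing $b_1$; every $m$-mer at a position $\geq 2$ is independent of $b_1$ and coincides with an $m$-mer of $(b_2\cdots b_\alpha)w$. More precisely, the $m$-mers of $xw$ at positions $2,\dots,\alpha$ are precisely the non-terminal $m$-mers of $(b_2\cdots b_\alpha)w$. Thus all of them exceed $w$ if and only if $b_2\cdots b_\alpha$ is an $(\alpha-1)$-antemer.

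Combining the two steps, an $\alpha$-antemer with empty shared prefix is the same datum as a letter $b_1>a_1$ together with an arbitrary $(\alpha-1)$-antemer $b_2\cdots b_\alpha$, and the two choices are independent; counting them gives $A_0(\alpha)=\varphi_>(a_1)\cdot A(\alpha-1)$. The step I expect to require the most care is this decoupling: one must verify that fixing $b_1>a_1$ makes the leftmost $m$-mer exceed $w$ regardless of the following letters, so that no constraint linking $b_1$ to later positions is lost, and that the remaining constraints translate verbatim into the $(\alpha-1)$-antemer condition --- paying attention to the regime $\alpha<m$, where the leftmost $m$-mer spills over into $w$.
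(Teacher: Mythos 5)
Your proposal is correct and takes essentially the same approach as the paper, whose (very terse) argument is exactly this: $b_1\neq a_1$ is upgraded to $b_1>a_1$ via the leftmost $m$-mer of $xw$, giving $\varphi_>(a_1)$ choices, after which $b_2\cdots b_\alpha$ may be any $(\alpha-1)$-antemer, giving $A(\alpha-1)$ completions. The decoupling you verify explicitly --- that the first $m$-mer exceeds $w$ regardless of later letters, even in the regime $\alpha<m$ where it spills into $w$, and that the remaining non-terminal $m$-mers of $xw$ are exactly those of $(b_2\cdots b_\alpha)w$ --- is precisely what the paper leaves implicit.
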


\paragraph{Case $i=\alpha$.} Here, the antemer is fixed and is equal to $a_1\cdots a_i$. Therefore, $A_i(i)\in \lbrace 0,1\rbrace$, whether the word $a_1\cdots a_i a_1\cdots a_m$ is acceptable or not. Indeed, we must ensure that all $m$-mers (but the last one) of the word $a_1\cdots a_i a_1\cdots a_m$ are $> w$; that is, for all $1\leq j\leq i$,
$$a_j\cdots a_ia_1 \cdots a_{m-i+j-1} > a_1\cdots a_m$$
This equation trivially holds for any $j$ so that $\textbf{R}_{i,j}^>$. In the case where $\textbf{R}_{i,j}^=$, we have
\begin{alignat*}{3}
  &&\underbrace{a_j\cdots a_i}_{= a_1\cdots a_{i-j+1}}a_1 \cdots a_{m-i+j-1} &> a_1\cdots a_m\\
   &\iff & a_1\cdots a_{m-i+j-1} &> a_{i-j+2}\cdots a_m\\
   &\iff & \textbf{R}_{m,i-j+2}^<\\
\end{alignat*}

We showed the following result.

\begin{proposition}\label{prop:antemer_alpha}
For $1\leq i\leq i_{\max}(w)-1$, we have $A_i(i)\in \lbrace 0,1\rbrace$ with $A_i(i) = \displaystyle\prod_{j=1}^i \big(\textbf{R}_{i,j}^>+\textbf{R}_{i,j}^= \cdot\textbf{R}_{m,i-j+2}^<\big).$
\end{proposition}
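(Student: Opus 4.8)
The plan is to first pin down the unique candidate word and then decide its membership one starting position at a time, reading off each decision from a single entry of the autocorrelation matrix. First I would record that an $i$-antemer sharing a prefix of length $i$ with $w$ has all of its letters forced: it must coincide with the length-$i$ prefix $a_1\cdots a_i$ of $w$. Hence there is exactly one candidate, which already yields $A_i(i)\in\lbrace 0,1\rbrace$, and the whole task reduces to deciding whether $a_1\cdots a_i$ satisfies Definition~\ref{def:antemer}. By that definition this amounts to checking that every $m$-mer of the word $a_1\cdots a_i\,a_1\cdots a_m$ other than the trailing copy of $w$ is strictly greater than $w$. The $m$-mer starting at position $j$, for $1\leq j\leq i$, reads $a_j\cdots a_i\,a_1\cdots a_{m-i+j-1}$, so the constraints to verify are exactly that each of these is $>w=a_1\cdots a_m$.

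Next I would fix $j$ and compare that $m$-mer with $w$ through the entry $\mathbf{R}_{i,j}$, which by Definition~\ref{def:autocorrelation_matrix} compares the common prefix $a_j\cdots a_i$ with $a_1\cdots a_{i-j+1}$. Because $i\leq i_{\max}(w)-1$, the value $\mathbf{R}_{i,j}^<$ is excluded for all $1\leq j\leq i$, so only two cases remain. If $\mathbf{R}_{i,j}^>$, the strict inequality already holds at the level of these prefixes and the $m$-mer exceeds $w$ regardless of what follows, contributing a factor $1$. If $\mathbf{R}_{i,j}^=$, the two words agree on their first $i-j+1$ letters and the comparison descends to the suffixes $a_1\cdots a_{m-i+j-1}$ versus $a_{i-j+2}\cdots a_m$; recognising the latter pair as exactly what $\mathbf{R}_{m,i-j+2}$ compares, the required inequality holds precisely when $\mathbf{R}_{m,i-j+2}^<$. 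In both cases the $0/1$ factor $\mathbf{R}_{i,j}^>+\mathbf{R}_{i,j}^=\cdot\mathbf{R}_{m,i-j+2}^<$ equals $1$ exactly when the constraint at position $j$ is met.

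Finally I would assemble the positions: the word $a_1\cdots a_i$ is an antemer if and only if all $i$ constraints hold simultaneously, and since each factor lies in $\lbrace 0,1\rbrace$ their product is $1$ precisely in that case and $0$ otherwise, which is the stated formula for $A_i(i)$. The step I expect to demand the most care is the index bookkeeping in the equality case: one must check that the two suffixes indeed have the common length $m-i+j-1$, that the subscript $i-j+2$ together with the row $m$ always yields a legitimate entry of $\mathbf{R}$ (in particular for the extreme positions $j=1$ and $j=i$, where one uses $i\leq m-1$), and that the exclusion of $\mathbf{R}_{i,j}^<$ genuinely follows from $i\leq i_{\max}(w)-1$ via Definition~\ref{def:prefix_max_size}.
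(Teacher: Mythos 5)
Your proof is correct and takes essentially the same approach as the paper's: you identify the unique forced candidate $a_1\cdots a_i$ (giving $A_i(i)\in\lbrace 0,1\rbrace$), check each $m$-mer $a_j\cdots a_i\,a_1\cdots a_{m-i+j-1}$ of $a_1\cdots a_i\,a_1\cdots a_m$ against $w$ via $\mathbf{R}_{i,j}$, and in the case $\mathbf{R}_{i,j}^=$ reduce the comparison to the suffixes, recognising the condition as exactly $\mathbf{R}_{m,i-j+2}^<$, before taking the product of the $0/1$ factors. Your added index bookkeeping (common suffix length $m-i+j-1$, legitimacy of the entry $\mathbf{R}_{m,i-j+2}$, and exclusion of $\mathbf{R}_{i,j}^<$ from $i\leq i_{\max}(w)-1$) is consistent with, and slightly more explicit than, the paper's argument.
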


\begin{example}\label{ex:edge_cases}
Let us look first at $w_1=\ch{ACACAA}$. Below left, the equations for $i=5$ are illustrated, and one can observe that $A_5(5)=0$ for $w_1$ since there exists $m$-mers that are $<w_1$ in the associated antemer. Below right, one can find all values of $A_i(i)$ for $1\leq i\leq 5$, as well as the associated word $a_1\cdots a_i a_1\cdots a_m$.

\noindent
\begin{minipage}[c]{0.475\textwidth}
\centering
\begin{tikzpicture}

\tikzstyle{kmer}=[rectangle,rectangle split, rectangle split horizontal,rectangle split parts=11,ultra thick,draw=lightgray,rounded corners,rectangle split ignore empty parts,rectangle split part fill={white,white,white,white,white,lgreen!30}]

\tikzstyle{mmer}=[rectangle,rectangle split, rectangle split horizontal,rectangle split parts=6,ultra thick,draw=lightgray,rounded corners]

\node[kmer] (s) at (0,0) {\nodepart{one} \ch{A} \nodepart{two}\ch{C} \nodepart{three}\ch{A}  \nodepart{four}  \ch{C}\nodepart{five} \ch{A}\nodepart{six} \ch{A}\nodepart{seven} \ch{C}\nodepart{eight}\ch{A} \nodepart{nine} \ch{C}\nodepart{ten} \ch{A}\nodepart{eleven}\ch{A}};

 \draw [decorate,decoration={brace,amplitude=5pt,raise=0.5ex}]
  (s.five split north) -- (s.north east) node[midway,yshift=1.5em]{$w_1$};
   \draw [decorate,decoration={brace,amplitude=5pt,raise=0.5ex}]
  (s.north west) -- (s.five split north) node[midway,yshift=1.5em]{$a_1\cdots a_i$};

 \node[mmer,rectangle split part fill={white,white,white,white,white,lgreen!30}] (m1) at ([yshift=-0.8cm]s.three split) {\nodepart{one} \ch{A} \nodepart{two}\ch{C} \nodepart{three}\ch{A}  \nodepart{four}  \ch{C}\nodepart{five} \ch{A}\nodepart{six} \ch{A}};

  \node[mmer,rectangle split part fill={white,white,white,white,lgreen!30}] (m2) at ([yshift=-1.4cm]s.four split) {\nodepart{one} \ch{C} \nodepart{two}\ch{A} \nodepart{three}\ch{C}  \nodepart{four}  \ch{A}\nodepart{five} \ch{A}\nodepart{six} \ch{C}};
  
  \node[mmer,rectangle split part fill={white,white,white,lgreen!30}] (m3) at ([yshift=-2cm]s.five split) {\nodepart{one} \ch{A} \nodepart{two}\ch{C} \nodepart{three}\ch{A}  \nodepart{four}  \ch{A}\nodepart{five} \ch{C}\nodepart{six} \ch{A}};

  \node[mmer,rectangle split part fill={white,white,lgreen!30}] (m4) at ([yshift=-2.6cm]s.six split) {\nodepart{one} \ch{C} \nodepart{two}\ch{A} \nodepart{three}\ch{A}  \nodepart{four}  \ch{C}\nodepart{five} \ch{A}\nodepart{six} \ch{C}};
  
  \node[mmer,rectangle split part fill={white,lgreen!30}] (m5) at ([yshift=-3.2cm]s.seven split) {\nodepart{one} \ch{A} \nodepart{two}\ch{A} \nodepart{three}\ch{C}  \nodepart{four}  \ch{A}\nodepart{five} \ch{C}\nodepart{six} \ch{A}};

\node (t) at ([yshift=-0.8cm]s.east) {$\boldsymbol{=}\; w_1$};
\node at ([yshift=-1.4cm]s.east) {$\boldsymbol{>}\; w_1$};
\node at ([yshift=-2cm]s.east) {$\boldsymbol{<}\; w_1$};
\node at ([yshift=-2.6cm]s.east) {$\boldsymbol{>}\; w_1$};
\node (b) at ([yshift=-3.2cm]s.east) {$\boldsymbol{<}\; w_1$};

\node (m) at ($(s)!0.5!(m1)$) {};

\draw[lightgray,ultra thick] (s.west|-,|-m)--(t.east|-,|-m);

 \draw [decorate,decoration={brace,amplitude=5pt,raise=0.5ex}]
  (t.north east) -- (b.south east) node[midway,xshift=1.5em]{\xmark};

\end{tikzpicture}
\end{minipage}\hfill
\begin{minipage}[c]{0.475\textwidth}
\centering
\begin{tikzpicture}

\tikzstyle{kmer}=[rectangle,rectangle split, rectangle split horizontal,rectangle split parts=11,ultra thick,draw=lightgray,rounded corners,rectangle split ignore empty parts]

\node[kmer,rectangle split part fill={white,white,white,white,white,lgreen!30}] (s) at (0,0) {\nodepart{one} \ch{A} \nodepart{two}\ch{C} \nodepart{three}\ch{A}  \nodepart{four}  \ch{C}\nodepart{five} \ch{A}\nodepart{six} \ch{A}\nodepart{seven} \ch{C}\nodepart{eight}\ch{A} \nodepart{nine} \ch{C}\nodepart{ten} \ch{A}\nodepart{eleven}\ch{A}};
\node (i5) at (-3.5,0) {$5$};
\node (j5) at (3.5,0) {$0$};

\node[kmer,rectangle split parts=10,rectangle split part fill={white,white,white,white,lgreen!30}] at ([yshift=0.3cm]s.six split north) {\nodepart{one} \ch{A} \nodepart{two}\ch{C} \nodepart{three}\ch{A}  \nodepart{four}  \ch{C}\nodepart{five} \ch{A} \nodepart{six} \ch{C}\nodepart{seven} \ch{A}\nodepart{eight}\ch{C} \nodepart{nine} \ch{A}\nodepart{ten} \ch{A}};
\node at ([yshift=0.3cm]i5.north) {$4$};
\node at ([yshift=0.3cm]j5.north) {$1$};

\node[kmer,rectangle split parts=9,rectangle split part fill={white,white,white,lgreen!30}] at ([yshift=0.9cm]s.seven north) {\nodepart{one} \ch{A} \nodepart{two}\ch{C} \nodepart{three}\ch{A}  \nodepart{four}  \ch{A}\nodepart{five} \ch{C} \nodepart{six} \ch{A}\nodepart{seven} \ch{C}\nodepart{eight}\ch{A} \nodepart{nine} \ch{A}};
\node at ([yshift=0.9cm]i5.north) {$3$};
\node at ([yshift=0.9cm]j5.north) {$0$};

\node[kmer,rectangle split parts=8,rectangle split part fill={white,white,lgreen!30}] at ([yshift=1.5cm]s.seven split north) {\nodepart{one} \ch{A} \nodepart{two}\ch{C} \nodepart{three}\ch{A}  \nodepart{four}  \ch{C}\nodepart{five} \ch{A} \nodepart{six} \ch{C}\nodepart{seven} \ch{A}\nodepart{eight}\ch{A}};
\node at ([yshift=1.5cm]i5.north) {$2$};
\node at ([yshift=1.5cm]j5.north) {$1$};

\node[kmer,rectangle split parts=7,rectangle split part fill={white,lgreen!30}] at ([yshift=2.1cm]s.eight north) {\nodepart{one} \ch{A} \nodepart{two}\ch{A} \nodepart{three}\ch{C}  \nodepart{four}  \ch{A}\nodepart{five} \ch{C} \nodepart{six} \ch{A}\nodepart{seven} \ch{A}};
\node at ([yshift=2.1cm]i5.north) {$1$};
\node at ([yshift=2.1cm]j5.north) {$0$};

\node at ([yshift=2.7cm]i5.north) {$i$};
\node at ([yshift=2.7cm]s.north) {$a_1\cdots a_i a_1\cdots a_m$};
\node at ([yshift=2.7cm]j5.north) {$A_i(i)$};

\end{tikzpicture}
\end{minipage}

\medskip
\noindent
For $w_2=\ch{ACACAC}$, we leave it to the interested reader to convince themself that $A_i(i)=0$ for all $1\leq i\leq 5$.
\end{example}

\subsection{General case}\label{ss:general_case}

Let us consider an $\alpha$-antemer $x$ sharing with $w$ a prefix of size exactly $0< i <\min(\alpha,i_{\max}(w))$, that is, $x=a_1\cdots a_i b_{i+1} \cdots b_\alpha$, with $b_{i+1}>a_{i+1}$.\footnote{Otherwise, the common prefix between $w$ and $x$ would be of size at least $i+1$.} We must ensure that all $m$-mers of $xw$ (but the last one) are $>w$. Our strategy is to consider the constraints on the possible letters for $b_{i+1}$, and then to consider the recursive cases arising from these alternatives. We are therefore only interested in the $m$-mers that contain $b_{i+1}$. Those $m$-mers are, with $1\leq j\leq i$ :

\begin{equation}
a_j\cdots a_i b_{i+1}\cdots b_{m-1+j} > a_1\cdots a_m\tag{$i,j$}  \label{eq:i_j}
\end{equation}
and
\begin{equation}
b_{i+1}\cdots b_{i+m} > a_1\cdots a_m. \tag{$i$}\label{eq:i}
\end{equation}

where, if necessary, $b_l = a_{l-\alpha}$ for $l>\alpha$.\footnote{Indeed, as we iterate over the $m$-mers of the word $xw$, and depending on $\alpha$ and $i$, some $m$-mers containing $b_{i+1}$ might overlap over $w$, and therefore some $b_l$'s may be contained in $w$ rather than in $x$.}

Recall that, first, since $i$ is the size of the common prefix between $x$ and $w$, we must also have
\begin{equation}
b_{i+1} > a_{i+1} \tag{prefix}\label{eq:prefix}
\end{equation}

Recall also that $\mathbf{R}_{i,j}\in \lbrace =,>\rbrace$. For any $i,j$ so that $\mathbf{R}_{i,j}^>$, it is clear that \eqref{eq:i_j} trivially holds.  For the remainder of this section, we focus on the case $\mathbf{R}_{i,j}^=$. We have
\begin{alignat}{3}
   & &\underbrace{a_j\cdots a_i}_{=a_1\cdots a_{i-j+1
    }}b_{i+1}\cdots b_{m-1+j} &> a_1\cdots a_m\nonumber\\
   & \iff & b_{i+1}\cdots b_{m-1+j} &> a_{i-j+2}\cdots a_m. \tag{$i,j,=$}\label{eq:i_j_equal}
\end{alignat}

\begin{example}\label{example:equations_antemer_general}
We illustrate the preceding equations with $w_1= \ch{ACACAA}$ and $i=5$. We consider the word $x$ of the form --- simplifying $b_{i+1}$ into $b$:
\begin{center}
\begin{tikzpicture}
\tikzstyle{mmer}=[rectangle,rectangle split, rectangle split horizontal,rectangle split parts=8,ultra thick,draw=lightgray,rounded corners,fill=white,,rectangle split part align=base]

\node[mmer] (s) at (0,0) {\nodepart{one}\ch{A}\nodepart{two}\ch{C}\nodepart{three}\ch{A}\nodepart{four}\ch{C}\nodepart{five}\ch{A}\nodepart{six} $b$ \nodepart{seven} $\cdots$};
\end{tikzpicture}
\end{center}
Equation~\eqref{eq:prefix} translates into $b > \ch{A}$; while equations~\eqref{eq:i_j} and~\eqref{eq:i} lead to the following system:
\vspace{-0.5\baselineskip}
\begin{center}
\begin{tikzpicture}
\tikzstyle{mmer}=[rectangle,rectangle split, rectangle split horizontal,rectangle split parts=6,ultra thick,draw=lightgray,rounded corners,fill=white,rectangle split part align=base]

\node[mmer] (s) at (0,0) {\nodepart{one}\ch{A}\nodepart{two}\ch{C}\nodepart{three}\ch{A}\nodepart{four}\ch{C}\nodepart{five}\ch{A}\nodepart{six} $b$};

\node[mmer,anchor=four split] (s1) at ([yshift=-0.5cm]s.five split south) {\nodepart{one}\ch{C}\nodepart{two}\ch{A}\nodepart{three}\ch{C}\nodepart{four}\ch{A}\nodepart{five}$b$\nodepart{six}};

\node[mmer,anchor=three split] (s2) at ([yshift=-0.5cm]s1.four split south) {\nodepart{one}\ch{A}\nodepart{two}\ch{C}\nodepart{three}\ch{A}\nodepart{four}$b$\nodepart{five}\nodepart{six} };

\node[mmer,anchor=two split] (s3) at ([yshift=-0.5cm]s2.three split south) {\nodepart{one}\ch{C}\nodepart{two}\ch{A} \nodepart{three} $b$};

\node[mmer,anchor=one split] (s4) at ([yshift=-0.5cm]s3.two split south) {\nodepart{one}\ch{A}\nodepart{two} $b$};

\node[mmer,anchor=west] (s5) at ([yshift=-0.5cm]s4.one split south) {\nodepart{one} $b$};

\node (e5) at ([xshift=0.25cm]s5.east) {$\boldsymbol{>}$};
\node at (e5|-,|-s4) {$\boldsymbol{>}$};
\node at (e5|-,|-s3) {$\boldsymbol{>}$};
\node at (e5|-,|-s2) {$\boldsymbol{>}$};
\node at (e5|-,|-s1) {$\boldsymbol{>}$};
\node at (e5|-,|-s) {$\boldsymbol{>}$};

\node[mmer,anchor=west] (w5) at ([xshift=0.25cm]e5) {\nodepart{one}\ch{A}\nodepart{two}\ch{C}\nodepart{three}\ch{A}\nodepart{four}\ch{C}\nodepart{five}\ch{A}\nodepart{six} \ch{A}};

\node[mmer] (w4) at (w5|-,|-s4) {\nodepart{one}\ch{A}\nodepart{two}\ch{C}\nodepart{three}\ch{A}\nodepart{four}\ch{C}\nodepart{five}\ch{A}\nodepart{six} \ch{A}};

\node[mmer] (w3) at (w5|-,|-s3) {\nodepart{one}\ch{A}\nodepart{two}\ch{C}\nodepart{three}\ch{A}\nodepart{four}\ch{C}\nodepart{five}\ch{A}\nodepart{six} \ch{A}};

\node[mmer] (w2) at (w5|-,|-s2) {\nodepart{one}\ch{A}\nodepart{two}\ch{C}\nodepart{three}\ch{A}\nodepart{four}\ch{C}\nodepart{five}\ch{A}\nodepart{six} \ch{A}};

\node[mmer] (w1) at (w5|-,|-s1) {\nodepart{one}\ch{A}\nodepart{two}\ch{C}\nodepart{three}\ch{A}\nodepart{four}\ch{C}\nodepart{five}\ch{A}\nodepart{six} \ch{A}};

\node[mmer] (w) at (w5|-,|-s) {\nodepart{one}\ch{A}\nodepart{two}\ch{C}\nodepart{three}\ch{A}\nodepart{four}\ch{C}\nodepart{five}\ch{A}\nodepart{six} \ch{A}};

\node (rij) at ([xshift=0.5cm,yshift=0.5cm]w.east) {$\mathbf{R}_{i,j}^>$};

\node at ([xshift=0.5cm]w1.east) {\checkmark};
\node at ([xshift=0.5cm]w3.east) {\checkmark};
\node at ([xshift=0.5cm]w2.east) {\xmark};
\node at ([xshift=0.5cm]w4.east) {\xmark};
\node at ([xshift=0.5cm]w.east) {\xmark};

\node (iff) at ([xshift=3cm]w2) {$\iff$};

\node[mmer,anchor=west,rectangle split parts=1] (z1) at ([xshift=0.75cm]iff|-,|-s) {\nodepart{one}$b$};

\node[mmer,anchor=west,rectangle split parts=3] (z2) at ([xshift=0.75cm]iff|-,|-s2) {\nodepart{one}$b$};

\node[mmer,anchor=west,rectangle split parts=5] (z3) at ([xshift=0.75cm]iff|-,|-s4) {\nodepart{one}$b$};

 \draw [decorate,decoration={brace,amplitude=5pt,raise=0.5ex,mirror}]
  (z1.north west) -- (z3.south west);

\node (f1) at ([xshift=0.25cm]z1.east) {$\boldsymbol{>}$};

\node[mmer,anchor=west,rectangle split parts=1] (a1) at ([xshift=0.25cm]f1) {\nodepart{one}\ch{A}};

\node (f2) at ([xshift=0.25cm]z2.east) {$\boldsymbol{>}$};

\node[mmer,anchor=west,rectangle split parts=3] (a2) at ([xshift=0.25cm]f2) {\nodepart{one}\ch{C}\nodepart{two}\ch{A}\nodepart{three}\ch{A}};

\node (f3) at ([xshift=0.25cm]z3.east) {$\boldsymbol{>}$};

\node[mmer,anchor=west,rectangle split parts=5] (a3) at ([xshift=0.25cm]f3) {\nodepart{one}\ch{C}\nodepart{two}\ch{A}\nodepart{three}\ch{C}\nodepart{four}\ch{A}\nodepart{five}\ch{A}};

\node at (f3|-,|-rij) {\eqref{eq:i_j_equal}};

\node (m) at ($(s4)!0.5!(s5)$) {};

\draw[lightgray,ultra thick] (s.west|-,|-m)--(a3.east|-,|-m);
\end{tikzpicture}
\end{center}
This system can be solved manually, leading to $b\geq \ch{C}$. Note that choosing $b=\ch{C}$ leads to the word $x=\ch{ACACAC}\cdots$, which in turn contains prefixes of $w$ (resp. $\ch{ACAC}\cdots$ and $\ch{AC}\cdots$) that must be taken into account recursively.
\end{example}

From \eqref{eq:i_j_equal}, we get that  $b_{i+1}\geq a_{i-j+2}$. Remember that $\mathbf{R}_{i,j}^{=}$, thus $a_j\cdots a_i$ is here a prefix of $w$, namely $a_1\cdots a_{i-j+1}$. Therefore, either $b_{i+1}>a_{i-j+2}$ and the prefix stops here, either $b_{i+1}=a_{i-j+2}$ and the prefix continues and must be taken care of recursively --- as shown in Example~\ref{example:equations_antemer_general}. The continued prefix is of size \emph{at least} $i-j+2$, therefore leading to the following fact:

\begin{lemma}\label{lem:prefix_recursive_relation}
For an $\alpha$-antemer $x$, sharing with $w$ a prefix of size exactly $0<i< \min(\alpha,i_{\max}(w))$, i.e. $x = a_1\cdots a_i b_{i+1}\cdots b_\alpha$, if there exists $j$ so that $\mathbf{R}_{i,j}^=$, then choosing $b_{i+1}=a_{i-j+2}$ leads to considering an $(\alpha-j+1)$-antemer $y = a_j\cdots a_i a_{i-j+2} b_{i+2}\cdots b_\alpha$ sharing with $w$ a prefix of size at least $i-j+2$.
\end{lemma}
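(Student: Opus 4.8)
The plan is to observe that $y$ is literally a suffix of $x$, so that the antemer property is inherited for free, and then to read the prefix length directly off the hypothesis $\mathbf{R}_{i,j}^=$; the whole argument is structural, the only care needed being in the index bookkeeping. Concretely, I would first make the suffix identification explicit: substituting the chosen value $b_{i+1}=a_{i-j+2}$, the word reads $x = a_1\cdots a_{i}\,a_{i-j+2}\,b_{i+2}\cdots b_\alpha$, so that $x = a_1\cdots a_{j-1}\cdot y$ with $y = a_j\cdots a_i\,a_{i-j+2}\,b_{i+2}\cdots b_\alpha$. In other words $y$ is exactly the suffix of $x$ obtained by deleting its first $j-1$ letters, and counting letters gives $|y| = (i-j+1)+1+(\alpha-i-1) = \alpha-j+1$, matching the claimed length.

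Next I would settle the prefix claim. Since $\mathbf{R}_{i,j}^=$ means $a_j\cdots a_i = a_1\cdots a_{i-j+1}$, the prefix of $y$ of length $i-j+2$ equals $a_1\cdots a_{i-j+1}a_{i-j+2} = a_1\cdots a_{i-j+2}$, which is precisely the prefix of $w$ of the same length (this is a legitimate prefix of both, since $i-j+2\le m$ because $i<i_{\max}(w)\le m$). Hence $y$ and $w$ agree on their first $i-j+2$ letters, i.e. $y$ shares with $w$ a prefix of size \emph{at least} $i-j+2$; it may of course be longer, whence the ``at least''.

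Finally I would transfer the antemer property. Appending $w$ to the identity above gives $xw = a_1\cdots a_{j-1}\cdot(yw)$, so $yw$ is a suffix of $xw$ and every $m$-mer of $yw$ is an $m$-mer of $xw$. Because both words terminate in the same block $w$, the final $m$-mer of $yw$ coincides with the final $m$-mer of $xw$, namely $w$ itself; every other $m$-mer of $yw$ is a non-terminal $m$-mer of $xw$, hence $>w$ by the hypothesis that $x$ is an $\alpha$-antemer. Thus all $m$-mers of $yw$ but the last are $>w$, so by Definition~\ref{def:antemer} the $(\alpha-j+1)$-mer $y$ is an $(\alpha-j+1)$-antemer, which is the desired conclusion.

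I do not expect a genuine obstacle here: the single subtlety to watch is that the one chosen letter $a_{i-j+2}$ must simultaneously serve as the $(i+1)$-th letter of $x$ and the $(i-j+2)$-th letter of $y$, and that the trailing copy of $w$ lines up identically in $xw$ and in $yw$. Once the identification ``$y$ is the suffix of $x$ starting at position $j$'' is in place, the inheritance of the antemer condition under passing to a suffix requires no further combinatorial work, and the prefix length falls out of $\mathbf{R}_{i,j}^=$ by direct substitution.
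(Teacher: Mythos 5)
Your proof is correct and takes essentially the same route as the paper: the paper states the lemma as an immediate consequence of the preceding discussion (since $\mathbf{R}_{i,j}^=$ rewrites $a_j\cdots a_i$ as the prefix $a_1\cdots a_{i-j+1}$, choosing $b_{i+1}=a_{i-j+2}$ continues that prefix, which must then be handled recursively). Your explicit suffix identification $x=a_1\cdots a_{j-1}\cdot y$, the length count, and the inheritance of the antemer condition from $xw$ to its suffix $yw$ (whose last $m$-mer is still $w$) simply spell out what the paper leaves implicit, with no gaps.
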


The next issue to consider is what if choosing $b_{i+1}=a_{i-j+2}$ leads to \emph{simultaneously} extend several prefixes of $w$, as in Example~\ref{example:equations_antemer_general}? Suppose there exist $j'>j\geq 2$, so that (i) both $\mathbf{R}_{i,j}^=$ and $\mathbf{R}_{i,j'}^=$; and (ii) $a_{i-j+2}=a_{i-j'+2}= a \in \Sigma$. Choosing $b_{i+1}=a$ simultaneously extends two prefixes with $w$: $a_j\cdots a_i a = a_1\cdots a_{i-j+1} a$ and $a_{j'}\cdots a_i a = a_1\cdots a_{i-j'+1} a$.  Since $j'>j$, this corresponds to the following word :

$$\overbrace{a_j\cdots a_{j'-1}\underbrace{a_{j'}\cdots a_i a\cdots}_{\substack{\text{prefix of size } \\ \geq i-j'+2}}}^{\text{prefix of size } \geq i-j+2}\cdots.$$

In the light of Lemma~\ref{lem:prefix_recursive_relation}, it is crucial here to remark that recursively treating an $(\alpha-j+1)$-antemer sharing with $w$ a prefix of size at least $i-j+2$ allows us, at the same time, to treat the case of $(\alpha-j'+1)$-antemers sharing with $w$ a prefix of size at least $i-j'+2$ --- i.e. the case related to $j'$ is completely included in the case related to $j$.

In other respects, consider \eqref{eq:i}. As we know nothing about the size of any common prefix between $b_{i+1}\cdots b_{i+m}$ and $w$, we deduce $b_{i+1}\geq a_1$. Either $b_{i+1}>a_1$, and there is no prefix to consider recursively, or $b_{i+1}=a_1$ and a new prefix is created, leading to consider $(\alpha-i)$-antemers sharing with $w$ a prefix of size at least $1$. To mirror the discussion immediately preceding this very paragraph, suppose there exists $j\geq 2$ so that $\mathbf{R}_{i,j}^=$ and $a_{i-j+2}=a_1$, then choosing $b_{i+1}=a_1$ also simultaneously extends several prefixes of $w$ :

$$\overbrace{a_j\cdots a_{i}\underbrace{a_1\cdots}_{\substack{\text{prefix of} \\ \text{size } \geq 1}}}^{\text{prefix of size } \geq i-j+2}\cdots.$$

The conclusion is the same: recursively handling the longest prefix covers the smaller ones.

The whole discussion leads us to define what we call prefix-letter vectors of $w$. Recall that $[P]$ stands for the Iverson bracket of property $P$.

\begin{definition}\label{def:prefix_letter_vector}
For $1\leq i\leq m$, we define the $i$-th prefix-letter vector of $w$, denoted by $\mathbf{T}_i$, as a vector in $\mathbb{N}^{|\Sigma|}$, so that for all $a\in \Sigma$,
\begin{itemize}
    \item $\mathbf{T}_1(a) = 2 \cdot [a=a_1]$,
    \item for all $2\leq i \leq m$,
$\mathbf{T}_i(a)=\begin{cases} \min \lbrace 2\leq j \leq i : \mathbf{R}_{i,j}^= \wedge (a_{i-j+2}=a)\rbrace & \text{if this set is not empty;}\\
(i+1) \cdot [a=a_1] & \text{otherwise.}
\end{cases}$
\end{itemize}
\end{definition}

Following the discussion above, choosing $b_{i+1}=a$ when $\mathbf{T}_i(a)\neq 0$ leads to the consider a $(\alpha - \mathbf{T}_i(a) +1)$-antemer sharing with $w$ a prefix of size at least $i-\mathbf{T}_i(a)+2$, as per Lemma~\ref{lem:prefix_recursive_relation}. In the case where $\mathbf{T}_{i}(a_1)=i+1$, since $\alpha - (i+1) +1 = \alpha -i$ and $i-(i+1)+2=1$, we retrieve the conclusion we draw previously from \eqref{eq:i}.

\begin{remark}
Note that $\mathbf{T}_i$ is defined for $i\geq i_{\max}(w)$, although this is not applicable to the specific case of this section, which deals with $i<i_{\max}(w)$ values. This is because the notion of prefix-letter vector will also be useful for postmers later on, which do not follow this constraint of $i<i_{\max}(w)$.
\end{remark}

It should be noted that the practical computation of the $\mathbf{T}_i$'s can be performed on the fly while building the autocorrelation matrix $\mathbf{R}$, and is covered in Appendix~\ref{app:computation_autocorrelation_matrix}.

\begin{example}\label{ex:prefix-letter-vectors}
We detail here the computation of the prefix-letter vectors of $w_1=\ch{\textnumbering{A,C,A,C,A,A}}$ and $w_2=\ch{\textnumbering{A,C,A,C,A,C}}$, whose autocorrelation matrices are given in Example~\ref{ex:autocorrelation_matrix}. In both cases, $\mathbf{T}_1(a) = 2\cdot [a = \ch{A}]$. For each minimizer, and each $i$, we provide tuples $(j,a)$ so that $\mathbf{R}_{i,j}^=$ and $a_{i-j+2}=a$:
\begin{center}
  \begin{tabular}{c|ccccc}
    $i$ & $2$&$3$&$4$&$5$&$6$\\
    \hline
    $w_1$ &  $\emptyset$ & $(3,\ch{C})$& $(3,\ch{A})$& $(3,\ch{C})$ ; $(5,\ch{C})$ & $(6,\ch{C})$\\
    $w_2$ &  $\emptyset$ & $(3,\ch{C})$& $(3,\ch{A})$& $(3,\ch{C})$ ; $(5,\ch{C})$ & $(3,\ch{A})$ ; $(6,\ch{A})$
\end{tabular}  
\end{center}
Remember that, for a given $i$ and letter $a$, $\mathbf{T}_i(a)$ is computed by taking the smallest $j$ in the tuples $(j,a)$ when applicable --- otherwise it is $(i+1) \cdot [a=a_1]$. We obtain the following prefix-letter vectors (rows correspond to $i$ values):-
\begin{center}
\begin{minipage}{0.45\textwidth}
  \centering
$\begin{blockarray}{ccccc}
w_1& \ch{A}&\ch{C} & \ch{G} & \ch{T}\\
\begin{block}{c(cccc)}
  1 & 2 & 0 & 0 & 0   \\
  2 & 3 & 0 & 0 & 0   \\
  3 & 4 & 3 & 0 & 0  \\
  4 & 3 & 0 & 0 & 0   \\
  5 & 6 & 3 & 0 & 0   \\
  6 & 7 & 6 &0& 0  \\
\end{block}
\end{blockarray}$
 \end{minipage}~
\begin{minipage}{0.45\textwidth}
 \centering
$\begin{blockarray}{ccccc}
w_2& \ch{A}&\ch{C} & \ch{G} & \ch{T}\\
\begin{block}{c(cccc)}
  1 & 2 & 0 & 0 & 0   \\
  2 & 3 & 0 & 0 & 0   \\
  3 & 4 & 3 & 0 & 0  \\
  4 & 3 & 0 & 0 & 0   \\
  5 & 6 & 3 & 0 & 0   \\
  6 & 3 & 0 &0& 0  \\
\end{block}
\end{blockarray}$
\end{minipage}
\end{center}
\end{example}

The next result sums up the discussion so far.
\begin{proposition}\label{prop:conditions_on_next_letter}
 We have the following conditions on $b_{i+1}$:
\begin{enumerate}[label=(\alph*)]
    \item $b_{i+1}> a_{i+1}$ --- since $x$ shares with $w$ a prefix of size exactly $i$, as per \eqref{eq:prefix};
    \item $b_{i+1}\geq a_1$ --- as per \eqref{eq:i};
    \item $b_{i+1}\geq a_{i-j+2}$ for each $j$ so that $\mathbf{R}_{i,j}^=$ (equivalently so that $\mathbf{T}_i(a_{i-j+2})\neq 0$) --- as per \eqref{eq:i_j_equal};
    \item if $\mathbf{T}_i(b_{i+1})=0$, no further constraint is imposed on the subsequent letters $b_{i+2},\dots$;
    \item on the contrary, if $\mathbf{T}_i(b_{i+1})\neq 0$, this leads to recursively consider $(\alpha - \mathbf{T}_i(b_{i+1})+1)$-antemers sharing with $w$ a prefix of size at least $i-\mathbf{T}_i(b_{i+1})+2$ --- as per Lemma~\ref{lem:prefix_recursive_relation} and Definition~\ref{def:prefix_letter_vector}.
\end{enumerate}  
\end{proposition}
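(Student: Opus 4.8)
The plan is to read the five items off the analysis of Section~\ref{ss:general_case}, checking each against the relevant displayed equation or earlier lemma. Items~(a)--(c) are simply the lower bounds on $b_{i+1}$ extracted from the $m$-mers of $xw$ that contain $b_{i+1}$; since $i<i_{\max}(w)\leq m$, these are exactly the $m$-mers \eqref{eq:i_j} for $1\leq j\leq i$ together with \eqref{eq:i}, and none of them is the trailing copy of $w$. For (a), I would use that $x$ shares with $w$ a prefix of size \emph{exactly} $i$, so $b_{i+1}\neq a_{i+1}$; the $m$-mer $j=1$ of \eqref{eq:i_j} (for which $\mathbf{R}_{i,1}^=$ holds trivially) reduces via \eqref{eq:i_j_equal} to $b_{i+1}\cdots b_m > a_{i+1}\cdots a_m$, forcing $b_{i+1}\geq a_{i+1}$; the two together give $b_{i+1}>a_{i+1}$, i.e. \eqref{eq:prefix}. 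For (b), the $m$-mer \eqref{eq:i} begins by comparing $b_{i+1}$ to $a_1$, so $b_{i+1}<a_1$ would make it $<w$, whence $b_{i+1}\geq a_1$. For (c), I would fix $j$ with $\mathbf{R}_{i,j}^=$, rewrite \eqref{eq:i_j} as \eqref{eq:i_j_equal}, and compare leading letters to obtain $b_{i+1}\geq a_{i-j+2}$; the parenthetical equivalence is immediate from Definition~\ref{def:prefix_letter_vector}, since such a $j$ places $a_{i-j+2}$ into the non-empty set defining $\mathbf{T}_i(a_{i-j+2})$.

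The real content is the dichotomy (d)/(e), governed by whether $b_{i+1}$ meets one of these bounds with equality. For (d) I would first unpack $\mathbf{T}_i(b_{i+1})=0$: by Definition~\ref{def:prefix_letter_vector} this means $b_{i+1}\neq a_1$ and $a_{i-j+2}\neq b_{i+1}$ for every $2\leq j\leq i$ with $\mathbf{R}_{i,j}^=$. Under this hypothesis, (b) sharpens to $b_{i+1}>a_1$, so \eqref{eq:i} is already decided $>w$ at its first letter regardless of $b_{i+2},\dots$; each \eqref{eq:i_j_equal} with $\mathbf{R}_{i,j}^=$ likewise sharpens via (c) to $b_{i+1}>a_{i-j+2}$ and is decided at its first letter; and every remaining $j$ has $\mathbf{R}_{i,j}^>$ (recall $\mathbf{R}_{i,j}\in\lbrace=,>\rbrace$ here), so \eqref{eq:i_j} holds outright. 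Thus each $m$-mer containing $b_{i+1}$ is forced to exceed $w$ by the single letter $b_{i+1}$ and no constraint propagates forward, which is exactly (d).

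For (e), $\mathbf{T}_i(b_{i+1})\neq 0$ means $b_{i+1}$ realizes one of the bounds with equality, so some comparison remains undecided and a longer shared prefix appears; when the defining set is non-empty, $\mathbf{T}_i(b_{i+1})$ is the least $j$ with $\mathbf{R}_{i,j}^=$ and $a_{i-j+2}=b_{i+1}$, and Lemma~\ref{lem:prefix_recursive_relation} applies directly to yield an $(\alpha-\mathbf{T}_i(b_{i+1})+1)$-antemer sharing a prefix of size at least $i-\mathbf{T}_i(b_{i+1})+2$, while the empty-set case gives $b_{i+1}=a_1$, $\mathbf{T}_i(b_{i+1})=i+1$, recovering the $(\alpha-i)$-antemer with prefix $\geq 1$ read off from \eqref{eq:i}. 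I expect the only genuine obstacle to be justifying that recording the \emph{least} $j$ (equivalently the longest prefix extension) loses no information: if two indices $j<j'$ both force equality, the prefix of size $\geq i-j+2$ attached to $j$ contains the prefix of size $\geq i-j'+2$ attached to $j'$, as in the nested word $a_j\cdots a_{j'-1}(a_{j'}\cdots a_i b_{i+1}\cdots)$ discussed before the statement. I would make this precise by observing that the recursion of Lemma~\ref{lem:prefix_recursive_relation} run on the outer prefix simultaneously enforces the constraints of the inner one, so that $\mathbf{T}_i$ may legitimately store only the single minimal value.
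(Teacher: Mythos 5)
Your proposal is correct and follows essentially the same route as the paper, whose justification for this proposition is exactly the discussion of Section~\ref{ss:general_case}: items (a)--(c) are read off \eqref{eq:prefix}, \eqref{eq:i} and \eqref{eq:i_j_equal}, and the dichotomy (d)/(e) is settled via Definition~\ref{def:prefix_letter_vector}, Lemma~\ref{lem:prefix_recursive_relation}, and the nesting argument that the minimal $j$ (longest extended prefix) subsumes all simultaneously extended shorter ones, including the prefix of size $\geq 1$ arising from \eqref{eq:i}. Your only deviation is cosmetic and slightly more rigorous: you derive the strict inequality in (a) from the $j=1$ instance of \eqref{eq:i_j_equal} combined with $b_{i+1}\neq a_{i+1}$, whereas the paper simply asserts \eqref{eq:prefix} with a footnote.
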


Since conditions $(b)$ and $(c)$ must apply simultaneously (when applicable), they should be reduced to the most stringent; let us define
\begin{equation}\label{eq:a_max}
    a_{\max}(i) = \max\lbrace a \in \Sigma : \mathbf{T}_i(a) \neq 0 \rbrace ;
\end{equation}
then $(b)$ and $(c)$ collapse into a single condition, that is : $b_{i+1}\geq a_{\max}(i)$. Note that $a_{\max}(i)$ is well defined since $a_1$ belongs to the set (coming from $(b)$). Besides, $a_{\max}(i)$ can be computed at the same time as computing the autocorrelation matrix, see Appendix~\ref{app:computation_autocorrelation_matrix}.

In the end, it comes down to choosing one of these two options for $b_{i+1}$: 
\begin{itemize}
    \item $b_{i+1} = a_{\max}(i)$ --- available only if $a_{\max}(i)>a_{i+1}$. In this case, we have $\mathbf{T}_i(b_{i+1})\neq 0$ and we fall upon condition $(e)$;
    \item $b_{i+1}>a_{\max}(i)$ and $b_{i+1}>a_{i+1}$. By definition of $a_{\max}(i)$, we must have $\mathbf{T}_i(b_{i+1})=0$ and we fall upon condition $(d)$. There are $\min(\varphi_>(a_{\max}(i)), \varphi_>(a_{i+1}))$ choices for $b_{i+1}$ with this option.
\end{itemize}

\begin{example}\label{ex:a_max}
Building upon Example~\ref{ex:prefix-letter-vectors}, for $w_1=\ch{ACACAA}$ and $w_2 = \ch{ACACAC}$, denoting for convenience $\star$ instead of $\min(\varphi_>(a_{\max}(i)), \varphi_>(a_{i+1}))$ and $\ast$ instead of $a_{\max(i)}>a_{i+1}$; we have :
\begin{center}
 \begin{minipage}[c]{0.45\textwidth}
\centering
\begin{tabular}{c|c|cccccc}
  \multirow{5}{*}{$w_1$} & $i$ & $1$ & $2$ & $3$ & $4$ & $5$ & $6$\\
  &  $a_{\max}(i)$ & \ch{A} & \ch{A} & \ch{C} & \ch{A} &  \ch{C} & \ch{C}\\
  & $a_{i+1}$ & \ch{C} & \ch{A} & \ch{C}&\ch{A}&\ch{A} & $\varepsilon$\\
  & $\ast$& \xmark & \xmark & \xmark & \xmark &\checkmark & \checkmark \\
  & $\star$& $2$ & $3$ & $2$ & $3$ & $2$ & $2$
\end{tabular}
\end{minipage}~
\begin{minipage}[c]{0.45\textwidth}
\centering
\begin{tabular}{c|c|cccccc}
  \multirow{5}{*}{$w_2$} & $i$ & $1$ & $2$ & $3$ & $4$ & $5$ & $6$\\
  &  $a_{\max}(i)$ & \ch{A} & \ch{A} & \ch{C} & \ch{A} &  \ch{C} & \ch{A}\\
    & $a_{i+1}$ & \ch{C} & \ch{A} & \ch{C}&\ch{A}&\ch{C} & $\varepsilon$\\
      & $\ast$& \xmark & \xmark & \xmark & \xmark &\xmark & \checkmark \\
      & $\star$& $2$ & $3$ & $2$ & $3$ & $2$ & $3$
\end{tabular}
\end{minipage}  
\end{center}
\end{example}

We can now derive our main result for this section.
 
\begin{theorem}\label{th:antemer_general}
For all $0<i<\min(i^\ast,\alpha)$,
\begin{align*}
    A_i(\alpha)&= \min(\varphi_>(a_{\max}(i)), \varphi_>(a_{i+1})) \cdot A(\alpha-(i+1))\\
    &+ [a_{\max}(i)>a_{i+1}] \cdot \sum_{i' = i-\mathbf{T}_i(a_{\max}(i))+2}^{i_{\max}(i) -1} A_{i'}(\alpha-\mathbf{T}_i(a_{\max}(i))+1).
\end{align*}
\end{theorem}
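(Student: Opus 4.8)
The plan is to decompose $A_i(\alpha)$ according to the value of the prefix-breaking letter $b_{i+1}$, relying entirely on the case analysis already assembled in Proposition~\ref{prop:conditions_on_next_letter}. Recall that an $\alpha$-antemer sharing with $w$ a prefix of size exactly $i$ has the form $x = a_1\cdots a_i b_{i+1}\cdots b_\alpha$, and that conditions $(a)$--$(c)$ collapse into the single requirement $b_{i+1}\geq a_{\max}(i)$ together with $b_{i+1} > a_{i+1}$. First I would observe that these constraints partition the admissible letters into exactly two mutually exclusive, exhaustive families: $b_{i+1} > a_{\max}(i)$, and $b_{i+1} = a_{\max}(i)$ (the latter being admissible only when $a_{\max}(i) > a_{i+1}$, since otherwise $b_{i+1}=a_{\max}(i)\leq a_{i+1}$ would violate the prefix condition \eqref{eq:prefix}). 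The two summands of the theorem correspond precisely to these two families.

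For the first family I would count the letters strictly greater than both $a_{\max}(i)$ and $a_{i+1}$; since $\varphi_>$ is decreasing, this number is $\min(\varphi_>(a_{\max}(i)),\varphi_>(a_{i+1}))$. By definition of $a_{\max}(i)$ in \eqref{eq:a_max}, any such letter satisfies $\mathbf{T}_i(b_{i+1})=0$, so condition $(d)$ applies: no constraint carries over to $b_{i+2},\dots,b_\alpha$, which therefore form an arbitrary antemer of length $\alpha-(i+1)$. This yields $A(\alpha-(i+1))$ completions per choice of $b_{i+1}$, giving the first term.

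For the second family, the indicator $[a_{\max}(i)>a_{i+1}]$ records its availability. Here $\mathbf{T}_i(b_{i+1})=\mathbf{T}_i(a_{\max}(i))\neq 0$, so condition $(e)$, via Lemma~\ref{lem:prefix_recursive_relation} and Definition~\ref{def:prefix_letter_vector}, applies: writing $t=\mathbf{T}_i(a_{\max}(i))$, the completions are in bijection with $(\alpha-t+1)$-antemers sharing with $w$ a prefix of size \emph{at least} $i-t+2$. Summing over the possible exact prefix sizes $i'$, and using that $A_{i'}\equiv 0$ for $i'\geq i_{\max}(w)$ by \eqref{eq:antemer_final}, this count equals $\sum_{i'=i-t+2}^{i_{\max}(w)-1} A_{i'}(\alpha-t+1)$, which is the second term. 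Adding the two contributions gives the stated formula.

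The main obstacle is justifying the bijection underlying the second family without double-counting. The delicate point is that fixing $b_{i+1}=a_{\max}(i)$ may simultaneously extend several prefixes of $w$ — one for each $j$ with $\mathbf{R}_{i,j}^=$ and $a_{i-j+2}=a_{\max}(i)$. I would lean on the argument already made in the discussion preceding Definition~\ref{def:prefix_letter_vector}: recursively treating the \emph{longest} induced prefix (equivalently, the smallest such $j$, which is exactly the value selected by $\mathbf{T}_i$) subsumes every shorter one, so the recursive case is counted once and exactly once. Concretely, deleting the first $t-1$ letters $a_1\cdots a_{t-1}$ of $x$ is the required map: injectivity is immediate since those letters are fixed, while surjectivity follows because $\mathbf{R}_{i,t}^=$ guarantees that prepending $a_1\cdots a_{t-1}$ to any admissible shorter antemer reproduces a valid $\alpha$-antemer with prefix exactly $i$ and $(i{+}1)$-th letter $a_{\max}(i)$, every newly created $m$-mer being already controlled by the antemer condition on the shorter word. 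The degenerate sub-case $t=i+1$ (i.e. $a_{\max}(i)=a_1$ with no extending $j$) is handled identically, the role of $\mathbf{R}_{i,t}^=$ being played there by the constraint \eqref{eq:i}.
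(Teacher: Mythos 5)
Your proof is correct and follows essentially the same route as the paper, which derives Theorem~\ref{th:antemer_general} directly from the case split on $b_{i+1}$ in Proposition~\ref{prop:conditions_on_next_letter} (either $b_{i+1}>a_{\max}(i)$ with $\mathbf{T}_i(b_{i+1})=0$, or $b_{i+1}=a_{\max}(i)$ when $a_{\max}(i)>a_{i+1}$, invoking Lemma~\ref{lem:prefix_recursive_relation} and the ``longest extended prefix subsumes the shorter ones'' discussion). Your explicit verification of the bijection in the second family --- deleting the first $\mathbf{T}_i(a_{\max}(i))-1$ letters, with injectivity and surjectivity checked --- makes rigorous a step the paper leaves implicit in the discussion preceding Definition~\ref{def:prefix_letter_vector}, but it is the same argument, not a different one.
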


\begin{example}\label{ex:antermer_formulas}
Building upon Examples~\ref{ex:edge_cases},~\ref{ex:prefix-letter-vectors} and~\ref{ex:a_max}; as well as Proposition~\ref{prop:antemer_0}.

\noindent
For $w_1=\ch{ACACAA}$, we have :
\begin{align*}
A_0(\alpha) &= 3 \cdot A(\alpha-1)\\
A_1(\alpha) &= 2 \cdot A(\alpha-2)\\
A_2(\alpha) &= 3 \cdot A(\alpha-3)\\
A_3(\alpha) &= 2 \cdot A(\alpha-4)\\
A_4(\alpha) &= 3 \cdot A(\alpha-5)\\
A_5(\alpha) &= 2 \cdot A(\alpha-6) + A_4(\alpha-2) + A_5(\alpha-2)
\end{align*}
with $A(0)=1$; $A_1(1)=A_3(3)=A_5(5)=0$ and $A_2(2)=A_4(4)=1$.

\medskip
\noindent
For $w_2 = \ch{ACACAC}$, we have :
\begin{align*}
A_0(\alpha) &= 3 \cdot A(\alpha-1)\\
A_1(\alpha) &= 2 \cdot A(\alpha-2)\\
A_2(\alpha) &= 3 \cdot A(\alpha-3)\\
A_3(\alpha) &= 2 \cdot A(\alpha-4)\\
A_4(\alpha) &= 3 \cdot A(\alpha-5)\\
A_5(\alpha) &= 2 \cdot A(\alpha-6)
\end{align*}
with $A(0)=1$; and $A_i(i)=0$ for all $1\leq i\leq 5$.
\end{example}

\subsection{Computation and bounds}\label{subsec:antemer_bounds}

Using Propositions~\ref{prop:antemer_0} and~\ref{prop:antemer_alpha}, as well as Theorem~\ref{th:antemer_general}, and \eqref{eq:antemer_final}, we have.

\begin{proposition}\label{prop:antemer_computation}
$A(\alpha)$ can be computed with dynamic programming, with time complexity $O(\alpha\cdot |w|^2)$ and space complexity $O(\alpha\cdot |w|)$.
\end{proposition}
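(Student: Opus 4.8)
The plan is to turn the recurrences of this section into a two–dimensional dynamic–programming table and to bound the cost of filling it. Writing $m=|w|$ and recalling that $i_{\max}(w)\le m$, the table will hold the values $A_i(\alpha')$ for all $0\le i\le i_{\max}(w)-1$ and all $0\le \alpha'\le \alpha$, together with the row totals $A(\alpha')$ supplied by \eqref{eq:antemer_final}. Since $i$ takes at most $i_{\max}(w)\le m$ values and $\alpha'$ takes $\alpha+1$ values, the table has $O(\alpha\cdot m)$ cells, which yields the space bound; the auxiliary data needed to evaluate the recurrences --- namely $\varphi_>$, the scalars $a_{\max}(i)$ and $\mathbf{T}_i(a_{\max}(i))$, the diagonal values $A_i(i)$ of Proposition~\ref{prop:antemer_alpha}, and $i_{\max}(w)$ --- can all be precomputed from the autocorrelation matrix in $O(m^2)$ time and retained in $O(m)$ space (for a fixed alphabet), as detailed in Appendix~\ref{app:computation_autocorrelation_matrix}.

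First I would establish that a single forward sweep over $\alpha'$ is legitimate, i.e. that the recurrences are well founded. The crucial point is that every entry $A_i(\alpha')$ is expressed, through Proposition~\ref{prop:antemer_0}, Proposition~\ref{prop:antemer_alpha} and Theorem~\ref{th:antemer_general}, solely in terms of quantities whose first argument is \emph{strictly} smaller than $\alpha'$. Indeed, Proposition~\ref{prop:antemer_0} invokes $A(\alpha'-1)$; and for $0<i<\min(i_{\max}(w),\alpha')$ Theorem~\ref{th:antemer_general} invokes $A(\alpha'-(i+1))$, whose decrement $i+1\ge 2$ is positive, together with the entries $A_{i'}(\alpha'-\mathbf{T}_i(a_{\max}(i))+1)$, whose decrement $\mathbf{T}_i(a_{\max}(i))-1\ge 1$ is positive because, by Definition~\ref{def:prefix_letter_vector}, a nonzero value of $\mathbf{T}_i$ is always at least $2$. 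Hence no cell depends on another cell of the same row $\alpha'$, so it is well defined to process $\alpha'$ from $0$ upward --- seeding the base cases $A(0)=1$ and $A_i(\alpha')=0$ for $i>\alpha'$ --- computing all $A_i(\alpha')$ in a row and then reading off $A(\alpha')$ from \eqref{eq:antemer_final}.

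Next I would bound the per-cell work. The edge cell $A_0(\alpha')$ costs $O(1)$ once $A(\alpha'-1)$ is available; each diagonal cell $A_i(i)$ is the product of $i\le m$ factors from Proposition~\ref{prop:antemer_alpha}, hence $O(m)$, but these are precomputed; and a general cell $A_i(\alpha')$ from Theorem~\ref{th:antemer_general} costs $O(m)$, being one multiplicative term plus a sum over a range of at most $i_{\max}(w)-1\le m-1$ indices. Assembling a row total $A(\alpha')$ via \eqref{eq:antemer_final} likewise costs $O(m)$. Multiplying the $O(\alpha\cdot m)$ cells by the $O(m)$ worst-case per-cell cost gives total time $O(\alpha\cdot m^2)$, which together with the $O(\alpha\cdot m)$ table size proves the claim.

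The step I expect to require the most care --- and the only genuinely non-routine one --- is the well-foundedness argument of the second paragraph: one must confirm that no right-hand side of a recurrence ever refers to a cell in the current or a later row. This reduces exactly to verifying that every decrement on those right-hand sides is strictly positive, which follows from $i+1\ge 2$ and from the bound $\mathbf{T}_i(a_{\max}(i))\ge 2$ read off from Definition~\ref{def:prefix_letter_vector} (its ``otherwise'' branch also returns the value $i+1\ge 2$). Once this acyclicity is secured, the forward sweep is manifestly correct and the complexity accounting above is immediate.
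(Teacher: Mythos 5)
Your proposal is correct and follows essentially the same argument as the paper, whose proof is simply the count of $O(\alpha\cdot(i_{\max}(w)+1))=O(\alpha\cdot|w|)$ table entries times an $O(i_{\max}(w))$ per-entry cost. Your added verification of well-foundedness --- that every right-hand side decrements $\alpha'$ strictly, via $i+1\geq 2$ and $\mathbf{T}_i(a_{\max}(i))\geq 2$ from Definition~\ref{def:prefix_letter_vector} --- is a sound elaboration of a step the paper leaves implicit, not a different route.
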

\begin{proof}
There are at most $\alpha\cdot (i_{\max}(w)+1) = O(\alpha\cdot |w|)$ values to compute, each one requiring at most $O(i_{\max}(w))$ time.
\end{proof}

\begin{example}\label{ex:antemers_values}
Using the formulas of Example~\ref{ex:antermer_formulas}, one can compute the first values of $A(\alpha)$.

\medskip
\noindent
For $w_1=\ch{ACACAA}$, we have:
\begin{center}
\small
\begin{tabular}{cr|ccccccccccc}
\multicolumn{2}{c|}{\multirow{2}{*}{\normalsize$A_i(\alpha)$}} &  \multicolumn{11}{c}{\normalsize$\alpha$} \\
	\multicolumn{2}{c|}{} & $0$ & $1$ & $2$ &$3$ & $4$ &$5$ & $6$&$7$&$8$&$9$&$10$\\
	\hline
\multirow{6}{*}{\normalsize$i $}	&$0$& $\mathbf{1}$& $3$ & $9$&$36$&$135$&$513$&$1\,944$&$7\,371$&$27\,945$&$105\,948$&$401\,679$\\
	&$1$&. &$\mathbf{0}$ & $2$&$6$&$24$&$90$&$342$&$1\,296$&$4\,914$&$18\,630$&$70\,632$\\
 &$2$& .&.& $\mathbf{1}$ &$3$&$9$&$36$&$135$&$513$&$1\,944$&$7\,371$&$27\,945$\\
 &$3$& .&.&.  &$\mathbf{0}$&$2$&$6$&$24$&$90$&$342$&$1\,296$&$4\,914$\\
 &$4$& .&.& .&.&$\mathbf{1}$&$3$&$9$&$36$&$135$&$513$&$1\,944$\\
 &$5$& .&.&  .&.&.&$\mathbf{0}$&$3$&$9$&$36$&$135$&$513$\\
 \hline
 \multicolumn{2}{c|}{\normalsize$A(\alpha)$}& $1$& $3$ & $12$&$45$&$171$&$648$&$2\,457$&$9\,315$&$35\,316$&$133\,893$&$507\,627$
\end{tabular}
\end{center}

\medskip
\noindent
For $w_2=\ch{ACACAC}$, we have:
\begin{center}
\small
\begin{tabular}{cr|ccccccccccc}
\multicolumn{2}{c|}{\multirow{2}{*}{\normalsize$A_i(\alpha)$}} &  \multicolumn{11}{c}{\normalsize$\alpha$} \\
	\multicolumn{2}{c|}{} & $0$ & $1$ & $2$ &$3$ & $4$ &$5$ & $6$&$7$&$8$&$9$&$10$\\
	\hline
\multirow{6}{*}{\normalsize$i $}	&$0$& $\mathbf{1}$& $3$& $9$& $33$& $126$& $477$&$1\,809$& $6\,858$& $25\,992$& $98\,517$& $373\,410$\\
	&$1$& .& $\mathbf{0}$ & $2$ & $6$& $22$& $84$& $318$&$1\,206$&$4\,572$&$17\,328$&$65\,678$\\
 &$2$&.& .& $\mathbf{0}$ & $3$&$9$&$33$&$126$&$477$&$1\,809$&$6\,858$&$25\,992$\\
 &$3$&.& .& .& $\mathbf{0}$ & $2$&$6$&$22$&$84$&$318$&$1\,206$&$4\,572$ \\
 &$4$& .& .& .& .& $\mathbf{0}$ & $3$&$9$&$33$&$126$&$477$&$1\,809$ \\
 &$5$&.& .& .& .& .& $\mathbf{0}$ & $2$&$6$&$22$&$84$&$318$\\
 \hline
 \multicolumn{2}{c|}{\normalsize$A(\alpha)$}& $1$& $3$ & $11$& $42$ & $159$& $603$& $2\,286$& $8\,664$& $32\,839$& $124\,470$& $471\,779$
\end{tabular}
\end{center}
\end{example}

Note that the only recursive formula involving $A_i$'s is that of Theorem~\ref{th:antemer_general}. Remark that

\begin{alignat*}{3}
& 0\, &\leq\sum_{i' = i-\mathbf{T}_i(a_{\max}(i))+2}^{i_{\max}(i) -1} A_{i'}(\alpha-\mathbf{T}_i(a_{\max}(i))+1) &\leq \sum_{i'=1}^{i_{\max}(w)-1} A_{i'}(\alpha-\mathbf{T}_i(a_{\max}(i))+1) \\
&&&= A(\alpha-\mathbf{T}_i(a_{\max}(i))+1) - A_0(\alpha-\mathbf{T}_i(a_{\max}(i))+1)\\
&&&= A(\alpha-\mathbf{T}_i(a_{\max}(i))+1) - \varphi_>(a_1) \cdot A(\alpha-\mathbf{T}_i(a_{\max}(i)));
\end{alignat*}
where the last line is obtained using Proposition~\ref{prop:antemer_0}.  As a result, we can easily calculate upper and lower bounds on $A_i(\alpha)$, by adding $\min(\varphi_>(a_{\max}(i)), \varphi_>(a_{i+1})) \cdot A(\alpha-(i+1))$ to each term of the above inequality. Recall from Equation~\eqref{eq:antemer_final} that $A(\alpha) = \sum_{i=0}^{i_{\max}(w)-1} A_i(\alpha)$; this lead us to define the following two sequences $A^-(\alpha)$ and $A^+(\alpha)$.
\begin{definition}\label{def:antemer_bounds}
Let  $A^-(0) = A^+(0)= 1$ and
\begin{align*}
A^-(\alpha) &= \varphi_>(a_1)\cdot A^-(\alpha-1) + \sum_{i=1}^{i_{\max}(w)-1}  \min(\varphi_>(a_{\max}(i)), \varphi_>(a_{i+1})) \cdot A^-(\alpha-(i+1))\\
A^+(\alpha) &= \varphi_>(a_1)\cdot A^+(\alpha-1) + \sum_{i=1}^{i_{\max}(w)-1} \Big( \min(\varphi_>(a_{\max}(i)), \varphi_>(a_{i+1})) \cdot A^+(\alpha-(i+1))\\&+[a_{\max}(i)>a_{i+1}] \cdot  \big(A^+(\alpha-\mathbf{T}_i(a_{\max}(i))+1) - \varphi_>(a_1) \cdot A^+(\alpha-\mathbf{T}_i(a_{\max}(i)))\big)\Big) \\
\end{align*}
\end{definition}

By construction, we have the following result.

\begin{proposition}\label{prop:antemer_bounds}
$\forall \alpha\geq 0, A^-(\alpha) \leq A(\alpha) \leq A^+(\alpha)$.  Furthermore, both $A^-(\alpha)$ and $A^+(\alpha)$ can be computed with dynamic programming in $O(\alpha)$ space and $O(\alpha\cdot |w|)$ time.
\end{proposition}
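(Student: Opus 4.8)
The claim has two parts: (i) the inequalities $A^-(\alpha) \leq A(\alpha) \leq A^+(\alpha)$ for all $\alpha \geq 0$, and (ii) the complexity bounds for computing the two sequences. The proof of (i) proceeds by strong induction on $\alpha$; the proof of (ii) is a direct reading of the defining recurrences.

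Let me sketch the proof plan.

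For part (i), strong induction on $\alpha$. Base case: $\alpha=0$ holds by definition, since $A^-(0)=A(0)=A^+(0)=1$. For the inductive step, assume $A^-(\alpha') \leq A(\alpha') \leq A^+(\alpha')$ for all $\alpha' < \alpha$. The key is to decompose $A(\alpha) = \sum_{i=0}^{i_{\max}(w)-1} A_i(\alpha)$ via Equation~\eqref{eq:antemer_final}, and compare term by term. The $i=0$ term equals $\varphi_>(a_1)\cdot A(\alpha-1)$ by Proposition~\ref{prop:antemer_0}, which is flanked by $\varphi_>(a_1)\cdot A^-(\alpha-1)$ and $\varphi_>(a_1)\cdot A^+(\alpha-1)$ using the induction hypothesis. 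For each $1 \leq i \leq i_{\max}(w)-1$, I use the displayed chain of inequalities immediately preceding Definition~\ref{def:antemer_bounds}, which bounds the summation term of Theorem~\ref{th:antemer_general} between $0$ and $A(\alpha-\mathbf{T}_i(a_{\max}(i))+1) - \varphi_>(a_1)\cdot A(\alpha-\mathbf{T}_i(a_{\max}(i)))$. Adding the common leading term $\min(\varphi_>(a_{\max}(i)), \varphi_>(a_{i+1}))\cdot A(\alpha-(i+1))$ to all three parts yields a two-sided bound on $A_i(\alpha)$. Summing these bounds over $i$ and applying the induction hypothesis to each $A(\alpha-\cdot)$ appearing recovers exactly the defining recurrences of $A^-(\alpha)$ and $A^+(\alpha)$, closing the induction.

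The step requiring the most care is matching the recurrences precisely. For the lower bound, dropping the nonnegative summation term entirely gives $A_i(\alpha) \geq \min(\varphi_>(a_{\max}(i)), \varphi_>(a_{i+1}))\cdot A(\alpha-(i+1)) \geq \min(\varphi_>(a_{\max}(i)), \varphi_>(a_{i+1}))\cdot A^-(\alpha-(i+1))$, which sums to $A^-(\alpha)$; this direction is clean since the $[a_{\max}(i)>a_{i+1}]$ contributions are simply discarded. The upper bound is more delicate: one must verify that the induction hypothesis can be applied simultaneously to the positive term $A^+(\alpha-\mathbf{T}_i(a_{\max}(i))+1)$ and to the \emph{subtracted} term $-\varphi_>(a_1)\cdot A^+(\alpha-\mathbf{T}_i(a_{\max}(i)))$ without the monotonicity of $A^+$ breaking the inequality. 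Here I would note that the bracketed quantity $A(\alpha')-\varphi_>(a_1)\cdot A(\alpha'-1) = \sum_{i'=1}^{i_{\max}(w)-1}A_{i'}(\alpha')$ is itself nonnegative (being a sum of nonnegative antemer counts), so replacing $A$ by $A^+$ inside this \emph{difference} is justified because the difference is exactly $\sum_{i'\geq 1} A_{i'}$, and bounding that full sum above by its $A^+$-analogue is what the induction supplies — rather than naively upper-bounding the positive part and lower-bounding the subtracted part, which would be the tempting but incorrect route. This is the main obstacle: ensuring the subtraction is handled as a single nonnegative block rather than term by term.

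For part (ii), the complexity follows directly from the shape of the recurrences. Each of $A^-(\alpha)$ and $A^+(\alpha)$ is defined as a sum of $O(i_{\max}(w)) = O(|w|)$ terms, each term being a constant multiple of previously computed values $A^\pm(\alpha-c)$ for some offset $c \leq i_{\max}(w)+1 \leq |w|+1$. Storing the most recent $O(|w|)$ values of the sequence in a sliding window suffices to evaluate each new term, giving $O(\alpha)$ space; since each of the $\alpha$ entries costs $O(|w|)$ arithmetic operations to assemble, the total time is $O(\alpha\cdot |w|)$. I would remark that all the auxiliary quantities $a_{\max}(i)$, $\mathbf{T}_i(a_{\max}(i))$, $\varphi_>(\cdot)$, and $i_{\max}(w)$ are precomputed once alongside the autocorrelation matrix (see Appendix~\ref{app:computation_autocorrelation_matrix}) and do not affect the asymptotic bounds of the dynamic program itself.
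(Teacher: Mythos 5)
Your overall route --- making the paper's laconic ``by construction'' explicit via strong induction on $\alpha$, with the lower bound obtained by discarding the nonnegative tail sum and the complexity read directly off the recurrences --- is the intended one, and your lower-bound and complexity arguments are sound. But the one step you yourself flag as delicate is resolved circularly rather than proved. You write that ``bounding that full sum above by its $A^+$-analogue is what the induction supplies''; it is not. Your stated induction hypothesis is only the pointwise statement $A(\alpha')\leq A^+(\alpha')$ for $\alpha'<\alpha$, and pointwise domination does not imply
\begin{equation*}
A(\alpha'+1)-\varphi_>(a_1)\cdot A(\alpha') \;\leq\; A^+(\alpha'+1)-\varphi_>(a_1)\cdot A^+(\alpha'),
\end{equation*}
because the subtracted terms compare the wrong way: the hypothesis gives $-\varphi_>(a_1)\cdot A(\alpha')\leq-\varphi_>(a_1)\cdot A^-(\alpha')$, not $\leq-\varphi_>(a_1)\cdot A^+(\alpha')$. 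Observing that the left-hand difference equals the nonnegative block $\sum_{i'\geq 1}A_{i'}(\alpha'+1)$ correctly identifies the relevant object, but it does not yield the displayed inequality --- that inequality is exactly what still has to be proved, and nothing in the hypothesis as you state it delivers it.

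The repair is to strengthen the invariant and run a joint induction: prove simultaneously $A(\alpha)\leq A^+(\alpha)$ and $S(\alpha)\leq S^+(\alpha)$, where $S(\alpha)=\sum_{i'=1}^{i_{\max}(w)-1}A_{i'}(\alpha)$ and $S^+(\alpha)=A^+(\alpha)-\varphi_>(a_1)\cdot A^+(\alpha-1)$. Indeed, Definition~\ref{def:antemer_bounds} gives
\begin{equation*}
S^+(\alpha)=\sum_{i\geq 1}\Big(\min\big(\varphi_>(a_{\max}(i)),\varphi_>(a_{i+1})\big)\cdot A^+(\alpha-(i+1))+[a_{\max}(i)>a_{i+1}]\cdot S^+\big(\alpha-\mathbf{T}_i(a_{\max}(i))+1\big)\Big),
\end{equation*}
where every argument on the right is strictly smaller than $\alpha$ since $\mathbf{T}_i(a_{\max}(i))\geq 2$; combining Theorem~\ref{th:antemer_general} with the paper's chain of inequalities bounds $S(\alpha)$ by the same expression with $A,S$ in place of $A^+,S^+$, so both halves of the strengthened hypothesis propagate, and then $A(\alpha)=\varphi_>(a_1)\cdot A(\alpha-1)+S(\alpha)$ (Proposition~\ref{prop:antemer_0} together with \eqref{eq:antemer_final}) closes the step. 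Two smaller loose ends in your plan: the base of the induction is not only $\alpha=0$, because Theorem~\ref{th:antemer_general} covers $0<i<\min(i_{\max}(w),\alpha)$ only, so the boundary contributions $A_i(i)\in\lbrace 0,1\rbrace$ of Proposition~\ref{prop:antemer_alpha} (and the implicit convention $A^\pm(\alpha)=0$ for $\alpha<0$) must be checked against the $A^+$ recurrence separately --- when $A_i(i)=1$ one has to verify that a correction term of the right offset is present in $A^+$ to cover it; and calling the route that upper-bounds the positive part while lower-bounding the subtracted part ``incorrect'' is inaccurate --- it yields a perfectly valid bound $A^+(\cdot+1)-\varphi_>(a_1)\cdot A^-(\cdot)$, merely one that does not match the recurrence of Definition~\ref{def:antemer_bounds}.
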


One can notice that if $[a_{\max}(i)>a_{i+1}]$ were to be $0$ for all values of $i$, then $A^-(\alpha) = A(\alpha) = A^+(\alpha)$.

\begin{example}
For $w_1=\ch{ACACAA}$, from Example~\ref{ex:antermer_formulas}, we can bound $A_5(\alpha)$ with
$$ 2\cdot A(\alpha-6)\leq A_5(\alpha) \leq 2\cdot A(\alpha-6) + A(\alpha-2) - 3\cdot A(\alpha-3);$$
which leads to the following values of $A^-(\alpha)$ and $A^+(\alpha)$:
\begin{center}
\small
\begin{tabular}{c|ccccccccccc}
$\alpha$ & $0$ & $1$ & $2$ &$3$ & $4$ &$5$ & $6$&$7$&$8$&$9$&$10$\\
	\hline
	\normalsize$A^+(\alpha)$& $1$ & $3$ & $12$ & $45$ & $173$ & $663$ & $2\,543$ & $9\,750$ & $37\,384$ & $143\,337$ & $549\,584$\\
\normalsize$A(\alpha)$&  $1$& $3$ & $12$&$45$&$171$&$648$&$2\,457$&$9\,315$&$35\,316$&$133\,893$&$507\,627$\\
\normalsize$A^-(\alpha)$& $1$ & $3$ & $11$ & $42$ & $159$ & $603$ & $2\,286$ & $8\,664$ & $32\,839$ & $124\,470$ & $471\,779$\\
\end{tabular}
\end{center}
Note that the row of $A^-$ corresponds exactly to the values we have for $A$ with $w_2 = \ch{ACACAC}$. Indeed, by looking at the system in Example~\ref{ex:antermer_formulas}, the system corresponding to $w_2$ is exactly the one used for the lower bound for $w_1$. Speaking of $w_2=\ch{ACACAC}$, notice that $A^-=A=A^+$ for $w_2$ since no $A_i$'s are involved in its equation system.
\end{example}

\section{Computing $P(\beta)$}\label{sec:postmers}

Recall that 
\begin{equation}
P(\beta) = \sum_{i=0}^m P_i(\beta).\tag{\ref{eq:postmer_def}}
\end{equation}
In this section, we are interested in finding recurrence relationships between the values of $P(\beta)$ and $P_i(\beta)$.  With Proposition~\ref{prop:key_result_matrix}, we take the case where $\beta$ is such that $P(\beta)\neq 0$. In other words, for all $1\leq j\leq i$, $\mathbf{R}_{i,j} \in \lbrace =,>\rbrace$.

\begin{example}
In Example~\ref{ex:beta_max_postmer}, we saw that for $w_1=\ch{ACACAA}$, $P_m(\beta+m)=0$ as soon as $\beta \geq 2$; for this reason in this section, we shall consider as an example only the word $w_2=\ch{ACACAC}$ --- for which there are no restrictions on the value of $P_m(\beta+m)$.
\end{example}

\subsection{First values}\label{subsec:postmer_first_values}

We are interested here in the first values of $\beta$, namely when $0<\beta \leq m$. We proceed by disjunction of cases.

\paragraph{Case $0<\beta \leq m-1$.} Here, we build $\beta$-postmers of of length strictly less than $|w|$: the condition that all $m$-mers are $\geq w$ is therefore trivially verified since there are no $m$-mer to consider at all. It follows that $P(\beta) = |\Sigma|^\beta$. 

However, let us take a moment to detail the computation of $P_i(\beta)$.

For $i=0$, the first letter of any $\beta$-postmer must be $\neq a_1$. It follows that $P_0(\beta) = \left(|\Sigma|-1\right) \cdot P(\beta-1)$. For $i=\beta$, there is only one $i$-postmer sharing with $w$ a prefix of size $i$, therefore $P_{\beta}(\beta)=1$.

Now, for $0<i<\beta$. We consider $\beta$-postmers of the form $x=a_1\cdots a_i b_{i+1}\cdots b_\beta$. Since $i$ is the size of the prefix, it follows that $b_{i+1}\neq a_{i+1}$ --- leaving $|\Sigma|-1$ choices for $b_{i+1}$. Let us denote $a$ the chosen letter. If there exists $j\geq 2$ so that $\mathbf{R}_{i,j}^=$, i.e. $a_j\cdots a_i = a_1\cdots a_{i-j+1}$, and $a_{i-j+2}=a$, we reconnect with the discussion in Section~\ref{ss:general_case}, leading us to recursively consider $(\beta-j+1)$-postmers sharing with $w$ a prefix of size at least $i-j+2$. If there exist $j,j'$ satisfying the precedent condition, the conclusion is identical and lead to only consider the recursive case corresponding to $\min(j,j')$: we retrieve $\mathbf{T}_i$ of Definition~\ref{def:prefix_letter_vector}.

The final conditions on $b_{i+1}$ are reminiscent of the ones of Proposition~\ref{prop:conditions_on_next_letter} :

\begin{enumerate}[label=(\alph*')]
    \item $b_{i+1} \neq a_{i+1}$ --- since $x$ share with $w$ a prefix of size exactly $i$;
    \item if $\mathbf{T}_i(b_{i+1})=0$, no further constraint is imposed on the subsequent letters $b_{i+2},\dots$;
    \item on the contrary, if $\mathbf{T}_i(b_{i+1})\neq 0$, this lead to recursively consider $(\beta - \mathbf{T}_i(b_{i+1})+1)$-postmers sharing with $w$ a prefix of size at least $i-\mathbf{T}_i(b_{i+1})+2$ --- as per Lemma~\ref{lem:prefix_recursive_relation} and Definition~\ref{def:prefix_letter_vector}.
\end{enumerate}

Finally, we have the following result.

\begin{proposition}\label{prop:recursion_postmer_first_cases}
For $0 <\beta \leq m-1$, we have
$$P_0(\beta) = (|\Sigma-1)\cdot P(\beta-1);$$
and, for $0<i<\beta$, 
$$P_i(\beta) = |\Sigma_{=0}(i)|\cdot P(\beta-(i+1))+\sum_{a\in \Sigma_{\neq 0}(i)} \sum_{i' = i-\mathbf{T}_i(a)+2}^{\beta-\mathbf{T}_i(a)+1} P_{i'}(\beta-\mathbf{T}_i(a)+1)$$
where $\Sigma_{=0}(i) = \lbrace a\in \Sigma : (a\neq a_{i+1}) \wedge (\mathbf{T}_i(a)=0)\rbrace$ and $\Sigma_{\neq 0}(i) = \lbrace a\in\Sigma : (a\neq a_{i+1})\wedge (\mathbf{T}_i(a) \neq 0)\rbrace$; and finally $P_\beta(\beta)=1$.
\end{proposition}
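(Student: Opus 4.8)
The plan is to exploit the regime hypothesis $0<\beta\leq m-1$: a word of length at most $m-1$ contains no $m$-mer, so every such word is vacuously a postmer (cf. Remark~\ref{rmk:postmer_def}) and $P(\gamma)=|\Sigma|^\gamma$ for all $0\leq\gamma\leq m-1$. The statement is therefore purely a bookkeeping identity that re-partitions these words by the exact size $i$ of the prefix they share with $w$, and the whole argument is a disjunction on $i$ together with a case analysis on the $(i+1)$-th letter, reusing the machinery developed for antemers in Section~\ref{ss:general_case}.

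First I would dispatch the two edge cases. For $i=0$, a $\beta$-postmer sharing no prefix with $w$ is any word $b_1\cdots b_\beta$ with $b_1\neq a_1$; there are $|\Sigma|-1$ choices for $b_1$ and, the tail $b_2\cdots b_\beta$ being unconstrained, $P(\beta-1)$ ways to complete it, giving $P_0(\beta)=(|\Sigma|-1)\cdot P(\beta-1)$. For $i=\beta$ the prefix fills the whole word, so the postmer must equal $a_1\cdots a_\beta$ and $P_\beta(\beta)=1$.

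For the general range $0<i<\beta$ I would write an arbitrary such postmer as $x=a_1\cdots a_i b_{i+1}\cdots b_\beta$ and partition the count according to the value $a:=b_{i+1}$, which must satisfy $a\neq a_{i+1}$ for the shared prefix to be exactly $i$. This splits the admissible letters into $\Sigma_{=0}(i)$ and $\Sigma_{\neq 0}(i)$ exactly as stated. When $\mathbf{T}_i(a)=0$ no prefix of $w$ is reopened, so by the no-further-constraint condition listed just above the statement the suffix $b_{i+2}\cdots b_\beta$ is free; each such letter contributes $P(\beta-(i+1))$ completions (all words of that length being postmers here), for a total of $|\Sigma_{=0}(i)|\cdot P(\beta-(i+1))$. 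When $\mathbf{T}_i(a)=j\neq 0$, the equality $\mathbf{R}_{i,j}^=$ together with $a_{i-j+2}=a$ means that $a_j\cdots a_i a=a_1\cdots a_{i-j+1}a$ is itself a prefix of $w$ of length $i-j+2$; deleting the determined initial block $a_1\cdots a_{j-1}$ sends $x$ to the shorter word $y=a_j\cdots a_i a\,b_{i+2}\cdots b_\beta$ of length $\beta-j+1$, which (having length $<m$) is again a postmer and shares with $w$ a prefix of size at least $i-j+2$, exactly as in Lemma~\ref{lem:prefix_recursive_relation}. This deletion is reversible — one prepends $a_1\cdots a_{j-1}$, and $\mathbf{R}_{i,j}^=$ guarantees the reconstructed word has prefix exactly $i$ with $(i+1)$-th letter $a$ — hence it is a bijection onto the set of $(\beta-j+1)$-postmers sharing a prefix of size at least $i-j+2$ with $w$. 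That set is counted by $\sum_{i'=i-j+2}^{\beta-j+1}P_{i'}(\beta-j+1)$, the upper limit being forced by $P_{i'}(\gamma)=0$ for $i'>\gamma$. Summing over $a\in\Sigma_{\neq 0}(i)$ produces the second term.

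The delicate point — and the only place where the specific definition $\mathbf{T}_i(a)=\min\lbrace j:\ldots\rbrace$ is used — is ensuring that each postmer is counted exactly once in the $\Sigma_{\neq 0}$ term. If several indices $j<j'$ satisfy $\mathbf{R}_{i,j}^=$, $\mathbf{R}_{i,j'}^=$ and $a_{i-j+2}=a_{i-j'+2}=a$, then a single letter $a$ simultaneously reopens several prefixes of $w$, and each associated deletion is separately a valid bijection; summing over more than one of them would double count. I would resolve this precisely as in the simultaneous-prefix discussion of Section~\ref{ss:general_case}: retaining only the smallest admissible $j$, namely $j=\mathbf{T}_i(a)$, corresponds to the longest reopened prefix and subsumes every larger $j'$, so it assigns to each letter $a$ a single recursive term and makes the partition by $b_{i+1}$ genuine. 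Combining the two letter-classes yields the displayed recursion for $P_i(\beta)$, and collecting the edge cases completes the proof.
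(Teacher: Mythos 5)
Your proof is correct and follows essentially the same route as the paper: the same disjunction on $i$ (with the edge cases $i=0$ and $i=\beta$), the same case analysis on the letter $b_{i+1}$ split by whether $\mathbf{T}_i(a)$ vanishes, and the same resolution of simultaneously reopened prefixes by keeping only the minimal $j=\mathbf{T}_i(a)$, exactly as in the discussion of Section~\ref{ss:general_case} that the paper invokes. Your only addition is making the recursive step an explicit reversible deletion/prepending bijection (valid here since every word of length $<m$ is vacuously a postmer), which the paper leaves implicit via Lemma~\ref{lem:prefix_recursive_relation}.
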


Note that $\mathbf{T}_i(a)\geq 2$ whenever $\mathbf{T}_i(a)\neq0$, therefore we have $\beta - \mathbf{T}_i(a)+1\leq \beta -1$ --- and the same equation system apply recursively to the $P_i$'s. Note also that $|\Sigma_{=0}(i)|+|\Sigma_{\neq 0}(i)| = |\Sigma|-1$. Both sets can be precomputed --- see Appendix~\ref{app:computation_autocorrelation_matrix}.

\begin{example}
With $w=\ch{ACACAC}$, we have :
\begin{center}
 \begin{tabular}{c|c|cccc|cc}
    $i$ & $a_{i+1}$ & $\mathbf{T}_i(\ch{A})$ & $\mathbf{T}_i(\ch{C})$ & $\mathbf{T}_i(\ch{G})$ & $\mathbf{T}_i(\ch{T})$& $\Sigma_{=0}(i)$ & $\Sigma_{\neq 0}(i)$\\
    \hline
    $1$ & \ch{C}& $2$ &$0$&$0$&$0$&$\lbrace \ch{G},\ch{T}\rbrace$&  $\lbrace \ch{A}\rbrace$\\
    $2$ & \ch{A}& $3$&$0$&$0$&$0$&$\lbrace \ch{C},\ch{G},\ch{T}\rbrace$& $\emptyset$\\
    $3$ & \ch{C}& $4$&$3$&$0$&$0$&$\lbrace \ch{G},\ch{T}\rbrace$& $\lbrace \ch{A}\rbrace$\\
    $4$ & \ch{A}& $3$&$0$&$0$&$0$&$\lbrace \ch{C},\ch{G},\ch{T}\rbrace$& $\emptyset$\\
\end{tabular}   
\end{center}
This leads to the following system of equations, for $1\leq \beta\leq 5$ : $P_i(\beta)=0$ for $i>\beta$ and $P_\beta(\beta)=1$; otherwise 
\begin{align*}
P_{0}(\beta) &= 3\cdot P(\beta-1)\\
P_{1}(\beta) &= 2 \cdot P(\beta-2) + \sum_{i' = 1}^{\beta-1} P_{i'}(\beta-1)\\
P_{2}(\beta) &=3 \cdot P(\beta-3)\\
P_{3}(\beta) &=2 \cdot P(\beta-4) + \sum_{i' = 1}^{\beta-3} P_{i'}(\beta-3)\\
P_{4}(\beta) &=3 \cdot P(\beta-5)
\end{align*}
We obtain the following first values for $P(\beta)$ :
\begin{center}
\begin{tabular}{cr|cccccc:ccccc}
\multicolumn{2}{c|}{\multirow{2}{*}{\normalsize$P_i(\beta)$}} &  \multicolumn{11}{c}{\normalsize$\beta$} \\
	\multicolumn{2}{c|}{} & $0$ & $1$ & $2$ &$3$ & $4$ &$5$ & $6$&$7$&$8$&$9$&$10$\\
	\hline
\multirow{7}{*}{\normalsize$i $}	&$0$& $\mathbf{1}$ & $3$& $12$& $48$ & $192$ &$768$&\\
	&$1$&. &$\mathbf{1}$ & $3$& $12$& $48$ &$192$\\
 &$2$& .&.& $\mathbf{1}$ & $3$ &$12$&$48$\\
 &$3$& .&.&.  &$\mathbf{1}$&$3$&$12$\\
 &$4$& .&.& .&.&$\mathbf{1}$&$3$\\
 &$5$& .&.&  .&.&.&$\mathbf{1}$&\\
 &$6$& .&.&  .&.&.&.&$\mathbf{1}$&\\
 \hline
 \multicolumn{2}{c|}{\normalsize$P(\beta)$}& $1$& $4$ & $16$ & $64$ & $256$ & $1\,024$
\end{tabular}
\end{center}
Note that $P(\beta) = |\Sigma|^\beta$. The dashed bar indicates the limit up to which the above equations are valid.
\end{example}

\paragraph{Case $\beta=m$.} A $m$-postmer $x=b_1\cdots b_m$ contain only one $m$-mer, itself, that must be $\geq w$. Let $0\leq i < m$ be the length of the common prefix between $x$ and $w$. Then $b_{i+1}>a_{i+1}$ and the choice of $b_{i+2},\dots, b_m$ is free; therefore, for all $0\leq i < m$, $P_i(m) = \varphi_>(a_{i+1}) \cdot |\Sigma|^{k-(i+1)}$. Besides, $P_m(m)=1$.

\begin{definition}\label{def:rank}
We denote by $\Phi_>(w)$ the number in base $|\Sigma|$ given by the vector $(\varphi_>(a_1),\dots,\varphi_>(a_m))$.
\end{definition}

From the previous discussion, we have $P(m) = 1 + \Phi_>(w)$.

\begin{remark}\label{rmk:rank}
$\Phi_>(w)$ is also the number of $m$-mer strictly greater than $w$; as well as their rank among all $m$-mers, in decreasing order. $|\Sigma^m|-\Phi_>(w)$ provides the rank in increasing order.
\end{remark}

\begin{example}
With $w_2=\ch{ACACAC}$, we have $\Phi_>(w_2)=3822$, associated to the vector $(3,2,3,2,3,2)$ --- in light of Definition~\ref{def:rank}. We can fill in the column $\beta=6$ of the first values of $P(\beta)$.
\begin{center}
\begin{tabular}{cr|cccccc:c:cccc}
\multicolumn{2}{c|}{\multirow{2}{*}{\normalsize$P_i(\beta)$}} &  \multicolumn{11}{c}{\normalsize$\beta$} \\
	\multicolumn{2}{c|}{} & $0$ & $1$ & $2$ &$3$ & $4$ &$5$ & $6$&$7$&$8$&$9$&$10$\\
	\hline
\multirow{7}{*}{\normalsize$i $}	&$0$& $\mathbf{1}$ & $3$& $12$& $48$ & $192$ &$768$& $3\,072$\\
	&$1$&. &$\mathbf{1}$ & $3$& $12$& $48$ &$192$&$512$\\
 &$2$& .&.& $\mathbf{1}$ & $3$ &$12$&$48$& $192$\\
 &$3$& .&.&.  &$\mathbf{1}$&$3$&$12$&$32$\\
 &$4$& .&.& .&.&$\mathbf{1}$&$3$& $12$\\
 &$5$& .&.&  .&.&.&$\mathbf{1}$& $2$\\
 &$6$& .&.&  .&.&.&.&$\mathbf{1}$\\
 \hline
 \multicolumn{2}{c|}{\normalsize$P(\beta)$}& $1$& $4$ & $16$ & $64$ & $256$ & $1\,024$ & $3\,823$
\end{tabular}
\end{center}
\end{example}

\subsection{Subsequent values}

Now, suppose $\beta >m$. Let $x=b_1\cdots b_\beta$ be a $\beta$-postmer, and let $0\leq i\leq m$ be the size of the common prefix between $x$ and $w$. We proceed by disjunction of cases.

\paragraph{Case $i=0$.} Necessarily, $b_1>a_1$ --- with $\varphi_>(a_1)$ possible choices. Therefore, we have:
\begin{proposition}\label{prop:postmer_0}
$P_0(\beta) = \varphi_>(a_1)\cdot P(\beta-1)$.
\end{proposition}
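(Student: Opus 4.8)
The plan is to exhibit a bijection between the $\beta$-postmers that share no prefix with $w$ and the pairs consisting of a leading letter strictly greater than $a_1$ together with an arbitrary $(\beta-1)$-postmer. First I would write such a postmer as $x = b_1\cdots b_\beta$ and examine its leading $m$-mer $b_1\cdots b_m$, which exists since $\beta > m$. Because $x$ shares a prefix of size exactly $0$ with $w$, we have $b_1 \neq a_1$; and the postmer condition of Definition~\ref{def:postmer} forces $b_1\cdots b_m \geq w$, which together with $b_1 \neq a_1$ rules out $b_1 < a_1$ and therefore pins down $b_1 > a_1$. This is the one place where ``prefix exactly $0$'' (i.e. $b_1\neq a_1$) is upgraded to the strictly stronger $b_1 > a_1$, leaving exactly $\varphi_>(a_1)$ admissible choices for $b_1$.

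Next I would observe that the remaining constraints decouple. Once $b_1 > a_1$ is fixed, the leading $m$-mer satisfies $b_1\cdots b_m > w$ already on its first letter, independently of $b_2,\dots,b_m$. The other $m$-mers of $x$ are precisely the $m$-mers of the suffix $z = b_2\cdots b_\beta$, a word of length $\beta-1 \geq m$. Hence the requirement that every $m$-mer of $x$ be $\geq w$ is equivalent to the requirement that every $m$-mer of $z$ be $\geq w$, i.e. that $z$ be a $(\beta-1)$-postmer. Conversely, for any letter $b_1 > a_1$ and any $(\beta-1)$-postmer $z$, the concatenation $b_1 z$ is a $\beta$-postmer sharing no prefix with $w$, since its first letter exceeds $a_1$ and its subsequent $m$-mers are exactly those of $z$.

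This correspondence is a genuine bijection — both the first letter and the suffix are recovered directly from $x$ — so the count factorises as $P_0(\beta) = \varphi_>(a_1)\cdot P(\beta-1)$, which is consistent with the value already obtained in the edge case $\beta = m$. I do not expect any real obstacle here; the only point demanding care is the upgrade from $b_1\neq a_1$ to $b_1 > a_1$ forced by the leading-$m$-mer constraint, which is precisely what separates postmers from arbitrary words and yields the clean product form.
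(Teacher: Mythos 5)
Your proof is correct and takes essentially the same route as the paper, which justifies the proposition with the one-line observation that the leading $m$-mer constraint forces $b_1 > a_1$, leaving $\varphi_>(a_1)$ choices and an arbitrary $(\beta-1)$-postmer for the suffix. Your elaboration --- the upgrade from $b_1 \neq a_1$ to $b_1 > a_1$ via the first $m$-mer, and the decoupling of the remaining $m$-mers into those of $b_2\cdots b_\beta$ --- is exactly what the paper leaves implicit.
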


\paragraph{Case $1\leq i\leq m$.} We have $x=a_1\cdots a_i b_{i+1}\cdots b_\beta$. Since all $m$-mers of $x$ must be $\geq w$, we retrieve a system of equation similar to the one of Section~\ref{ss:general_case}, namely :

\begin{equation}
a_j\cdots a_i b_{i+1}\cdots b_{m-1+j} \geq a_1\cdots a_m\tag{$i,j$; bis}  \label{eq:i_j_bis}
\end{equation}
and
\begin{equation}
b_{i+1}\cdots b_{i+m} \geq a_1\cdots a_m. \tag{$i$; bis}\label{eq:i_bis}
\end{equation}

However, those equations are not necessarily defined for all values of $1\leq j \leq i$. Indeed, one must have $\beta \geq m-1+j$ for \eqref{eq:i_j_bis} to be defined, and $\beta \geq i+m$ for \eqref{eq:i_bis}. Note that since $j\leq i$, $\beta \geq m+i \implies \beta \geq m-1+j$. Conversely, if there exists $j$ so that $\beta < m-1+j$, then surely $\beta <m+i$.\footnote{From $\beta < m-1+j$ we get $\beta < m-1+i$, i.e. $\beta \leq m+i$. But $\beta = m+i$ leads to $j>i+1$ which is absurd.}

Since $i\leq m < \beta$, there exists at least one letter to choose, $b_{i+1}$. We retrieve similar conclusions to the one of Proposition~\ref{prop:conditions_on_next_letter}, namely: 
\begin{enumerate}[label=(\alph*'')]
    \item $b_{i+1} > a_{i+1}$ (if $i=m$, we use $a_{m+1}=\varepsilon$) --- since \eqref{eq:prefix} also holds;
    \item $b_{i+1} \geq a_1$ (if and only if $\beta \geq m+i$);
    \item $b_{i+1}\geq a_{i-j+2}$ for each $j$ so that $\mathbf{R}_{i,j}^=$ (if and only if $\beta \geq m-1+j$).
\end{enumerate}

Normally, the value of $\mathbf{T}_i(b_{i+1})$ would determine which recursive case to consider; but we must take into account that some equations might not be defined. We adapt Definition~\ref{def:prefix_letter_vector} as follows.

\begin{definition}\label{def:prefix_letter_vector_postmer}
For $1\leq i\leq m$ and $a\in \Sigma$, we define $\widetilde{\mathbf{T}_i}(a,\beta)= \mathbf{T}_i(a) \cdot [\beta \geq m-1 +\mathbf{T}_i(a)]$.
\end{definition}

This definition is consistent with the previous discussion, since the we retrieve that $b_{i+1}\geq a$ for some letter $a$ if and only if $\widetilde{\mathbf{T}_i}(a,\beta)\neq 0$. Note that for $a_1$, in the event that $\mathbf{T}_i(a_1)= i+1$, we retrieve $\beta\geq m-1+\mathbf{T}_i(a_1) = m+i$.

 If $\widetilde{\mathbf{T}_i}(b_{i+1},\beta)=0$, then no further constraint is put on $b_{i+2},\dots$; otherwise we must consider the recursive case of $(\beta - \widetilde{\mathbf{T}_i}(b_{i+1},\beta)+1)$-postmers sharing with $w$ a prefix of size at least $i-\widetilde{\mathbf{T}_i}(b_{i+1},\beta)+2$.
 
 We define, similarly to \eqref{eq:a_max}, the most stringent condition for (b'') and (c''); that is $b_{i+1}\geq \widetilde{a_{\max}}(i,\beta)$ where
 
 \begin{equation}\label{eq:a_max_tilde}
 \widetilde{a_{\max}}(i,\beta) =\max \left(\lbrace \varepsilon \rbrace \cup \left\lbrace a \in \Sigma : \widetilde{\mathbf{T}_i}(a,\beta)\neq 0\right\rbrace\right)
 \end{equation}

Contrary to $a_{\max}(i)$, $a_1$ might not belong to the set used to define $\widetilde{a_{\max}}(i,\beta)$ --- hence the presence of $\varepsilon$ to ensure that $\widetilde{a_{\max}}(i,\beta)$ is well defined. Both the values for $\widetilde{\mathbf{T}_i}(a,\beta)$ and $\widetilde{a_{\max}}(i,\beta)$ can be precomputed --- see Appendix~\ref{app:computation_autocorrelation_matrix}.

In the end, it also comes down to choosing one of these two options for $b_{i+1}$: 
\begin{itemize}
    \item $b_{i+1} = \widetilde{a_{\max}}(i,\beta)$ --- possible only if $\widetilde{a_{\max}}(i,\beta)>a_{i+1}$. In this case, we have $\mathbf{T}_i(b_{i+1})\neq 0$ and we get to the usual recursive cases;
    \item $b_{i+1}>\widetilde{a_{\max}}(i,\beta)$ and $b_{i+1}>a_{i+1}$. By definition of $\widetilde{a_{\max}}(i,\beta)$, we have $\mathbf{T}_i(b_{i+1})=0$ and no further conditions are imposed on the letters $b_{i+2},\dots$. There are $\min( \varphi_>(a_{i+1}),\varphi_>(\widetilde{a_{\max}}(i,\beta)))$ choices for $b_{i+1}$ with this option. Recall that $\varphi_>(\varepsilon)=|\Sigma|$.
\end{itemize}

\begin{example}\label{ex:postmer_prefix_vectors}
With $w_2=\ch{ACACAC}$, $a_1=\ch{A}$ and we have :
\begin{center}
 \begin{tabular}{c|c|cccc|cc}
    $i$ & $a_{i+1}$ & $\widetilde{\mathbf{T}_i}(\ch{A},\beta)$ & $\widetilde{\mathbf{T}_i}(\ch{C},\beta)$ & $\widetilde{\mathbf{T}_i}(\ch{G},\beta)$ & $\widetilde{\mathbf{T}_i}(\ch{T},\beta)$ & $\widetilde{a_{\max}}(i,\beta)$ & $\widetilde{a_{\max}}(i,\beta) > a_{i+1}$ ?\\
    \hline
    $1$ & $\ch{C}$& $2 \cdot [\beta\geq 7]$ & $0$ &$0$&$0$& $\begin{cases} \varepsilon & \text{if } \beta < 7;\\
    \ch{A} & \text{otherwise.}\end{cases}$ & \xmark\\
    \hline
    $2$ & $\ch{A}$ &$3 \cdot [\beta\geq 8]$  &$0$&$0$&$0$& $\begin{cases} \varepsilon & \text{if } \beta < 8;\\
    \ch{A} & \text{otherwise.}\end{cases}$& \xmark\\
    \hline
    $3$ & $\ch{C}$ & $4\cdot [\beta\geq 9]$& $3\cdot [\beta \geq 8]$&$0$&$0$&  $\begin{cases} \varepsilon & \text{if } \beta < 8;\\
    \ch{C} & \text{otherwise.}\end{cases}$ & \xmark\\
    \hline
    $4$ & $\ch{A}$& $3\cdot [\beta\geq 8]$&$0$&$0$&$0$& $\begin{cases} \varepsilon & \text{if } \beta < 8;\\
    \ch{A} & \text{otherwise.}\end{cases}$ & \xmark\\
    \hline
    $5$ & $\ch{C}$ & $6 \cdot [\beta\geq 11]$  & $3\cdot [\beta\geq 8]$ &$0$&$0$&  $\begin{cases} \varepsilon & \text{if } \beta < 8;\\
    \ch{C} & \text{otherwise.}\end{cases}$ & \xmark\\
    \hline
    $6$ & $\varepsilon$ & $3\cdot [\beta \geq 8]$ & $0$& $0$& $0$ & $\begin{cases} \varepsilon & \text{if } \beta < 8;\\
    \ch{A} & \text{otherwise.}\end{cases}$ & $\checkmark \iff \beta \geq 8$
\end{tabular}
\end{center}
Note that $a_{i+1}> \widetilde{a_{\max}}(i,\beta)$ implies that $\min( \varphi_>(a_{i+1}),\varphi_>(\widetilde{a_{\max}}(i,\beta)))=\varphi_>(a_{i+1})$ so the equations associated to $w_2$ will not depend at all on $\widetilde{a_{\max}}(i,\beta)$ except for $i=6$.
\end{example}

Finally, we have proven the following.

\begin{theorem}\label{th:postmer_general}
For $\beta >m$ and all $1\leq i\leq m$,
\begin{align*}
P_i(\beta) &= \min(\varphi_>(a_{i+1}),\varphi_>(\widetilde{a_{\max}}(i,\beta))) \cdot P(\beta-(i+1)) \\
&+ [\widetilde{a_{\max}}(i,\beta)>a_{i+1}] \cdot \sum_{i' = i-\widetilde{\mathbf{T}_i}(\widetilde{a_{\max}}(i,\beta),\beta)+2}^{m} P_{i'}(\beta-\widetilde{\mathbf{T}_i}(\widetilde{a_{\max}}(i,\beta),\beta)+1).
\end{align*}

\end{theorem}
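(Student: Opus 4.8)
The plan is to establish the identity by a disjunction on the pivotal letter $b_{i+1}$ of a $\beta$-postmer $x = a_1\cdots a_i b_{i+1}\cdots b_\beta$ sharing with $w$ a prefix of size exactly $i$, reusing the constraint analysis already carried out above. By conditions (a''), (b'') and (c''), the admissible values of $b_{i+1}$ are exactly those satisfying simultaneously $b_{i+1} > a_{i+1}$ and $b_{i+1} \geq \widetilde{a_{\max}}(i,\beta)$, the latter being the collapse of (b'') and (c'') into their most stringent form via \eqref{eq:a_max_tilde}. Every valid $x$ therefore falls into exactly one of the two options isolated just before the statement, and I would show that each option accounts for one of the two summands.

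First I would treat the \emph{free} option, $b_{i+1} > \max(a_{i+1}, \widetilde{a_{\max}}(i,\beta))$. Because the alphabet is totally ordered, the letters strictly exceeding a fixed letter form a terminal segment, so the set of letters strictly greater than both $a_{i+1}$ and $\widetilde{a_{\max}}(i,\beta)$ has cardinality $\min(\varphi_>(a_{i+1}), \varphi_>(\widetilde{a_{\max}}(i,\beta)))$, recalling $\varphi_>(\varepsilon) = |\Sigma|$ in the event $\widetilde{a_{\max}}(i,\beta) = \varepsilon$. For any such choice we have $\widetilde{\mathbf{T}_i}(b_{i+1},\beta) = 0$ by the maximality in \eqref{eq:a_max_tilde}, so no constraint propagates to $b_{i+2}\cdots b_\beta$, which may then be any $(\beta-(i+1))$-postmer; this yields the factor $P(\beta-(i+1))$ and hence the first summand.

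Next I would treat the \emph{recursive} option, $b_{i+1} = \widetilde{a_{\max}}(i,\beta)$, which is available precisely when $\widetilde{a_{\max}}(i,\beta) > a_{i+1}$, accounting for the Iverson bracket. Here $\widetilde{\mathbf{T}_i}(\widetilde{a_{\max}}(i,\beta),\beta) \neq 0$, so by the postmer analogue of Lemma~\ref{lem:prefix_recursive_relation} --- using $\widetilde{\mathbf{T}_i}$ in place of $\mathbf{T}_i$, as set up in Definition~\ref{def:prefix_letter_vector_postmer} --- fixing this letter reduces the count to that of $(\beta - \widetilde{\mathbf{T}_i}(\widetilde{a_{\max}}(i,\beta),\beta)+1)$-postmers sharing with $w$ a prefix of size \emph{at least} $i - \widetilde{\mathbf{T}_i}(\widetilde{a_{\max}}(i,\beta),\beta)+2$. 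Decomposing that ``at least'' constraint over the exact prefix lengths $i'$ from $i - \widetilde{\mathbf{T}_i}(\widetilde{a_{\max}}(i,\beta),\beta)+2$ up to $m$ gives the sum $\sum_{i'} P_{i'}(\cdots)$, the second summand. The fact that fixing a single letter may extend several prefixes of $w$ at once is harmless, since $\mathbf{T}_i$ records the minimal such $j$ and the longest induced prefix subsumes the shorter ones, exactly as argued in Section~\ref{ss:general_case}.

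The main obstacle is the bookkeeping around $\widetilde{\mathbf{T}_i}$ versus $\mathbf{T}_i$: I must verify that a letter with $\mathbf{T}_i(b_{i+1}) \neq 0$ but $\widetilde{\mathbf{T}_i}(b_{i+1},\beta) = 0$ genuinely imposes no further constraint, because the $m$-mer that would carry the constraint starts at position $\mathbf{T}_i(b_{i+1})$ and ends at position $m-1+\mathbf{T}_i(b_{i+1}) > \beta$, hence overruns $x$ and does not exist. Checking this index bound $\beta \geq m-1+\mathbf{T}_i(a)$ against Definition~\ref{def:prefix_letter_vector_postmer}, together with the boundary convention $a_{m+1}=\varepsilon$ for the case $i=m$, is where care is needed; once it is in place, the two options are mutually exclusive and exhaustive, and summing their contributions yields the claimed formula.
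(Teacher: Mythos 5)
Your proposal is correct and takes essentially the same approach as the paper, whose proof is exactly the discussion preceding the statement: collapse conditions (a''), (b'') and (c'') into $b_{i+1}\geq \widetilde{a_{\max}}(i,\beta)$ and split on the two mutually exclusive options $b_{i+1}>\max\left(a_{i+1},\widetilde{a_{\max}}(i,\beta)\right)$ (giving the free factor $\min(\varphi_>(a_{i+1}),\varphi_>(\widetilde{a_{\max}}(i,\beta)))\cdot P(\beta-(i+1))$) versus $b_{i+1}=\widetilde{a_{\max}}(i,\beta)$ (giving the recursive sum guarded by the Iverson bracket). Your explicit check that a letter with $\mathbf{T}_i(b_{i+1})\neq 0$ but $\widetilde{\mathbf{T}_i}(b_{i+1},\beta)=0$ imposes no constraint --- because the would-be $m$-mer ending at position $m-1+\mathbf{T}_i(b_{i+1})>\beta$ overruns the postmer --- is precisely the rationale the paper encodes in Definition~\ref{def:prefix_letter_vector_postmer}, so it is a welcome making-explicit rather than a deviation.
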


\begin{example}
With $w_2=\ch{ACACAC}$, we have the following system of equations, with $\beta \geq m+1=7$:
\begin{align*}
P_0(\beta) &= 3\cdot P(\beta-1)\\
P_1(\beta)  &= 2 \cdot P(\beta-2) \\
P_2(\beta)  &=3 \cdot P(\beta-3)\\
P_3(\beta)  &=2 \cdot P(\beta-4)\\
P_4(\beta)  &=3 \cdot P(\beta-5)\\
P_5(\beta)  &=2\cdot P(\beta-6)
\end{align*}
As well as $P_6(7)=4\cdot P(0)$; and, for $\beta \geq 8$,
\begin{align*}
P_6(\beta) &= 3\cdot P(\beta-7) + P_{5}(\beta-2)+ P_{6}(\beta-2).
\end{align*}
\end{example}

\subsection{Computation and bounds}\label{ss:bounds_final}

Using the results of Section~\ref{subsec:postmer_first_values}, Proposition~\ref{prop:postmer_0}, as well as Theorem~\ref{th:postmer_general}, and \eqref{eq:postmer_def}, we have the following result.

\begin{proposition}\label{prop:postmer_computation}
$P(\beta)$ can be computed with dynamic programming, with time complexity $O(\beta\cdot |w|^2)$ and space complexity $O(\beta\cdot |w|)$.

\medskip
\noindent
$P_m(\beta+m)$ --- the value we are interested in, remember \eqref{eq:pi_calcul_beta_max}, p.~\pageref{eq:pi_calcul_beta_max} --- is computed in $O(\beta\cdot |w|^2 + |w|^3)$ time and $O(\beta \cdot |w| + |w|^2)$ space.
\end{proposition}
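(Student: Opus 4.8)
The plan is to read both bounds off the recurrence system already assembled in this section, viewed as a dynamic program over the two-dimensional table of quantities $P_i(\beta')$ indexed by $0\le i\le m$ and $0\le \beta'\le\beta$.

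First I would establish the bound for $P(\beta)$. The base cases are supplied by Proposition~\ref{prop:recursion_postmer_first_cases} (for $0<\beta'\le m-1$) and by the closed forms derived for the $\beta'=m$ case, while the recursive cases for $\beta'>m$ are given by Proposition~\ref{prop:postmer_0} and Theorem~\ref{th:postmer_general}; finally $P(\beta')=\sum_{i=0}^m P_i(\beta')$ by \eqref{eq:postmer_def}. The key observation is that every right-hand side references only entries in strictly earlier columns: the offset subtracted from $\beta'$ is at least $1$ for the $P(\cdot)$ terms and, whenever $\widetilde{\mathbf{T}_i}\neq 0$, at least $2$ (as already noted from $\mathbf{T}_i(a)\ge 2$), so the dependency order on columns is acyclic and the table can be filled by increasing $\beta'$. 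I would then count: there are $(m+1)(\beta+1)=O(\beta\cdot|w|)$ entries, and each is a sum of at most $m+1=O(|w|)$ previously computed terms, the remaining ingredients ($\varphi_>$, $\widetilde{a_{\max}}$, $\widetilde{\mathbf{T}_i}$, the sets $\Sigma_{=0}(i),\Sigma_{\neq 0}(i)$, and the relevant powers of $|\Sigma|$) being $O(1)$ lookups once precomputed. This yields $O(\beta\cdot|w|^2)$ time, and storing the whole table costs $O(\beta\cdot|w|)$ space, which is the first claim.

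For the second claim I would simply observe that $P_m(\beta+m)$ is a single entry of exactly this table, sitting in column $\beta+m$. Hence computing it amounts to running the same dynamic program up to column index $\beta+m$, which by the first part costs $O((\beta+m)\cdot|w|^2)$ time and $O((\beta+m)\cdot|w|)$ space. Substituting $m=|w|$ and expanding $(\beta+m)$ gives $O(\beta\cdot|w|^2+|w|^3)$ time and $O(\beta\cdot|w|+|w|^2)$ space. I would add one sentence checking that the one-time precomputation of the autocorrelation matrix $\mathbf{R}$ and of the prefix-letter data fits inside these bounds, since it occupies $O(|w|^2)$ space and is dominated by the $O(|w|^3)$ time term (see Appendix~\ref{app:computation_autocorrelation_matrix}).

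The genuinely delicate point, and the one I would spend most care on, is the uniform per-entry cost together with well-foundedness: one must confirm that across all three regimes ($\beta'<m$, $\beta'=m$, $\beta'>m$) each summation ranges over at most $O(|w|)$ indices $i'$ and that every referenced column is strictly smaller, so that no circular dependency arises and the $O(|w|)$ per-entry bound is never silently violated --- for instance in the $i=m$ instance of Theorem~\ref{th:postmer_general}, where the summation upper limit is $m$ and the column offset $\widetilde{\mathbf{T}_m}(\cdot)$ is either $0$ or at least $2$. Everything else is routine bookkeeping.
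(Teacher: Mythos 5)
Your proposal is correct and matches the paper's own proof, which is exactly the same counting argument stated tersely: $O(\beta\cdot|w|)$ table entries each costing $O(|w|)$ time, then substituting $\beta+m$ for $\beta$ to get the second claim. Your additional checks (well-foundedness of the column dependencies and the $O(m^2)$ preprocessing from Appendix~\ref{app:computation_autocorrelation_matrix} fitting within the stated bounds) are details the paper leaves implicit, not a different route.
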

\begin{proof}
There are at most $\beta\cdot (m+1) = O(\beta\cdot |w|)$  values to compute, each one requiring at most $O(|w|)$ time. Plugging $\beta+m$ in place of $\beta$ give the second result.
\end{proof}

\begin{example}
With $w_2=\ch{ACACAC}$, we have the following values:
\begin{center}
\begin{tabular}{cr|cccccc:c:cccc}
\multicolumn{2}{c|}{\multirow{2}{*}{\normalsize$P_i(\beta)$}} &  \multicolumn{11}{c}{\normalsize$\beta$} \\
	\multicolumn{2}{c|}{} & $0$ & $1$ & $2$ &$3$ & $4$ &$5$ & $6$&$7$&$8$&$9$&$10$\\
	\hline
\multirow{7}{*}{\normalsize$i $}	&$0$& $\mathbf{1}$ & $3$& $12$& $48$ & $192$ &$768$& $3\,072$ & $11\,469$&$43\,419$& $164\,664$ & $624\,249$\\
	&$1$&. &$\mathbf{1}$ & $3$& $12$& $48$ &$192$&$512$& $2\,048$&$7\,646$& $28\,946$ & $109\,776$\\
 &$2$& .&.& $\mathbf{1}$ & $3$ &$12$&$48$& $192$ &$768$&$3\,072$ & $11\,469$ & $43\,419$\\
 &$3$& .&.&.  &$\mathbf{1}$&$3$&$12$&$32$ & $128$&$512$& $2\,048$ & $7\,646$\\
 &$4$& .&.& .&.&$\mathbf{1}$&$3$& $12$&$48$&$192$& $768$ & $3\,072$\\
 &$5$& .&.&  .&.&.&$\mathbf{1}$& $2$&$8$& $32$& $128$ & $512$\\
 &$6$& .&.&  .&.&.&.&$\mathbf{1}$&$4$&$15$ & $60$& $239$\\
 \hline
 \multicolumn{2}{c|}{\normalsize$P_\beta$}& $1$& $4$ & $16$ & $64$ & $256$ & $1\,024$ & $3\,823$ & $14\,473$ & $54\,888$ & $208\, 083$ & $788\,913$
\end{tabular}
\end{center}

\end{example}
To obtain bounds on $P(\beta)$, we follow a similar reasoning to that of the Section~\ref{subsec:antemer_bounds}. We have:
\begin{align*}
 0\, & \leq \sum_{i' = i-\widetilde{\mathbf{T}_i}(\widetilde{a_{\max}}(i,\beta),\beta)+2}^{m} P_{i'}(\beta-\widetilde{\mathbf{T}_i}(\widetilde{a_{\max}}(i,\beta),\beta)+1)    \\
 &\leq  \sum_{i' = 1}^{m} P_{i'}(\beta-\widetilde{\mathbf{T}_i}(\widetilde{a_{\max}}(i,\beta),\beta)+1)\\
&=P(\beta-\widetilde{\mathbf{T}_i}(\widetilde{a_{\max}}(i,\beta),\beta)+1) - P_0(\beta-\widetilde{\mathbf{T}_i}(\widetilde{a_{\max}}(i,\beta),\beta)+1)\\
&=P(\beta-\widetilde{\mathbf{T}_i}(\widetilde{a_{\max}}(i,\beta),\beta)+1) - \varphi_>(a_1)\cdot P(\beta-\widetilde{\mathbf{T}_i}(\widetilde{a_{\max}}(i,\beta),\beta)).
\end{align*}

Similarly to Definition~\ref{def:antemer_bounds}, we define the following sequences $P^-(\beta)$ and $P^+(\beta)$.

\begin{definition}
Let $P^-(\beta) = P^+(\beta) = |\Sigma|^\beta$ for $0\leq \beta  <m$, $P^-(m) = P^+(m)= 1+\Phi_>(w)$ --- remember Definition~\ref{def:rank} --- and
\begin{align*}
P^-(\beta) &= \varphi_>(a_1)\cdot P^-(\beta-1) + \sum_{i=1}^m \min(\varphi_>(a_{i+1}),\varphi_>(\widetilde{a_{\max}}(i,\beta)))\cdot P^-(\beta-(i+1))\\
P^+(\beta) &= \varphi_>(a_1)\cdot P^+(\beta-1) + \sum_{i=1}^m \Big(\min(\varphi_>(a_{i+1}),\varphi_>(\widetilde{a_{\max}}(i,\beta)))\cdot P^+(\beta-(i+1))\\
&+[\widetilde{a_{\max}}(i,\beta)>a_{i+1}] \cdot \big(P^+(\beta-\widetilde{\mathbf{T}_i}(\widetilde{a_{\max}}(i,\beta),\beta)+1) - \varphi_>(a_1)\cdot P^+(\beta-\widetilde{\mathbf{T}_i}(\widetilde{a_{\max}}(i,\beta),\beta))\big)\Big)
\end{align*}
\end{definition}

Since we are actually interested in bounding $P_m(\beta)$, we introduce two secondary sequences $P_m^-(\beta)$, and $P_m^+(\beta)$.

\begin{definition}
Let $P_m^-(m) = P_m^+(m)=1$, and for $\beta \geq m$,
\begin{align*}
P_m^-(\beta) &= \varphi_>(\widetilde{a_{\max}}(m,\beta)) \cdot P^-(\beta-(m+1)) \\
P_m^+(\beta) &= \varphi_>(\widetilde{a_{\max}}(m,\beta)) \cdot P^+(\beta-(m+1)) \\
&+ [\widetilde{a_{\max}}(m,\beta)\neq\varepsilon] \cdot \big(P^+(\beta-\widetilde{\mathbf{T}_i}(\widetilde{a_{\max}}(m,\beta),\beta)+1) - \varphi_>(a_1)\cdot P^+(\beta-\widetilde{\mathbf{T}_i}(\widetilde{a_{\max}}(m,\beta),\beta))\big)
\end{align*}
\end{definition}

We have the following result.

\begin{proposition}\label{prop:postmer_bounds}
\phantom{hey}
\begin{itemize}
    \item $\forall \beta\geq 0, P^-(\beta) \leq P(\beta) \leq P^+(\beta)$;  
    \item $\forall \beta\geq m, P_m^-(\beta) \leq P_m(\beta) \leq P_m^+(\beta)$.
\end{itemize}
Furthermore, both $P^-(\beta)$ and $P^+(\beta)$ can be computed with dynamic programming in $O(\beta)$ space and $O(\beta\cdot|w|)$ time, whereas $P_m^-(\beta)$ and $P^+_m(\beta)$ are computed in $O(1)$ space and $O(1)$ time, once $P^-$ and $P^+$ are computed.
\end{proposition}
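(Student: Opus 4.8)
The plan is to prove both pairs of inequalities simultaneously by strong induction on $\beta$, carrying a \emph{componentwise} invariant rather than only the aggregate one. Observe that, by construction, $P^-$ and $P^+$ are sums $P^\pm(\beta)=\sum_{i=0}^m P_i^\pm(\beta)$ of ``component'' sequences that mirror, term by term, the recursion satisfied by the true components $P_i(\beta)$: the summand indexed by $i=0$ is $\varphi_>(a_1)\cdot P^\pm(\beta-1)$ (Proposition~\ref{prop:postmer_0}), and for $1\le i\le m$ the summand is obtained from Theorem~\ref{th:postmer_general} by replacing the partial sum $\sum_{i'=i-T_i+2}^m P_{i'}(\cdot)$ by $0$ in the $P^-$ case and by the full quantity $P^\pm(\beta-T_i+1)-\varphi_>(a_1)P^\pm(\beta-T_i)$ in the $P^+$ case, where I abbreviate $T_i:=\widetilde{\mathbf{T}_i}(\widetilde{a_{\max}}(i,\beta),\beta)$. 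I would prove $P_i^-(\beta)\le P_i(\beta)\le P_i^+(\beta)$ for every $i$; summing over $i$ then yields $P^-(\beta)\le P(\beta)\le P^+(\beta)$.

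For the base cases $0\le\beta\le m$ there is nothing to do: Section~\ref{subsec:postmer_first_values} gives $P(\beta)=|\Sigma|^\beta=P^\pm(\beta)$ for $\beta<m$ and $P(m)=1+\Phi_>(w)=P^\pm(m)$, while the $i=m$ component satisfies $P_m(m)=1=P_m^\pm(m)$, so all invariants hold with equality. For the inductive step $\beta>m$, I would feed the inequality chain displayed immediately before the proposition into the recursion. That chain bounds the troublesome partial sum between $0$ and $P(\beta-T_i+1)-\varphi_>(a_1)P(\beta-T_i)$; the lower bound discards it (it is a sum of non-negative $P_{i'}$'s), yielding $P_i^-(\beta)=\min(\varphi_>(a_{i+1}),\varphi_>(\widetilde{a_{\max}}(i,\beta)))\cdot P^-(\beta-(i+1))\le P_i(\beta)$ after invoking the hypothesis $P^-(\beta-(i+1))\le P(\beta-(i+1))$. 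The upper bound first enlarges the partial sum $\sum_{i'=i-T_i+2}^m$ to the full sum $\sum_{i'=1}^m$ — legitimate because every $P_{i'}\ge 0$ — and then applies the induction hypothesis termwise.

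The step that needs care, and which I expect to be the crux, is the $-\varphi_>(a_1)P(\beta-T_i)$ term: because its coefficient is negative, one cannot simply substitute $P^+$ for $P$ entrywise and preserve an upper bound, which is exactly why the invariant must be componentwise. The point is that, using $P_0^+(\gamma)=\varphi_>(a_1)P^+(\gamma-1)$, the combination appearing in the definition of $P^+$ telescopes to $P^+(\beta-T_i+1)-\varphi_>(a_1)P^+(\beta-T_i)=\sum_{i'=1}^m P_{i'}^+(\beta-T_i+1)$, and likewise $P(\beta-T_i+1)-\varphi_>(a_1)P(\beta-T_i)=\sum_{i'=1}^m P_{i'}(\beta-T_i+1)$; the desired inequality between these two is then precisely the componentwise induction hypothesis applied at $\beta-T_i+1<\beta$. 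Here one uses that $T_i\ne 0$ forces $\beta-T_i+1\ge m$ by Definition~\ref{def:prefix_letter_vector_postmer}, so that the identity $P_0(\gamma)=\varphi_>(a_1)P(\gamma-1)$, valid for $\gamma\ge m$, does apply at $\gamma=\beta-T_i+1$. Finally, the bounds for $P_m$ are immediate: $P_m^-$ and $P_m^+$ are by definition the $i=m$ components $P_m^\pm$, so $P_m^-(\beta)\le P_m(\beta)\le P_m^+(\beta)$ is the $i=m$ instance of what has just been proved.

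For the complexity claims, note that $P^-$ and $P^+$ are scalar sequences: evaluating $P^\pm(\beta)$ from the already-tabulated earlier entries costs $O(m)=O(|w|)$ arithmetic operations, since the defining sum ranges over $1\le i\le m$ and each summand is a constant number of array look-ups (the indices $\beta-(i+1)$, $\beta-T_i+1$ and $\beta-T_i$, together with the precomputed $\widetilde{\mathbf{T}_i}$, $\widetilde{a_{\max}}$ and $\varphi_>$ values of Appendix~\ref{app:computation_autocorrelation_matrix}). Computing $P^\pm(0),\dots,P^\pm(\beta)$ therefore takes $O(\beta\cdot|w|)$ time, and storing the two one-dimensional arrays takes $O(\beta)$ space — the saving over the exact computation of Proposition~\ref{prop:postmer_computation} being that no two-dimensional table of the $P_i(\beta)$ is kept. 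Once these arrays are available, each of $P_m^-(\beta)$ and $P_m^+(\beta)$ is a closed-form expression in a constant number of their entries, hence computable in $O(1)$ time and $O(1)$ additional space, as claimed.
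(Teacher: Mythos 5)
Your proof is correct and follows the same route the paper intends: the paper gives no explicit proof (the proposition is asserted by construction from the inequality chain displayed just before it), and your componentwise strong induction is exactly the rigorous form of substituting that chain into the recursions of Proposition~\ref{prop:postmer_0} and Theorem~\ref{th:postmer_general}, with the base cases $\beta\le m$ holding by equality. Your handling of the one subtlety the paper glosses over --- that the negative $\varphi_>(a_1)\cdot P(\cdot)$ coefficient forbids a naive termwise substitution of $P^+$ for $P$, resolved by telescoping with $P_0(\gamma)=\varphi_>(a_1)\cdot P(\gamma-1)$ at $\gamma=\beta-\widetilde{\mathbf{T}_i}(\widetilde{a_{\max}}(i,\beta),\beta)+1\ge m$, which Definition~\ref{def:prefix_letter_vector_postmer} guarantees --- is sound, as are the complexity claims.
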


\begin{example}\label{ex:valeurs_finales}
With $w_2=\ch{ACACAC}$, we start by bounding the values of $P(\beta)$ as follows:
\begin{center}
\small
\begin{tabular}{c|ccccccccccc}
$\beta$ & $0$ & $1$ & $2$ &$3$ & $4$ &$5$ & $6$&$7$&$8$&$9$&$10$\\
	\hline
	\normalsize$P^+(\beta)$& $1$ & $ 4$ & $ 16$ & $ 64$ & $ 256$ & $ 1\,024$ & $ 3\,823$ & $ 14\,473$ & $ 55\,636$ & $ 213\,319$ & $ 818\,287$\\
\normalsize$P(\beta)$& $1$ & $ 4$ & $ 16$ & $ 64$ & $ 256$ & $ 1\,024$ & $ 3\,823$ & $ 14\,473$ & $ 54\,888$ & $ 208\,083$ & $ 788\,913$\\
\normalsize$P^-(\beta)$& $1$ & $ 4$ & $ 16$ & $ 64$ & $ 256$ & $ 1\,024$ & $ 3\,823$ & $ 14\,473$ & $ 54\,885$ & $ 208\,062$ & $ 788\,797$\\
\end{tabular}
\end{center}
and the values of $P_m(\beta)$ as follows:
\begin{center}
\small
\begin{tabular}{c|ccccccccccc}
$\beta$ & $0$ & $1$ & $2$ &$3$ & $4$ &$5$ & $6$&$7$&$8$&$9$&$10$\\
	\hline
	\normalsize$P_m^+(\beta)$& $1$ & $ 4$ & $ 763$ & $ 3\,052$ & $ 12\,409$ & $ 47\,179$ & $ 181\,402$ & $ 694 \,657$ & $ 2\,663\,689$ & $ 10\,215\,016$ & $ 39\,174\,430$\\
\normalsize$P_m(\beta)$& $1$ & $ 4$ & $ 15$ & $ 60$ & $ 239$ & $ 956$ & $ 3\,823$ & $ 14\,473$ & $ 54\,888$ & $ 208\,083$ & $ 788\,913$\\
\normalsize$P_m^-(\beta)$&$1$ & $ 4$ & $ 12$ & $ 48$ & $ 192$ & $ 768$ & $ 3\,072$ & $ 11\,469$ & $ 43\,419$ & $ 164\,655$ & $ 624\,186$\\
\end{tabular}
\end{center}
\end{example}

Combining this result and Proposition~\ref{prop:antemer_bounds} with 
\begin{equation}
\pi_k(w) = \sum_{\beta=0}^{\beta_{\max}(w)} A(k-m-\beta) \cdot P_m(\beta+m), \tag{\ref{eq:pi_calcul_beta_max}}
\end{equation}
one can define $\pi_k^-(w)$ and $\pi_k^+(w)$ as
\begin{align*}
\pi_k^-(w) &= \max\left(1,\sum_{\beta=0}^{\beta_{\max}(w)} A^-(k-m-\beta) \cdot P_m^-(\beta+m)\right) \\
\pi_k^+(w) &= \min\left((\beta_{\max}(w)+1) \cdot |\Sigma|^{k-m},\sum_{\beta=0}^{\beta_{\max}(w)} A^+(k-m-\beta) \cdot P_m^+(\beta+m) \right)
\end{align*}

\begin{proposition}
For all $k\geq |w|$, $\pi^-_k(w) \leq \pi_k(w)\leq \pi^+_k(w)$.
\end{proposition}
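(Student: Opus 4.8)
The plan is to read the two inequalities straight off the exact expression~\eqref{eq:pi_calcul_beta_max} by combining the termwise bounds already proved for antemers and postmers, relying only on the fact that every quantity in sight is nonnegative. First I would fix $0\leq\beta\leq\beta_{\max}(w)$ and recall that $A(k-m-\beta)$ and $P_m(\beta+m)$ are counts, hence $\geq 0$, and that the lower-bound sequences are $\geq 0$ by construction (their defining recurrences have only nonnegative coefficients and nonnegative base cases). Proposition~\ref{prop:antemer_bounds} and Proposition~\ref{prop:postmer_bounds} then give the two chains
$$0\leq A^-(k-m-\beta)\leq A(k-m-\beta)\leq A^+(k-m-\beta)$$
and
$$0\leq P_m^-(\beta+m)\leq P_m(\beta+m)\leq P_m^+(\beta+m).$$
Since the product of ordered nonnegative reals preserves the ordering, multiplying these termwise yields
$$A^-(k-m-\beta)\cdot P_m^-(\beta+m)\leq A(k-m-\beta)\cdot P_m(\beta+m)\leq A^+(k-m-\beta)\cdot P_m^+(\beta+m).$$

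Next I would sum over $\beta$ from $0$ to $\beta_{\max}(w)$ and invoke~\eqref{eq:pi_calcul_beta_max} for the middle term, obtaining
$$\sum_{\beta=0}^{\beta_{\max}(w)} A^-(k-m-\beta)\cdot P_m^-(\beta+m)\;\leq\;\pi_k(w)\;\leq\;\sum_{\beta=0}^{\beta_{\max}(w)} A^+(k-m-\beta)\cdot P_m^+(\beta+m).$$
It then remains only to reconcile these two sums with the $\max$ and $\min$ appearing in the definitions of $\pi_k^-(w)$ and $\pi_k^+(w)$. For the lower bound, I would recall that $\Sigma^k_w$ is never empty --- it contains at least the $k$-mer made of $k-m$ copies of $\max(\Sigma)$ followed by $w$ --- so $\pi_k(w)\geq 1$; together with the displayed lower sum, both quantities are $\leq\pi_k(w)$, hence so is their maximum, which is exactly $\pi_k^-(w)$. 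For the upper bound, Lemma~\ref{lemma:trivial_upper_bound} already gives $\pi_k(w)\leq(\beta_{\max}(w)+1)\cdot|\Sigma|^{k-m}$; together with the displayed upper sum, both quantities are $\geq\pi_k(w)$, hence so is their minimum, which is exactly $\pi_k^+(w)$.

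I do not expect a genuine obstacle here: the argument is essentially bookkeeping on top of the two bounding propositions. The single point that deserves explicit mention is that the termwise multiplication of inequalities is legitimate precisely because all six factors are nonnegative (in particular the lower-bound sequences $A^-$ and $P_m^-$, which is why their nonnegativity must be noted), and that the $\max$/$\min$ wrappers in the definitions can only tighten bounds that have already been established independently, so they never threaten the two inequalities.
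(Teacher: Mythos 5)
Your proof is correct and follows exactly the route the paper intends: the paper states this proposition without a separate proof, treating it as an immediate combination of Propositions~\ref{prop:antemer_bounds} and~\ref{prop:postmer_bounds} with Equation~\eqref{eq:pi_calcul_beta_max}, the trivial bound of Lemma~\ref{lemma:trivial_upper_bound}, and the nonemptiness of $\Sigma^k_w$ --- precisely the ingredients you assemble. Your explicit remarks on nonnegativity (needed for the termwise multiplication) and on the $\max$/$\min$ wrappers are the right details to make the implicit argument rigorous.
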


\begin{example}
With $w_1=\ch{ACACAA}$, we obtain the following values :
\begin{center}
\small
\begin{tabular}{c|ccccccccccc}
$k$ & $6$ & $7$ & $8$ &$9$ & $10$ &$11$ & $12$&$13$&$14$&$15$&$16$\\
	\hline
	\normalsize$\pi^+_k(w_1)$& $1$& $7$&$24$&$93$&$353$&$1\,355$&$5\,195$&$19\,922$&$76\,384$&$292\,873$&$1\,122\,932$\\
	
\normalsize$\pi_k(w_1)$&  $1$& $7$&$24$&$93$&$351$&$1\,332$&$5\,049$&$19\,143$&$72\,576$&$275\,157$&$1\,043\,199$\\

\normalsize$\pi^-_k(w_1)$&  $1$& $7$&$23$&$86$&$327$&$1\,239$&$4\,698$&$17\,808$&$67\,495$&$255\,826$&$969\,659$\\
\end{tabular}
\end{center}
And with $w_2=\ch{ACACAC}$, we obtain the following values :
\begin{center}
\small
\begin{tabular}{c|ccccccccccc}
$k$ & $6$ & $7$ & $8$ &$9$ & $10$ &$11$ & $12$&$13$&$14$&$15$&$16$\\
	\hline
	\normalsize$\pi^+_k(w_2)$&$1$& $7$& $48$&$256$&$1\,280$&$6\,144$&$28\,672$&$131\,072$&$589\,824$&$2\,621\,440$&$11\,534\,336$\\
\normalsize$\pi_k(w_2)$& $1$&$7$&$38$&$191$&$911$&$4\,202$&$18\,923$&$82\,889$&$356\,478$&$1\,511\,583$&$6\,337\,559$\\
\normalsize$\pi^-_k(w_2)$&$1$&$7$&$35$&$170$&$795$&$3\,615$&$16\,110$&$69\,873$&$298\,273$&$1\,257\,505$&$5\,247\,521$\\
\end{tabular}
\end{center}
\end{example}

Finally, we have the following result.

\begin{theorem}\label{th:complexity_pik}
For all $k\geq m$, with $|w|=m$, $\pi_k(w)$ can be computed in $O(km)$ space and $O(km^2)$ time.

\medskip
\noindent
Furthermore, both $\pi_k^-(w)$ and $\pi_k^+(w)$ can be computed in $O(k)$ space and $O(km)$ time.
\end{theorem}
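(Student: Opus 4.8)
The plan is to treat the exact computation and the two approximations separately, in each case assembling the cost from the building blocks already established and then checking that the assembly step itself stays within budget. For $\pi_k(w)$ I would start from the closed form \eqref{eq:pi_calcul_beta_max}: the only quantities appearing there are $A(k-m-\beta)$ and $P_m(\beta+m)$ for $0\leq \beta\leq \beta_{\max}(w)$. Since $\beta_{\max}(w)\leq k-m$ by Definition~\ref{def:postmer_beta_max}, the antemer arguments range over $\alpha\leq k-m$ and the postmer arguments over $\beta+m\leq k$. First I would compute the autocorrelation matrix $\mathbf{R}$ together with all derived data ($i_{\max}(w)$, $\beta_{\max}(w)$, the $\mathbf{T}_i$'s, $a_{\max}(i)$, \dots) as in Appendix~\ref{app:computation_autocorrelation_matrix}; this costs $O(m^2)$ time and $O(m^2)$ space, which lies within the stated budget because $m\leq k$ gives $m^2\leq km\leq km^2$. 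I would then invoke Proposition~\ref{prop:antemer_computation} to tabulate every $A(\alpha)$ with $\alpha\leq k-m$ in $O((k-m)m^2)\subseteq O(km^2)$ time and $O(km)$ space, and Proposition~\ref{prop:postmer_computation} to obtain $P_m(\beta+m)$ for $\beta\leq\beta_{\max}(w)\leq k-m$ in $O((k-m)m^2+m^3)=O(km^2)$ time and $O((k-m)m+m^2)=O(km)$ space, the $m^3$ and $m^2$ terms again being absorbed since $m\leq k$.

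It then remains to assemble \eqref{eq:pi_calcul_beta_max}. The sum has at most $\beta_{\max}(w)+1\leq k-m+1=O(k)$ terms, each a single product of two already-tabulated integers; under the unit-cost arithmetic model used throughout the paper this contributes $O(k)$ time and no asymptotic space overhead. Hence the dominant cost is that of the antemer and postmer tables, giving $O(km^2)$ time and $O(km)$ space, as claimed.

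For the approximations I would proceed analogously but replace the exact tables by the bounding sequences. By Proposition~\ref{prop:antemer_bounds}, both $A^-(\alpha)$ and $A^+(\alpha)$ for $\alpha\leq k-m$ are computable in $O(k)$ space and $O(km)$ time; by Proposition~\ref{prop:postmer_bounds}, both $P^-(\beta)$ and $P^+(\beta)$ for $\beta\leq \beta_{\max}(w)+m\leq k$ are computable in $O(k)$ space and $O(km)$ time, after which each $P_m^-(\beta+m)$ and $P_m^+(\beta+m)$ follows in $O(1)$. The definitions of $\pi_k^-(w)$ and $\pi_k^+(w)$ then combine these with an $O(k)$-term sum and a single clamping ($\max$ with $1$ for $\pi_k^-(w)$, $\min$ with the bound of Lemma~\ref{lemma:trivial_upper_bound} for $\pi_k^+(w)$), for an additional $O(k)$ time. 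Summing these contributions yields $O(km)$ time and $O(k)$ space.

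The point that needs care --- and the main obstacle --- is precisely the $O(k)$ space bound for the approximations. The autocorrelation matrix occupies $O(m^2)$ cells, and since $m$ may be as large as $k$ this is \emph{not} within $O(k)$; so, unlike in the exact case, I cannot afford to store $\mathbf{R}$ in full. I would therefore argue that the $O(m)$ auxiliary quantities actually consumed by the bounding recurrences (the vectors $\mathbf{T}_i$ and the scalars $a_{\max}(i)$, $i_{\max}(w)$, $\beta_{\max}(w)$) can be extracted while generating $\mathbf{R}$ one row at a time, keeping only a constant number of rows in memory; treating $|\Sigma|$ as a constant, this uses $O(m)\subseteq O(k)$ space and the same $O(m^2)\subseteq O(km)$ time. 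The $\beta$-dependent quantities $\widetilde{a_{\max}}(i,\beta)$ and $\widetilde{\mathbf{T}_i}(\,\cdot\,,\beta)$ are never tabulated over all $\beta$ (which would itself cost $O(km)$ space); instead, during the $P^\pm$ recurrence each is recomputed from the stored $\mathbf{T}_i$ in $O(1)$ time, which preserves both the $O(k)$ space and $O(km)$ time bounds.
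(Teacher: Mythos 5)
Your proof is correct and follows essentially the same route as the paper's own: it assembles the complexity claims of Propositions~\ref{prop:antemer_computation}, \ref{prop:postmer_computation}, \ref{prop:antemer_bounds} and~\ref{prop:postmer_bounds}, together with the $O(m^2)$-time preprocessing of Appendix~\ref{app:computation_autocorrelation_matrix} and the $O(k)$-term assembly of \eqref{eq:pi_calcul_beta_max}, exactly as the paper does. Your closing observation is in fact a genuine refinement the paper's proof passes over silently: storing $\mathbf{R}$ in full costs $O(m^2)$ cells, which is not within the claimed $O(k)$ space for $\pi_k^\pm(w)$ once $m$ exceeds $\sqrt{k}$, and your fix --- extracting the $O(m)$ derived quantities ($\mathbf{T}_i$, $a_{\max}(i)$, $i_{\max}(w)$, $\beta_{\max}(w)$) while generating $\mathbf{R}$ in streaming fashion, and evaluating $\widetilde{\mathbf{T}_i}(\cdot,\beta)$ and $\widetilde{a_{\max}}(i,\beta)$ on the fly in $O(1)$ from the stored $\mathbf{T}_i$ --- is exactly what Algorithms~\ref{algo:preprocessing} and~\ref{algo:preprocessingbis} permit, so it legitimately closes a gap rather than introducing one.
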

\begin{proof}
First, computing $\mathbf{R}$ --- see Appendix~\ref{app:computation_autocorrelation_matrix} --- is done in $O(m^2)$. Then, from \eqref{eq:pi_calcul_beta_max} and the definitions of $\pi^+$ and $\pi^-$, it is sufficient to calculate the values of $A(k-m)$ and $P_m(\beta_{\max}(w)+m)$  and store the intermediate values in memory. According to Propositions~\ref{prop:antemer_computation} and~\ref{prop:postmer_computation}, we have:
\begin{center}
\begin{tabular}{c|cc}
     & Space & Time \\
     \hline
 $A(k-m)$   & $O((k-m)\cdot m)$ & $O((k-m)\cdot m^2)$\\
 $P_m(\beta_{\max}(w)+m)$ & $O(\beta_{\max}(w)\cdot m+m^2)$ & $O(\beta_{\max}(w)\cdot m^2 + m^3)$
\end{tabular}
\end{center}
Since $\beta_{\max}(w)\leq k-m$, and since the computation of $\pi_k(w)$ from the $A(\cdot)$'s and $P_m(\cdot)$'s is $O(k-m)$, we have the result.

Concerning $A^+$, $A^-$, $P^+$ and $P^-$, one can notice from Propositions~\ref{prop:antemer_bounds} 
and~\ref{prop:postmer_bounds} that the complexities are simply divided by a factor $m$ compared to the exact values of $A$ and $P$.
\end{proof}

Note that incorporating hard rules on the particular structure of certain minimizers could speed up the calculation of $\pi_k(w)$ for them. For example, if the minimizer starts with $\max(\Sigma)$, we can immediately return $\pi_k(w)=1$ in $O(1)$ --- as can be observed in Figure~\ref{fig:partition_theory}.

\section{Numerical results}\label{sec:numerical}

The methods developed in this article have been implemented in Python and can be found on Github\footnote{At \url{https://github.com/fingels/minimizer_counting_function}.}, as well as the scripts and actual data used to produce the figures in this section.

Note that all results of this section use the DNA alphabet.

We recall that $\pi_k(w)$ count the number of $k$-mers (among all $|\Sigma|^k$ possible $k$-mers) that admits $w$ as their lexicographical minimizer.

\subsection{Theoretical partition of $k$-mers}\label{ss:partition_theorique}

Since we now know how to calculate $\pi_k(w)$ for any word $w$ and any $k\geq |w|$, we can explicitly calculate the theoretical distribution of $k$-mers across the partition formed by their minimizers.

For $m$ and $k\geq m$ fixed, we enumerate the set of $m$-mers and calculate, for each such $m$-mer $w$, the associated value $\pi_k(w)$. Please note that we do not need to enumerate the $k$-mers, but only the $m$-mers, for this computation. The results for $k=21, m=8$ and $k=31, m=10$ are given in Figure~\ref{fig:partition_theory}.

\begin{figure}[h!]
    \centering
    
    \begin{subfigure}[t]{0.49\textwidth}
        \centering
            \includegraphics[width=\textwidth]{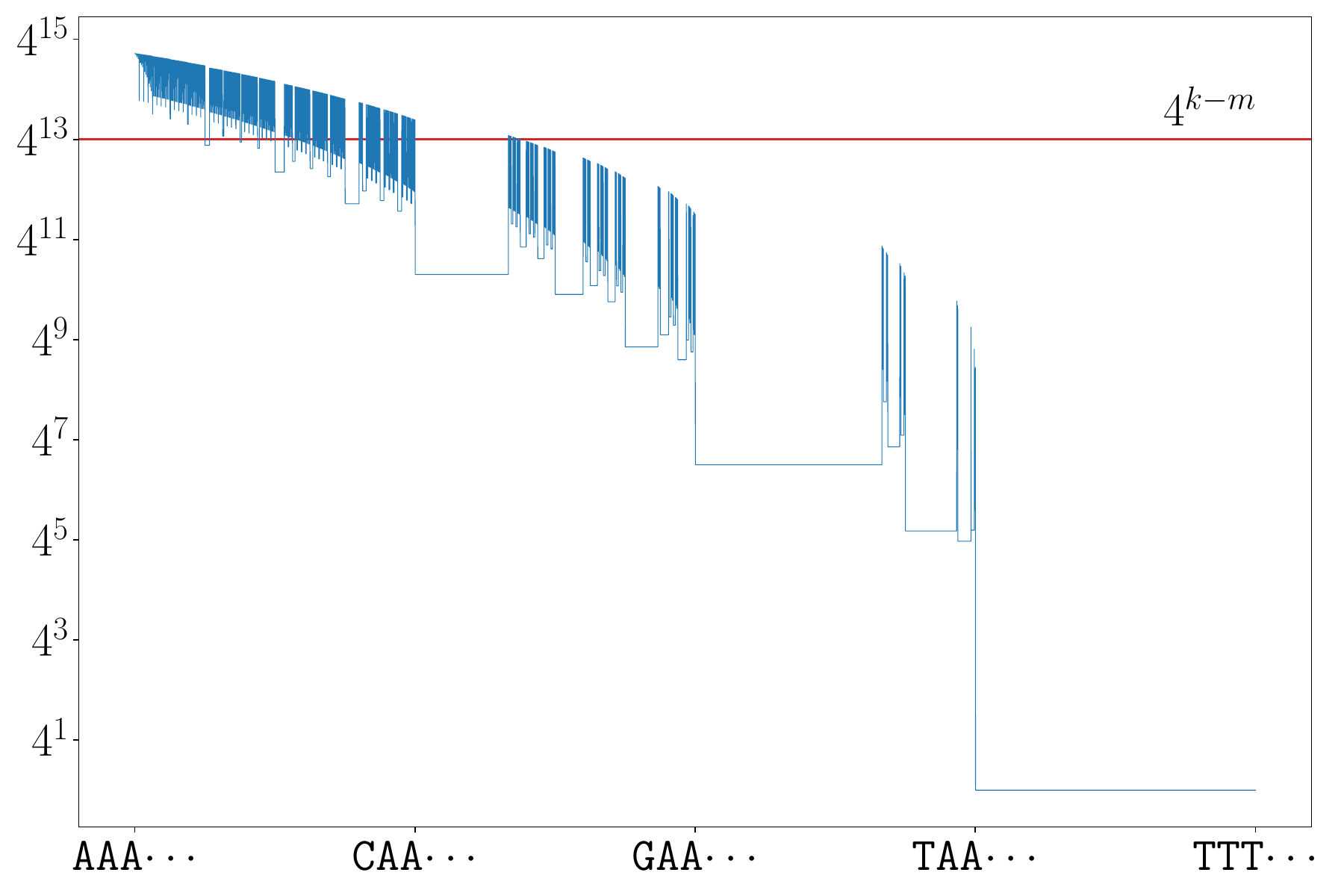}
    \caption{$k=21, m=8$}
    \label{fig:partition_theory:a}
    \end{subfigure}\hfill
       \begin{subfigure}[t]{0.49\textwidth}
        \centering
            \includegraphics[width=\textwidth]{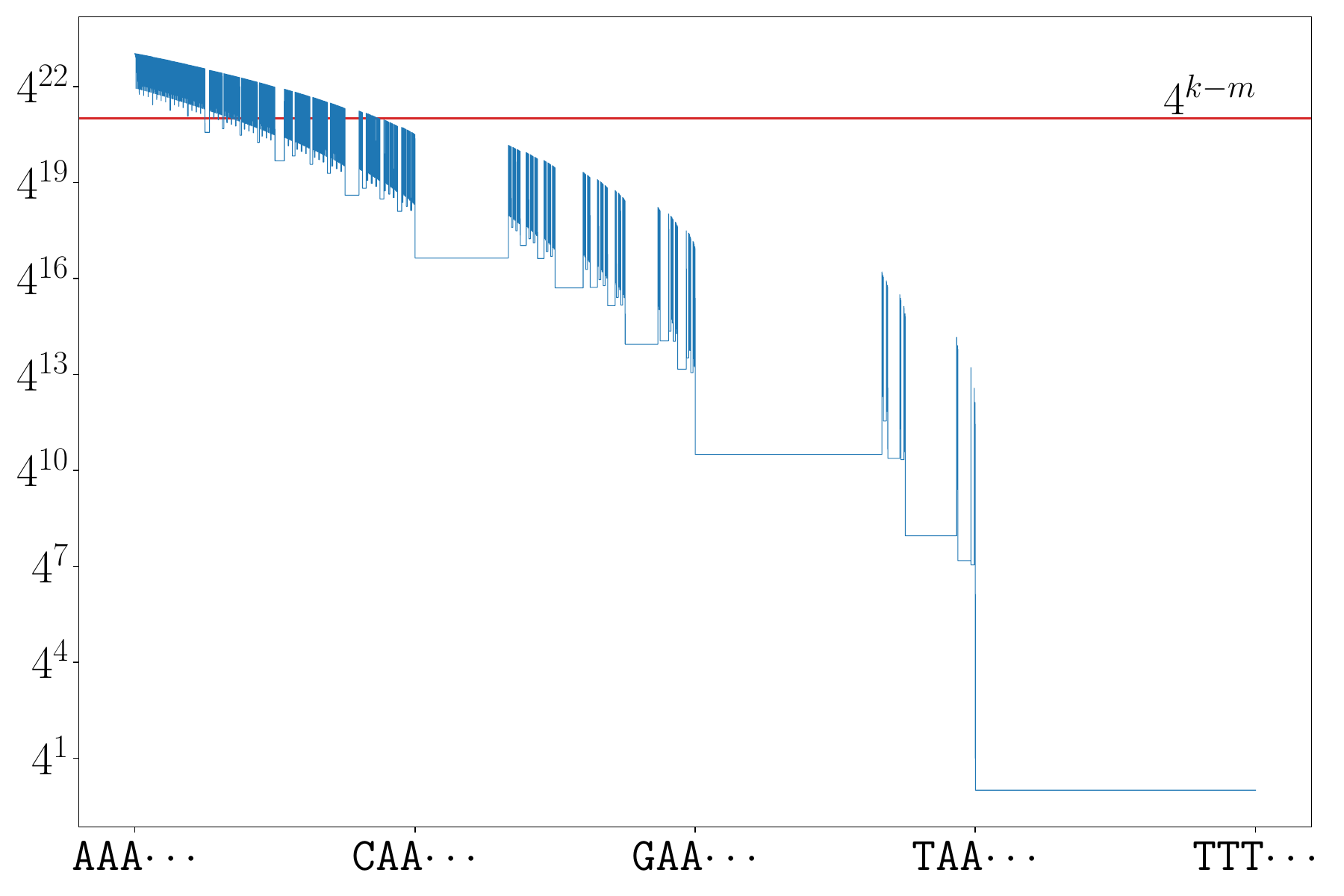}
        \caption{$k=31, m=10$}
    \label{fig:partition_theory:b}
    \end{subfigure}
   \caption{Theoretical partition of all $k$-mers into their lexicographical minimizers of size $m$, for different values of $(k,m)$. Minimizers are sorted lexicographically from left to right, and $\pi_k(\cdot)$'s values are given in log-scale. The horizontal red line stands for $4^{k-m}$, that is, the size of buckets one would expect if the $k$-mers were evenly distributed across all minimizers.}
\label{fig:partition_theory}
\end{figure}

As one can see, the two distributions are very similar, and support folklore: they are indeed extremely unbalanced. One can also observe plateaus (where many successive minimizers share the same value of $\pi_k$) and rapid oscillations (where $\pi_k$ cyclically increases and decreases over consecutive minimizers). These observations can be explained as follows:
\begin{itemize}
    \item The plateaus correspond to certain common prefixes (which are therefore naturally consecutive over a long series of minimizers) which strongly constrain the number of $k$-mers in their partition. For example, consider the minimizer $w=\ch{CA}\cdots \ch{A}$. Since $\ch{A}<\ch{C}$, one extra letter after $w$ is sufficient to change the minimizer, so $w$ is necessarily placed at the end of $k$-mer --- and $\beta_{\max}(w)=0$. The letters that can be placed before $w$, necessarily, can be freely chosen from $\ch{C},\ch{G}$ or $\ch{T}$. It is easy to see that the same constraints apply to all minimizers starting with $\ch{CA}\cdots$, hence the observed plateau, which also occurs for $\ch{GA}\cdots$ and $\ch{TA}\cdots$. In particular, all minimizers starting with a \ch{T} have exactly one $k$-mer in their bucket.
    \item The oscillations can be explained --- at least partially --- by the suffixes of the minimizers. Consider the followed examples $w_1=\ch{ACACAA}$ and $w_2=\ch{ACACAC}$. Because of the suffix $\ch{AA} < \ch{AC}$, we have $\beta_{\max}(w_1)=1$ for $k>m$ whereas $\beta_{\max}(w_2)=k-m$. So, since $w_2$ benefits from more starting positions than $w_1$ within a $k$-mer, it is natural to observe $\pi_k(w_2)>\pi_k(w_1)$ --- even though there are more antemers for $w_1$ than for $w_2$ --- remember Example~\ref{ex:antemers_values} !\footnote{It is actually not that counterintuitive : from Equation~\eqref{eq:pi_calcul_beta_max}, we have $\pi_k(w_1) = A(k-6)\cdot P_6(6) + A(k-7)\cdot P_6(7)$ --- with $P_6(6)=1$ and $P_6(7)\leq 4$ (since there is only one free letter). So $\pi_k(w_1)$ depends mainly on just two terms $A(k-6)$ and $A(k-7)$, while $\pi_k(w_2)$ depends on $k-m$ terms of the form $A(k-m-\beta)\cdot P_m(\beta+m)$, both $A(\cdot)$ and $P_m(\cdot)$ increasing exponentially. Therefore the surplus of antemers for $w_1$ does not compensate for this terms.} A few minimizers later, we move from \ch{ACACTT} to \ch{ACAGAA}, and we expect $\pi_k$ to decrease again because of the same argument.
\end{itemize}

\subsection{Approximating $\pi_k$}\label{ss:approximation}
We have seen in Theorem~\ref{th:complexity_pik} that calculating $\pi_k$ can be relatively costly in terms of time and space, especially when repeated over millions or billions of minimizers. It would be useful to have quick approximations of $\pi_k$, if one is interested in the order of magnitude rather than the exact value. Note that this section is intended to be more of a discussion, opening up avenues of research, than a truly exhaustive study answering the question.

\paragraph{Bounds on $\pi_k$} We start by considering the bounds $\pi_k^-(w)$ and $\pi_k^+(w)$ defined in Section~\ref{ss:bounds_final}, and how closely they match the theoretical value $\pi_k(w)$. In Figure~\ref{fig:bounds:a}, we show the results for $k=31$ and $m=10$, for all $m$-mers.

Despite the sometimes large differences between the bounds and the theoretical value, we can also see that the bounds are relatively close to the real values, and sometimes even equal, for certain minimizers. On this example, we encountered $820\,061$ (over $1\,048\,576$ minimizers) tight bounds --- i.e. at least one bound is equal to the actual value --- representing 78\% of all minimizers; among which they were $262\,144$ equal bounds --- 25\% of all minimizers. Those equal bounds correspond to all minimizers starting with a \ch{T}.

\begin{figure}[h!]
    \centering
    
    \begin{subfigure}[t]{0.49\textwidth}
        \centering
            \includegraphics[width=\textwidth]{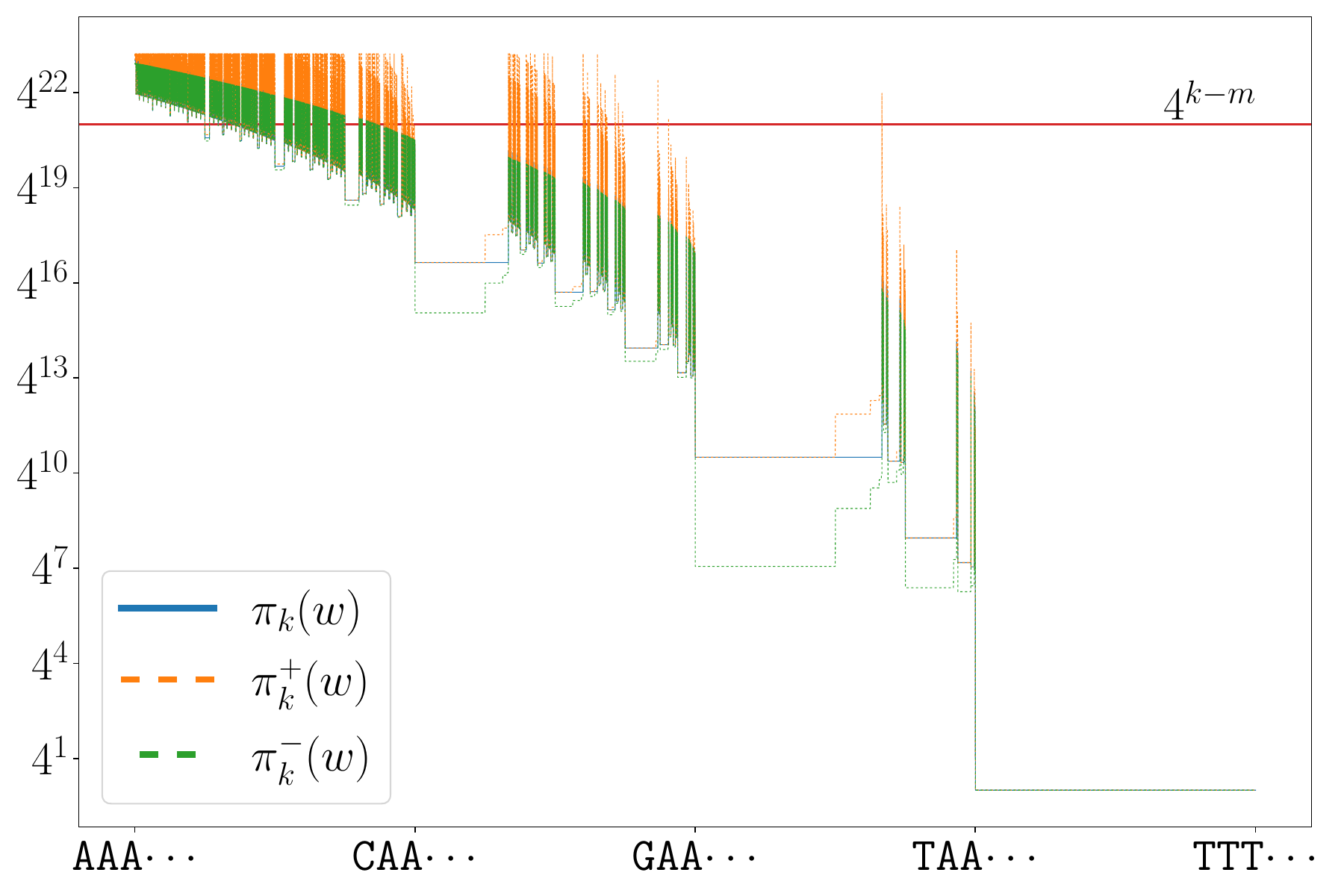}
    \caption{Upper and lower bounds on $\pi_k(w)$.}
    \label{fig:bounds:a}
    \end{subfigure}\hfill
       \begin{subfigure}[t]{0.49\textwidth}
        \centering
            \includegraphics[width=\textwidth]{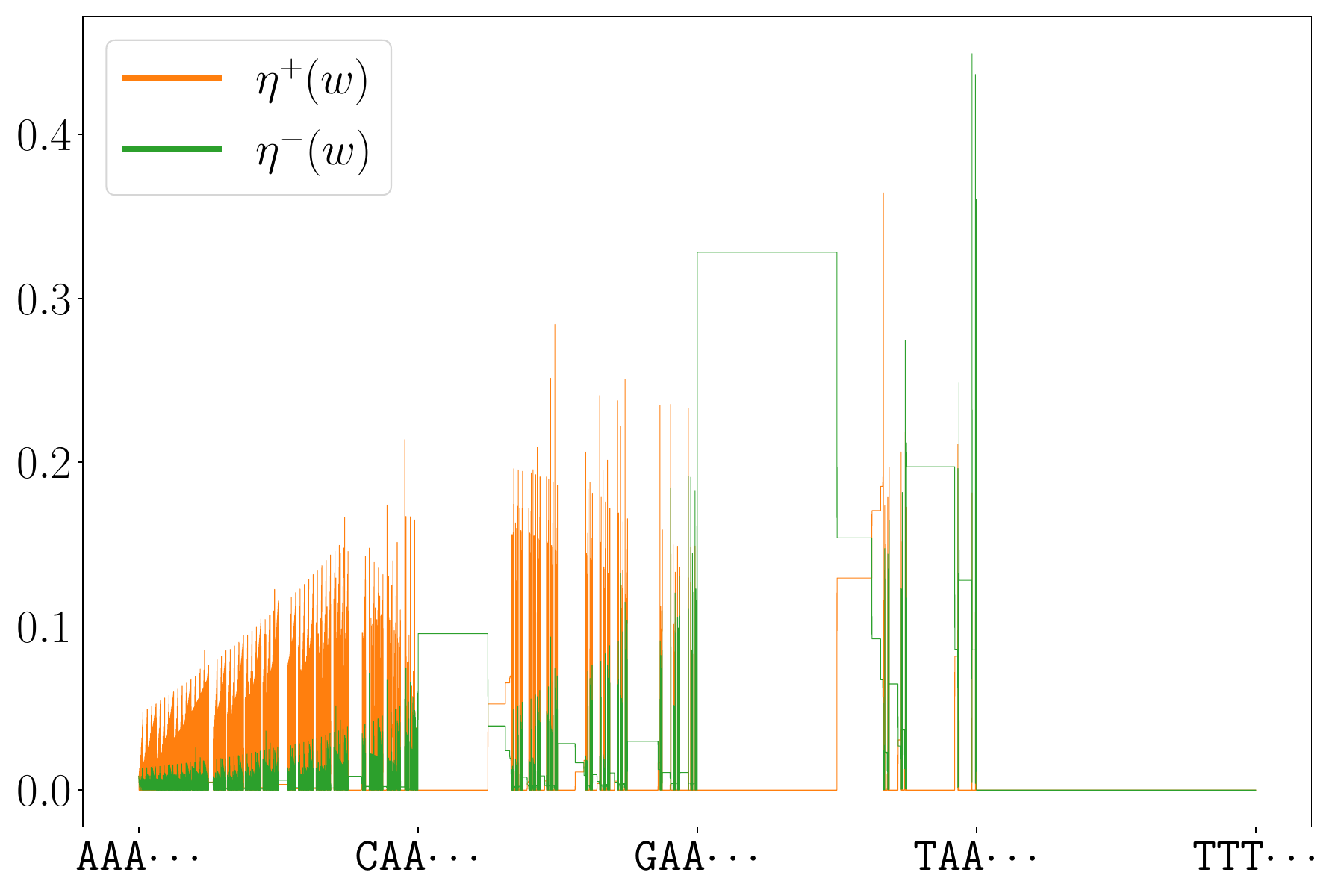}
        \caption{Relative errors $\eta^+(w)$ and $\eta^-(w)$}
    \label{fig:bounds:b}
    \end{subfigure}
   \caption{Relations between $\pi_k(w)$, $\pi_k^-(w)$ and $\pi_k^+(w)$ for $k=31, m=10$.}
\label{fig:bounds}
\end{figure}

Figure~\ref{fig:bounds:b} shows the relative errors in terms of order of magnitude, i.e. for each minimizer we computed 
$$\eta^+(w) = \frac{\log_{|\Sigma|}\pi_k^+(w)-\log_{|\Sigma|}\pi_k(w)}{\log_{|\Sigma|}\pi_k(w)}\quad \text{and}\quad\eta^-(w) = \frac{ \log_{|\Sigma|}\pi_k(w)-\log_{|\Sigma|}\pi_k^-(w)}{\log_{|\Sigma|}\pi_k(w)}.$$

We found that each time only one bound was tight, it was $\pi_k^+(w)$. However, $\pi_k^+(w)$ is not necessarily a good estimate of $\pi_k(w)$, since we can see that for a large number of minimizers, the relative error in order of magnitude is rather high. This is particularly striking if we compare Figures~\ref{fig:bounds:a} and~\ref{fig:partition_theory:b}. On the other hand, $\pi_k^-(w)$ is more consistent, especially on small minimizers.

Using $\pi_k^+(w)$ or $\pi_k^-(w)$ as an approximation of $\pi_k(w)$ saves a factor $m$ in temporal and spatial complexity, but, as we have just seen, at the cost of an approximation that is not necessarily reliable --- despite almost 80\% of minimizers having a tight bound. Are there competing approaches to approximate $\pi_k(w)$ quickly and with better precision?

\paragraph{Normalization} We have already pointed out that the curves of Figure~\ref{fig:partition_theory:a} and~\ref{fig:partition_theory:b} are particularly similar. To what extent?

For $k,m$ fixed, and any $m$-mer $w$, note that $0\leq \Phi_>(w)\leq |\Sigma|^m-1$ --- where $\Phi_>(w)$ is defined in Definition~\ref{def:rank} and correspond to the lexicographical rank of $w$ among all $m$-mers. Note also that 
$$\pi_k(w)\leq (k-m+1)\cdot |\Sigma|^{k-m}$$
by the upper bound of Lemma~\ref{lemma:trivial_upper_bound}, with $\beta_{\max}(w)\leq k-m$.

Therefore, we can represent any $m$-mer by a point $p(w)$ in the unit square $[0,1]\times[0,1]$ of coordinates
\begin{equation}
p(w)=\left(\frac{\Phi_>(w)}{|\Sigma|^m-1},\frac{\log_{|\Sigma|}\pi_k(w)}{(k-m) + \log_{|\Sigma|}(k-m+1)}\right)
\end{equation}
--- using $\log_{|\Sigma|}$ to avoid all the points being squashed on the $x$-axis.

Figure~\ref{fig:normalized} shows the results we obtained with severals values of $k$ and $m$. In Figure~\ref{fig:normalized:a}, we have considered small values of $m$: $m=2,3,4$. It can be seen that oscillations and plateaus appear as $m$ increases. In Figure~\ref{fig:normalized:b}, by setting $m=3$ --- to keep the curve simple --- we can see that by increasing $k$ some parts of the curve increase, others decrease, but locally the increase/decrease is consistent, and seems to diminish as $k$ increases. The same phenomenon can be seen in Figure~\ref{fig:normalized:c}, where this time we chose a larger value $m=10$. Restricting ourselves to high values of $k$, we compare the curves obtained for $m=8$ and $m=10$ in Figure~\ref{fig:normalized:d}. Although the values of $m$ are distinct, we can see that the curves seem to align almost perfectly.

\begin{figure}[h!]
    \centering
    
    \begin{subfigure}[t]{0.49\textwidth}
        \centering
            \includegraphics[width=\textwidth]{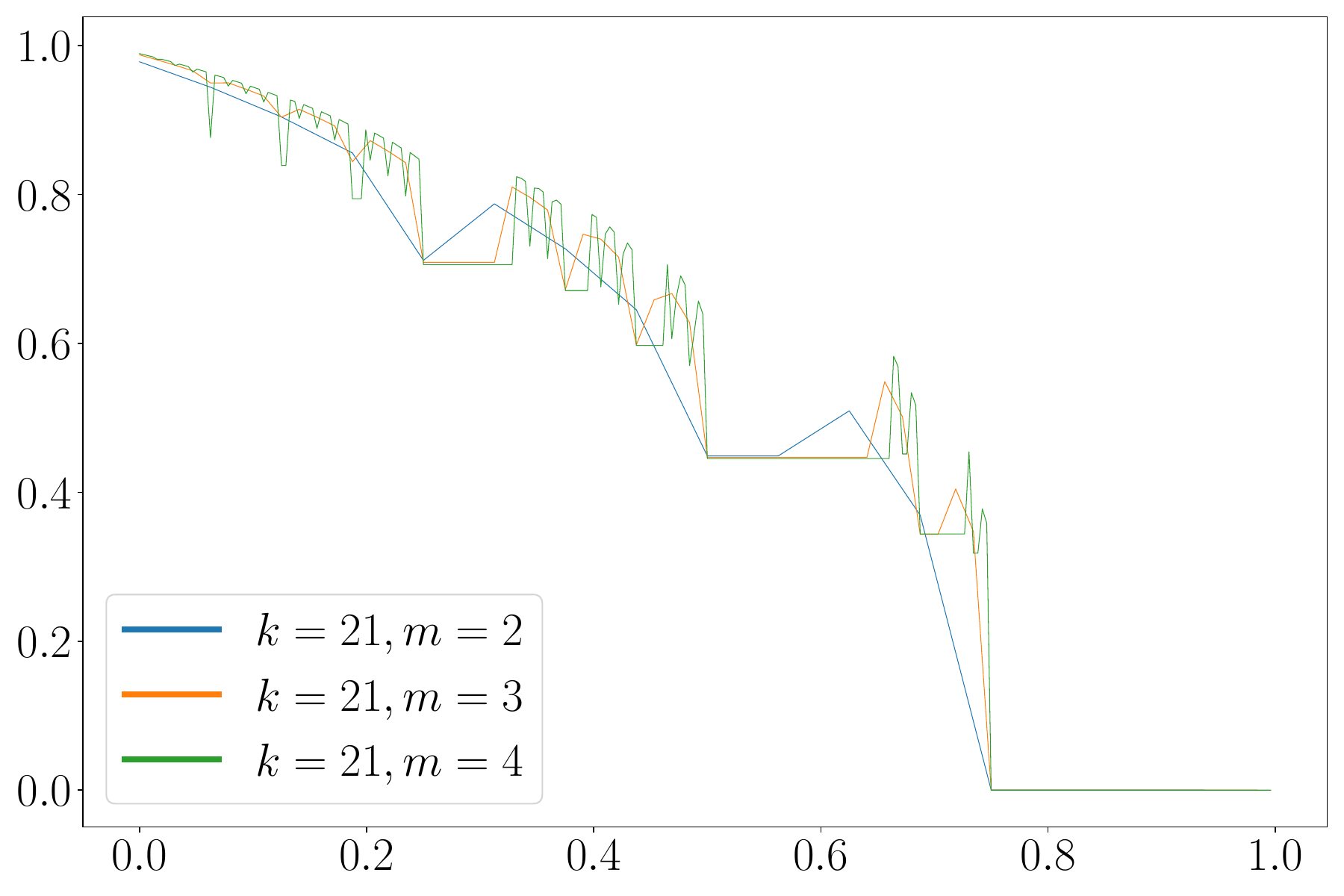}
    \caption{$k=21$, $m=2,3,4$}
    \label{fig:normalized:a}
    \end{subfigure}\hfill
       \begin{subfigure}[t]{0.49\textwidth}
        \centering
            \includegraphics[width=\textwidth]{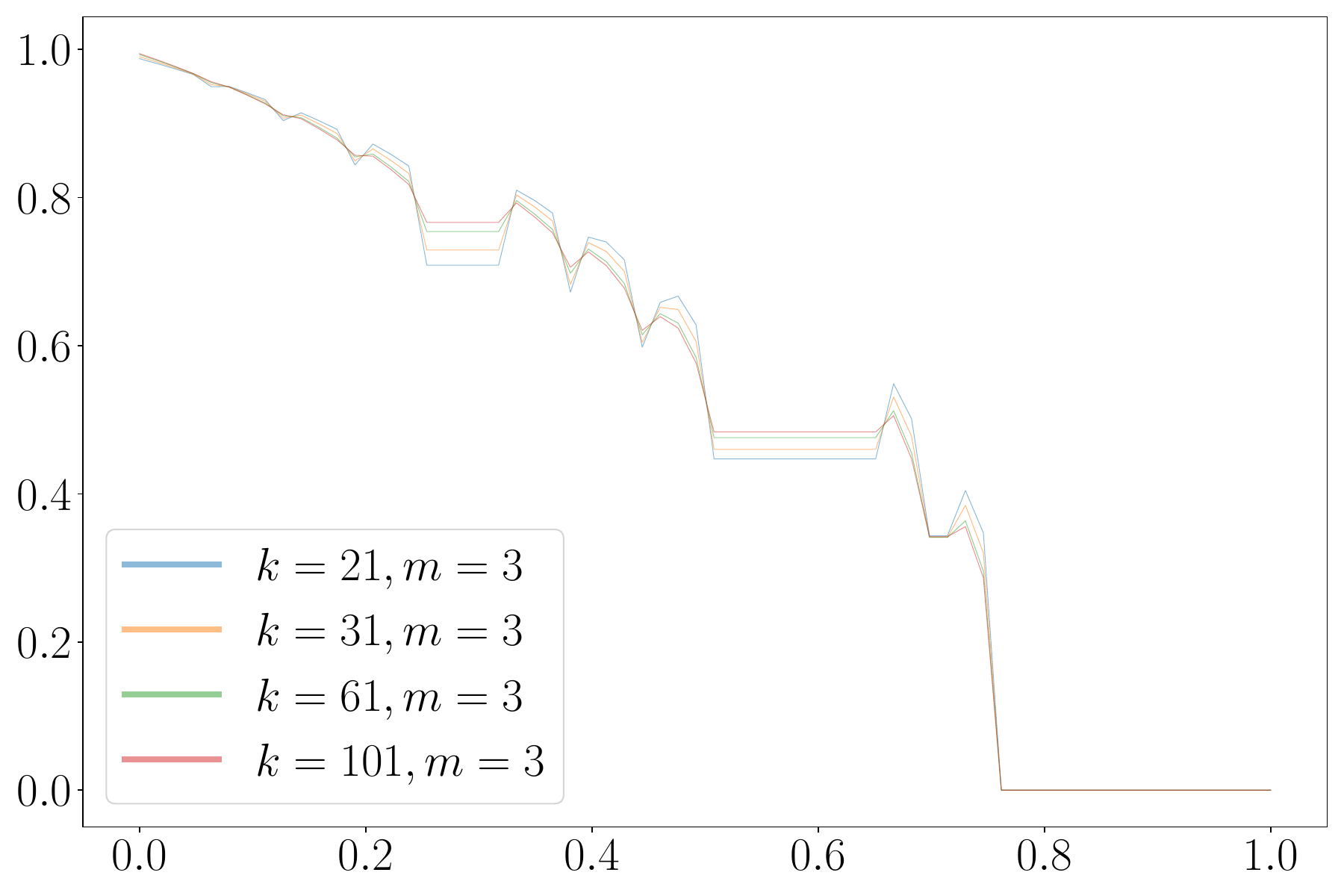}
        \caption{$m=3$, $k=21,31,61,101$}
    \label{fig:normalized:b}
    \end{subfigure}
    
        \begin{subfigure}[t]{0.49\textwidth}
        \centering
            \includegraphics[width=\textwidth]{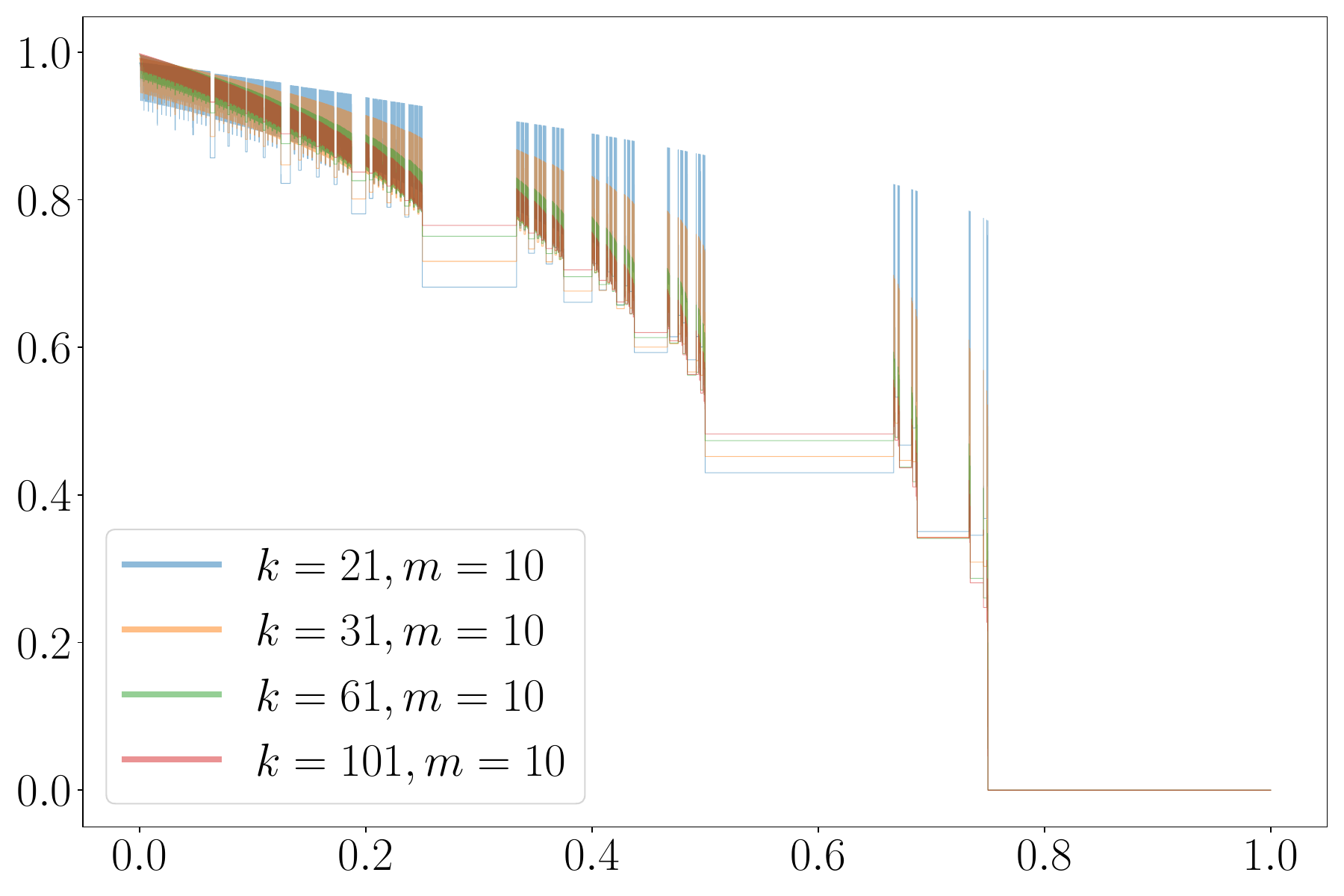}
    \caption{$m=10$, $k=21,31,61,101$}
    \label{fig:normalized:c}
    \end{subfigure}\hfill
      \begin{subfigure}[t]{0.49\textwidth}
        \centering
            \includegraphics[width=\textwidth]{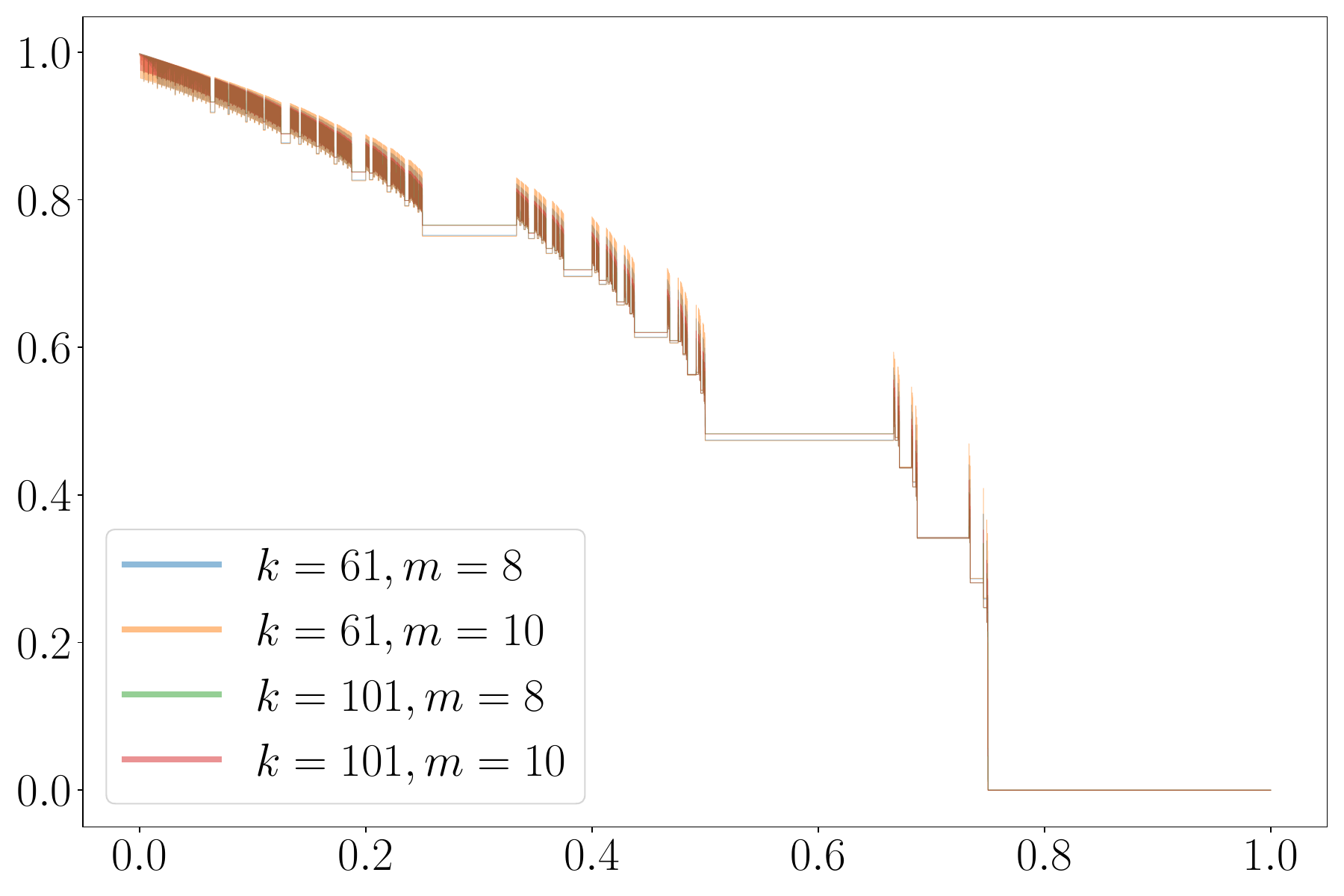}
        \caption{$m=8,10$, $k=61,101$}
    \label{fig:normalized:d}
    \end{subfigure}
   \caption{$p(w)$ for several values of $k,m$.}
\label{fig:normalized}
\end{figure}

These observations lead us to formulate the following conjecture. Let us define $f_{k,m}: [0,1]\to [0,1]$ the piecewise linear function such that for all $w\in\Sigma^m$, denoting $p(w)=(x,y)$, we have $f_{k,m}(x)=y$, as well as $f_{k,m}(0)=1$ and $f_{k,m}(1)=0$.

\begin{conjecture}\label{conj:courbe_limite}
There exist a function $f_\infty : [0,1]\to [0,1]$ so that $(f_{k,m})$ converges to $f_\infty$ in a sense to be determined as $k,m\to \infty$.
\end{conjecture}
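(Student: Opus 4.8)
To attack Conjecture~\ref{conj:courbe_limite}, the plan is first to pin down the mode of convergence suggested by the numerics, namely an \emph{iterated} limit: for fixed $m$ one lets $k\to\infty$ to obtain a profile $h_m$, and then lets $m\to\infty$, with convergence measured in $L^1[0,1]$ (or almost everywhere), a choice robust to the discontinuities that base-$|\Sigma|$ expansions inevitably produce. The inner limit is the more routine step. By \eqref{eq:pi_calcul_beta_max}, $\pi_k(w)$ is a finite convolution of the antemer counts $A(\cdot)$ and the postmer counts $P_m(\cdot+m)$, and each of these obeys a finite linear recurrence with nonnegative constant coefficients (Proposition~\ref{prop:antemer_0} and Theorem~\ref{th:antemer_general} for $A$; Proposition~\ref{prop:postmer_0} and Theorem~\ref{th:postmer_general} for $P$). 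Applying the Perron--Frobenius theorem to the associated transfer matrix yields $A(\alpha)\sim c_A(w)\,\lambda_A(w)^{\alpha}$ and likewise for $P$, with $1\le\lambda(w)\le|\Sigma|$. Substituting into the convolution gives $\pi_k(w)\sim c(w)\,(k-m+1)^{d(w)}\,\lambda(w)^{\,k-m}$ where $d(w)\in\{0,1\}$ and $\lambda(w)=\max(\lambda_A(w),\lambda_P(w))$. The normalising denominator $(k-m)+\log_{|\Sigma|}(k-m+1)$ was chosen precisely so that the polynomial factor $(k-m+1)^{d(w)}$ and the constant $c(w)$ are washed out, leaving $h_m(x_w)=\log_{|\Sigma|}\lambda(w)$ at each grid point $x_w=\Phi_>(w)/(|\Sigma|^m-1)$.

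For the outer limit I would exploit the fact that $x_w$ is exactly the base-$|\Sigma|$ number whose digits are $\varphi_>(a_1),\dots,\varphi_>(a_m)$ (Definition~\ref{def:rank}). Thus a point $x\in[0,1]$ with expansion $0.d_1d_2\cdots$ determines, through the bijection $\varphi_>$, an infinite word $\bar a=a_1a_2\cdots$, and $h_m$ is the piecewise-linear interpolation of $w\mapsto\log_{|\Sigma|}\lambda(w)$ over the length-$m$ prefixes $w^{(m)}=a_1\cdots a_m$. The goal is then to show that $\lambda(w^{(m)})$ converges as $m\to\infty$ to some $\lambda_\infty(x)$ and to set $f_\infty(x)=\log_{|\Sigma|}\lambda_\infty(x)$. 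The concrete tool I would use is the squeeze $\lambda^-(w)\le\lambda(w)\le\lambda^+(w)$ provided by the growth rates of the bounding sequences $A^-,A^+$ of Definition~\ref{def:antemer_bounds}: the lower bound $\lambda^-(w)$ is simply the dominant root of the explicit characteristic equation $\lambda=\sum_{i=0}^{i_{\max}(w)-1}c_i\,\lambda^{-i}$ with $c_0=\varphi_>(a_1)$ and $c_i=\min(\varphi_>(a_{\max}(i)),\varphi_>(a_{i+1}))$. As $m\to\infty$ this finite characteristic equation tends to the infinite one $\lambda=\sum_{i\ge 0}c_i\lambda^{-i}$, whose coefficients are read off $\bar a$ and whose root I would take as $\lambda_\infty(x)$; interpreting $\log_{|\Sigma|}\lambda_A(w)$ as the topological entropy of the factor-avoiding subshift of antemers gives a reassuring dynamical picture of this limit.

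The hard part will be the outer limit. Three difficulties stand out. First, the squeeze must actually \emph{close}: for finite $m$ the bounds $\pi_k^-$ and $\pi_k^+$ genuinely differ (compare $w_1$ and $w_2$), so one must show that the relative gap between $\lambda^-(w^{(m)})$ and $\lambda^+(w^{(m)})$ tends to $0$ as $m\to\infty$, i.e. that the coupling correction in Theorem~\ref{th:antemer_general} becomes spectrally subdominant. Second, the dependence of the coefficients $c_i$ on the autocorrelation structure of $w$ is genuinely nonlocal --- $a_{\max}(i)$ encodes long-range self-overlaps --- so the continuity of $\lambda_\infty$ in $x$ and the convergence of the truncated characteristic equations cannot be read off termwise and will require uniform control of the tail $\sum_{i\ge N}c_i\lambda^{-i}$. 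Third, the encoding $x\mapsto\bar a$ is discontinuous at the $|\Sigma|$-adic rationals, so $f_\infty$ can only be expected to be well-defined off a countable (hence measure-zero) set; this is exactly why I would phrase the convergence as $L^1$ or almost everywhere rather than uniform, and why the plateaus and oscillations of Figure~\ref{fig:partition_theory} --- the former forced by constrained prefixes, the latter by suffix-driven jumps in $\beta_{\max}(w)$ --- are fully consistent with a well-defined but highly irregular limit $f_\infty$.
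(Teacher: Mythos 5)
You should know at the outset that the paper does not prove this statement: it is Conjecture~\ref{conj:courbe_limite}, and the authors explicitly state that proving or disproving it is beyond the scope of the article. So there is no reference proof to compare against, and your text should be judged as a research program. As such it is sensible and consistent with the paper's numerics, but it is not a proof, and its two halves have unequal status. The ``routine'' inner limit ($m$ fixed, $k\to\infty$) is in fact exactly the paper's Conjecture~\ref{conj:asymptotics}, which the authors also leave open; your transfer-matrix/Perron--Frobenius route is the natural attack, but it needs checking that the associated nonnegative matrix is irreducible and aperiodic (reducible blocks or repeated dominant eigenvalues can produce polynomial factors beyond your $d(w)\in\{0,1\}$ --- harmless after normalization, but the claim as stated is unverified). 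Two further technical points: the postmer recurrence of Theorem~\ref{th:postmer_general} is \emph{not} constant-coefficient, since $\widetilde{a_{\max}}(i,\beta)$ and $\widetilde{\mathbf{T}_i}(\cdot,\beta)$ depend on $\beta$; you must observe that these coefficients stabilize (for $\beta\geq 2m$, say, since $\mathbf{T}_i(a)\leq i+1$) before Perron--Frobenius applies. And when $\beta_{\max}(w)$ is finite (as for $w_1=\ch{ACACAA}$), the convolution \eqref{eq:pi_calcul_beta_max} truncates to boundedly many terms, so there $\lambda(w)=\lambda_A(w)$ outright rather than $\max(\lambda_A,\lambda_P)$.

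The genuine gap is the outer limit, and your own chosen tool fails there as stated. The sandwich by the growth rates of $A^-$ and $A^+$ (Definition~\ref{def:antemer_bounds}) provably does \emph{not} close for all digit sequences: for highly self-overlapping $\bar a$ (e.g.\ periodic words such as the prefixes of $(\ch{ACACAA})^\infty$), the correction term in Theorem~\ref{th:antemer_general} fires at bounded $i$ with bounded $\mathbf{T}_i(a_{\max}(i))$, which perturbs low-order coefficients of the characteristic equation by $\Theta(1)$ and hence shifts the dominant root by $\Theta(1)$, uniformly in $m$. So any proof along your lines must be an almost-everywhere statement from the start, with the crux being exactly what you name but do not prove: that for typical (say, normal) digit sequences all exact self-overlap events occur only with short borders, contributing corrections of order $\lambda^{-i}$ at large $i$, which are spectrally subdominant. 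That step, together with uniform tail control of $\sum_{i\geq N}c_i\lambda^{-i}$ and an argument upgrading pointwise convergence on the (non-dyadic!) grid to $L^1$ convergence of the interpolants, is the entire difficulty; your proposal names it honestly but leaves it open. One small correction: the abscissa is $\Phi_>(w)/(|\Sigma|^m-1)$, and dividing by $|\Sigma|^m-1$ rather than $|\Sigma|^m$ means $x_w$ has the \emph{periodic} base-$|\Sigma|$ expansion $0.\overline{d_1\cdots d_m}$, not the digits of $w$ followed by zeros; the discrepancy is $O(|\Sigma|^{-m})$ and harmless in the limit, but your prefix-encoding of $x\mapsto\bar a$ should account for it. If you pursue this, the subshift-of-finite-type entropy formulation you mention in passing (antemers as words avoiding the factors $\leq w$) is likely the more robust route than the $A^{\pm}$ sandwich, since entropies of nested SFTs have monotone limits.
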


If this is the case, and provided that the function $f_\infty$ is easy to calculate, then we can approximate $\pi_k(w)$ by
$$\log_{|\Sigma|}\pi_k(w)\approx f_\infty\left(\frac{\Phi_>(w)}{|\Sigma|^m-1}\right) \cdot \Big((k-m) + \log_{|\Sigma|} (k-m+1)\Big).$$
Proving (or disproving) such a conjecture is beyond the scope of this article. However, it is conceivable to employ computing power to calculate this curve for values of $k,m$ as high as possible, and then, for a given minimizer, to look for the point in this curve closest to $p(w)$ --- which we do not intend to do in this article either.

\paragraph{Asymptotics of $\pi_k(w)$} The values obtained for $\pi_k(\ch{ACACAA})$ and $\pi_k(\ch{ACACAC})$ in Example~\ref{ex:valeurs_finales} seem to indicate an exponential growth of $\pi_k(w)$ as $k$ increases, at least for these two examples. This is visually confirmed in log-scale in Figure~\ref{fig:asymptotics:examples}, alongside some other examples.

Consider the following linear regression model: 
$$\log_{|\Sigma|}\pi_k(w) = A_w\times k + B_w + \varepsilon_w.$$

Then Figure~\ref{fig:asymptotics:regression} shows the estimated values for $\widehat{A_w}$ using the least squares method, together with the associated coefficient of determination $R^2$, for all $m$-mers with $m=6$, computed on the values obtained with $m\leq k \leq m+100$. As can be seen, with the exception of minimizers starting with \ch{T} (for which we have $\pi_k(w)=1$), we obtain $R^2$ values extremely close to $1$, underlining the relevance of the model. The value of the slope is globally decreasing, with oscillations and plateaus which reproduce what was observed in Section~\ref{ss:partition_theorique}.

\begin{figure}[h!]
    \centering
    
    \begin{subfigure}[t]{0.49\textwidth}
        \centering
            \includegraphics[width=\textwidth]{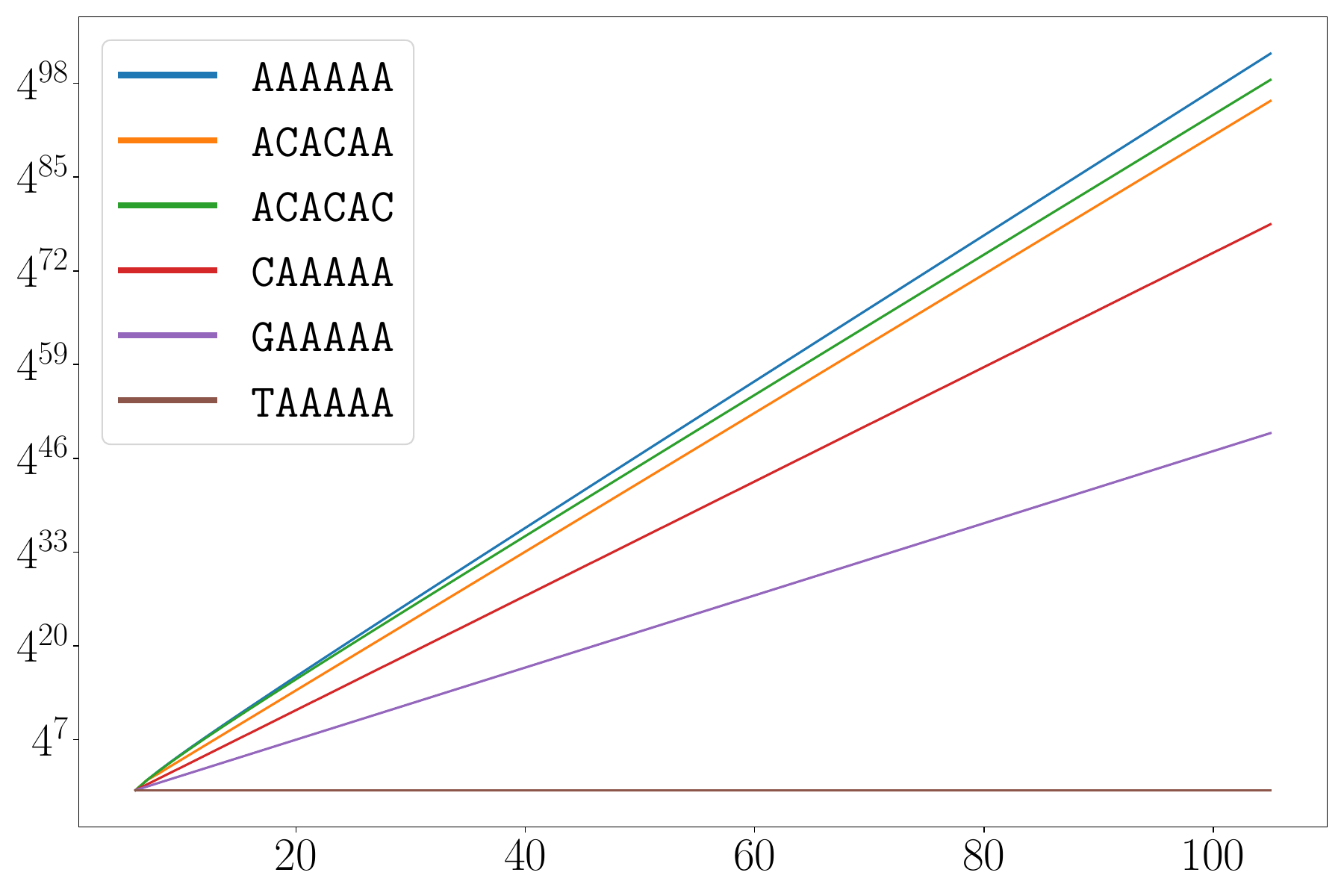}
    \caption{$\pi_k(w)$ as a function of $k$, for several choices of $w$}
    \label{fig:asymptotics:examples}
    \end{subfigure}\hfill
       \begin{subfigure}[t]{0.49\textwidth}
        \centering
            \includegraphics[width=\textwidth]{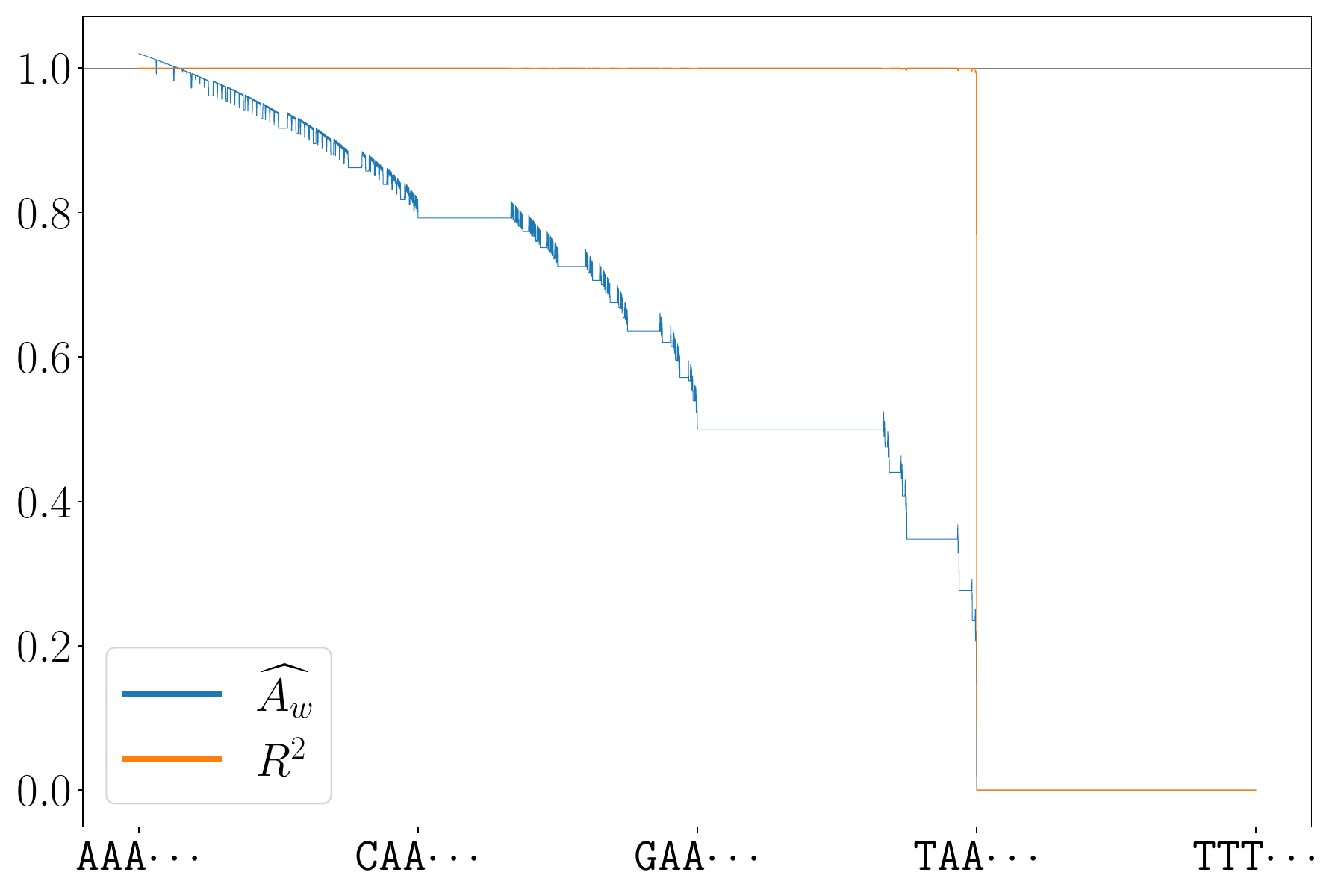}
 
        \caption{Estimation of $\widehat{A_w}$ and associated $R^2$ for all possible $6$-mer $w$, sorted lexicographically}
    \label{fig:asymptotics:regression}
    \end{subfigure}
   \caption{Asymptotics of $\pi_k(w)$.}
\label{fig:asymptotics}
\end{figure}

Extrapolating from these results, we formulate the following conjecture.

\begin{conjecture}\label{conj:asymptotics}
For any $m$-mer $w\in \Sigma^m$ with $m\geq 1$, if $w$ does not start by $\max(\Sigma)$, then
$$\pi_k(w) \sim \alpha(w)\cdot |\Sigma|^{\beta(w) k}$$
as $k\to\infty$, for some constants $\alpha(w),\beta(w)>0$.
\end{conjecture}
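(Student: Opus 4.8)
The plan is to reduce the asymptotics of $\pi_k(w)$ to those of the two building-block sequences $A(\alpha)$ and $P_m(\beta+m)$ through Equation~\eqref{eq:pi_calcul_beta_max}, and then to analyse each block with the transfer-matrix / Perron--Frobenius machinery. First I would observe that $\beta_{\max}(w)$ is, for $k$ large enough, either a fixed constant $\beta^\ast$ (when the set in Definition~\ref{def:postmer_beta_max} is non-empty) or equal to $k-m$ (otherwise), so the problem splits into two regimes: a \emph{finite-sum} regime and a \emph{convolution} regime.

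The core step is to show that both $A(\alpha)$ and $P_m(\beta+m)$ satisfy linear recurrences with constant coefficients. The coupled recurrences of Theorem~\ref{th:antemer_general} (resp. Theorem~\ref{th:postmer_general}) express $A_i(\alpha)$ (resp. $P_i(\beta)$) as a non-negative linear combination of earlier values, with shifts bounded by $m+1$. Packaging the state vector $\mathbf v(\alpha)=(A_0(\alpha),\dots,A_{i_{\max}(w)-1}(\alpha))$ together with its shifted copies $\mathbf v(\alpha-1),\dots,\mathbf v(\alpha-D)$ for a suitable delay $D=O(m)$ yields a first-order recurrence $\mathbf V(\alpha)=M\,\mathbf V(\alpha-1)$ with a \emph{non-negative} constant matrix $M$, and $A(\alpha)=\mathbf c^{\mathsf T}\mathbf V(\alpha)$ is a fixed linear functional of the state. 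Hence $A(\alpha)$ is a linear combination of terms $\lambda_i^{\alpha}$ times polynomials in $\alpha$, the $\lambda_i$ being eigenvalues of $M$; likewise for $P_m$.

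Next I would invoke Perron--Frobenius on $M$. Its spectral radius $\lambda_A=\rho(M)$ is a non-negative real eigenvalue; to obtain the clean form $A(\alpha)\sim c_A\lambda_A^{\alpha}$ --- simple dominant eigenvalue, real, with no equimodular competitors and no polynomial factor --- I would prove that $M$ is \emph{primitive}. Irreducibility follows from strong connectivity of the underlying state graph: because $w$ does not begin with $\max(\Sigma)$, every prefix-length configuration admits a valid continuation (one can always append a sufficiently large letter to fall back to the $i=0$ state, and from there rebuild any prefix), so all states communicate; aperiodicity follows from the self-loop produced by appending $\max(\Sigma)$ at the $i=0$ state. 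The strict inequality $\lambda_A>1$ (hence $\beta(w)>0$) comes from the genuine branching available at the $i=0$ state, where $\varphi_>(a_1)\ge 1$ and at least one further independent continuation exists. The identical argument applied to the postmer system gives $P_m(\beta+m)\sim c_P\lambda_P^{\beta}$.

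It then remains to assemble the two blocks. In the finite-sum regime, $\pi_k(w)=\sum_{\beta=0}^{\beta^\ast}P_m(\beta+m)\,A(k-m-\beta)$ has a fixed number of terms, so $\pi_k(w)\sim\bigl(\sum_{\beta=0}^{\beta^\ast}P_m(\beta+m)\lambda_A^{-\beta}\bigr)c_A\lambda_A^{k-m}$, giving $\beta(w)=\log_{|\Sigma|}\lambda_A$. In the convolution regime, passing to the rational generating functions $\mathcal A(x)=\sum_\alpha A(\alpha)x^{\alpha}$ and $\mathcal P(x)=\sum_\beta P_m(\beta+m)x^{\beta}$, the coefficients of the product $\mathcal A\mathcal P$ are governed by its smallest pole $1/\max(\lambda_A,\lambda_P)$: when $\lambda_A\ne\lambda_P$ the pole is simple and one recovers a clean exponential with $\alpha(w)$ a convergent geometric sum. \textbf{The main obstacle is precisely the degenerate case} $\lambda_A=\lambda_P$. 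Antemers ($m$-mers $>w$) and postmers ($m$-mers $\ge w$) are counted by almost identical constraint automata, differing only in a strict-versus-nonstrict boundary condition, so their growth rates can genuinely coincide; this appears consistent with the numerics for $w_2=\ch{ACACAC}$, where the successive ratios of $\pi_k$ are well fitted by $(k-m)\lambda^{k-m}$ with $\lambda\approx 3.79\approx\lambda_A$. A coincident pole is a \emph{double} pole, producing a factor $(k-m)$ and hence an asymptotic $\pi_k(w)\sim c\,(k-m)\,|\Sigma|^{\beta(w)k}$ rather than the pure-exponential form stated. Settling the conjecture as written therefore seems to require either exhibiting a cancellation that annihilates the numerator at the double pole, or amending the statement to $\pi_k(w)\sim\alpha(w)\,k^{d(w)}|\Sigma|^{\beta(w)k}$ for some integer $d(w)\ge 0$; proving primitivity uniformly over all admissible $w$ is the remaining, more routine, difficulty.
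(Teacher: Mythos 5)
The first thing to say is that the paper contains \emph{no proof} of this statement: it is stated as a conjecture, supported only by the regression experiments of Section~\ref{ss:approximation} (Figure~\ref{fig:asymptotics}), and the authors explicitly write that proving or disproving it is beyond the scope of the article. So there is no reference proof to compare against, and your proposal must stand on its own. Its skeleton is the natural one --- reduce via Equation~\eqref{eq:pi_calcul_beta_max} to the blocks $A(\alpha)$ and $P_m(\beta+m)$, note that Theorems~\ref{th:antemer_general} and~\ref{th:postmer_general} give non-negative linear recurrences whose coefficients become constant once $\beta\geq 2m$ (the indicators in $\widetilde{\mathbf{T}_i}$ stabilize), and apply Perron--Frobenius --- but two of the claims you treat as established are wrong, so the attempt does not settle the conjecture.

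First, your argument for $\lambda_A>1$ (``genuine branching at the $i=0$ state, where $\varphi_>(a_1)\ge 1$ and at least one further independent continuation exists'') fails, and with it the irreducibility sketch. Take $w=\ch{G}\ch{T}^{m-1}$ over the DNA alphabet: a letter $\ch{A}$ or $\ch{C}$ anywhere in $y$ starts a window smaller than $w$, and a letter $\ch{G}$ at position $p$ of $y$ starts a window $\ch{G}u_1\cdots u_{m-1}$ which can only reach $w$ itself (if $u_1\cdots u_{m-1}=\ch{T}^{m-1}$) or fall below it, never strictly exceed it; hence the only $\alpha$-antemer is $\ch{T}^\alpha$, $A(\alpha)\equiv 1$ and $\lambda_A=1$, even though $w$ does not start with $\max(\Sigma)$. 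Here all the exponential growth of $\pi_k(w)$ comes from the postmer block, most $A_i$ vanish identically, and the transfer matrix is reducible --- so ``all states communicate'' is false in general, and primitivity must be replaced by an analysis of the accessible communicating class carrying the spectral radius. This is genuine work you have deferred, not a routine verification; it also leaves open whether $\beta(w)>0$ can fail in the finite-sum regime if $\lambda_A=1$ there.

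Second, your evidence for the degenerate case $\lambda_A=\lambda_P$ is a misreading of a slow transient. For $w_2=\ch{ACACAC}$, eliminate $P_6$ from the paper's own system: since $P_6(\beta)=3P(\beta-7)+P_5(\beta-2)+P_6(\beta-2)$ and $P_5(\beta-2)=2P(\beta-8)$, a short computation collapses the full system to the exact recurrence $P(\beta)=3P(\beta-1)+3P(\beta-2)$ for large $\beta$ --- check $3\,(54\,888+14\,473)=208\,083$ and $3\,(208\,083+54\,888)=788\,913$ against Example~\ref{ex:valeurs_finales}. Hence $\lambda_P=(3+\sqrt{21})/2\approx 3.79129$, strictly larger than $\lambda_A\approx 3.79039$, the largest root of $x^6=3x^5+2x^4+3x^3+2x^2+3x+2$ from Example~\ref{ex:antermer_formulas}. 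The successive ratios of $\pi_k(w_2)$ mimic a $(k-m)\lambda^{k-m}$ fit only because $\lambda_P/\lambda_A\approx 1.0002$ makes the convolution $\sum_\beta A(k-m-\beta)P_m(\beta+m)$ converge extremely slowly; the true asymptotic is pure exponential with rate $\lambda_P$, consistent with the conjecture. This does not by itself exclude $\lambda_A=\lambda_P$ for some other $w$, but it removes your cited evidence, and it points to the likely fix: in the convolution regime one has $i_{\max}(w)=m$ and, for $\beta\geq 2m$, $\widetilde{a_{\max}}(i,\beta)=a_{\max}(i)$, so the postmer system contains the antemer transitions entrywise plus the extra state $i=m$ feeding back with non-negative weight; strict monotonicity of the spectral radius for irreducible non-negative matrices would then force $\lambda_P>\lambda_A$, a simple dominant pole, and the clean exponential --- but that again rests on the communication structure you have not pinned down. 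In short: right toolbox, no proof yet, and the two pillars presented as easy (uniform primitivity, $\lambda_A>1$) are respectively open and false.
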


This conjecture, if true, would provide yet another way of approximating $\pi_k(w)$, in addition to Conjecture~\ref{conj:courbe_limite} --- and assuming that $\alpha(w)$ and $\beta(w)$ are easy to compute. Again, we shall not pursue this question further in this article.

\subsection{Imbalance in practice}\label{ss:imabalance}

In this section, we consider typical bioinformatics datasets and study the relationship between the empirical partitions and what the theory predicts. This section is also intended as a discussion rather than an exhaustive study, the aim of which is to raise directions for future research that would use our theoretical approach to improve in practice the partition methods used in bioinformatics.

\paragraph{Datasets}

Table~\ref{tab:datasets}  gives, for each dataset, the number of $k$-mers, as well as the number of minimizers encountered for different values of $m$.

\begin{table}[h!]
    \centering
        \resizebox{\textwidth}{!}{%
\begin{tabular}{c|cccccccc}
      \multirow{2}{*}{Dataset} & \multirow{2}{*}{$k$}& \multirow{2}{*}{Number of $k$-mers} &  \multicolumn{6}{c}{Number of minimizers}\\
       &&&$m=8$ & $m=9$& $m=10$& $m=11$& $m=12$& $m=21$\\
      \hline
    Human chromosome 1 (GRCh38) & $31$ & $204\,155\,258$& \xmark & \xmark & $332\,999$& $1\,001\,678$&$2\,740\,352$ & $37\,716\,694$\\
    Chromosome Y (GRCh38) & $31$ & $17\,680\,259$& $20\,612$& $62\,197$& $173\,673$&  $426\,104$& $862\,743$&\xmark\\
    \emph{Escherichia Coli (\href{https://www.ncbi.nlm.nih.gov/datasets/genome/GCF_000273425.1/}{MG12655\_V1}}) & $21$ & $4\,543\,786$ & $26\,344$&  $84\,431$ & $232\,361$ & $491\,574$& $761\,243$& \xmark \\
    RNA human lung dataset (\href{https://www.ncbi.nlm.nih.gov/sra/SRR8616107}{SRR8616107}) & $61$ & $190\,917\,566$& $17\,410$& $55\,118$ & $175\,405$& $532\,489$&\xmark & \xmark\\
    \end{tabular}
    }
    \caption{Statistics of the datasets}
    \label{tab:datasets}
\end{table}

To avoid overloading the paper with too many figures, we are not going to show the results for all possible values of $m$ for each dataset. However, the data and scripts for reproducing all the figures are available on Github\footnote{At \url{https://github.com/fingels/minimizer_counting_function}.}. Note that, overall, for close values of $m$ (i.e. all of them except $21$), the observations are similar --- therefore the conclusions as well.

Please note that in this section, unlike the results presented in Sections~\ref{ss:partition_theorique} and~\ref{ss:approximation}, the theoretical values $\pi_k(w)$ have only been calculated for the minimizers actually encountered in the data, and not for all the $|\Sigma|^m$ possible minimizers. In particular, please pay attention to the $x$-axis on the figures: unlike, say, Figure~\ref{fig:partition_theory}, the figures in this section --- e.g., upcoming Figure~\ref{fig:empirical_partition_vs_theory}, it is clear that the entire space of minimizers is not explored; with a bias in favour of minimizers starting with an \ch{A}. 

\paragraph{Comparing partitions} What we did for each minimizer $w$ we encountered in the data was to compute the theoretical value $\pi_k(w)$ and compare this with the observed empirical value $\widehat{\pi_k}(w)$ --- where $\widehat{\pi_k}(w)$ counts the number of $k$-mers from the data that admit $w$ as a minimizer. The results are shown in Figure~\ref{fig:empirical_partition_vs_theory}.

\begin{figure}[h!]
    \centering
\begin{subfigure}{0.49\textwidth}
\centering
\includegraphics[width=\textwidth]{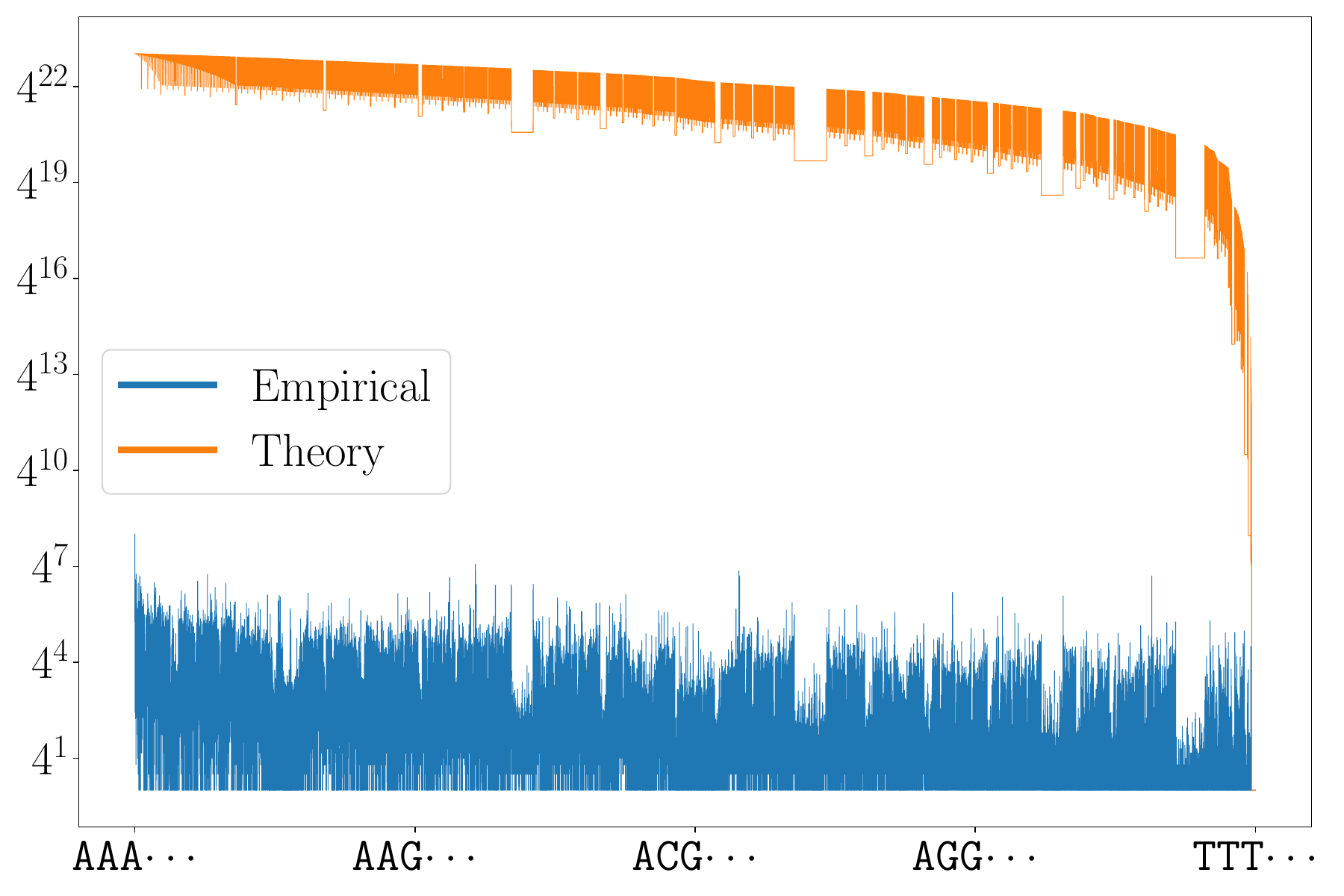}
    \caption{Chromosome Y, $k=31, m=10$}
    \label{fig:empirical_partition_vs_theory:a} 
\end{subfigure}\hfill
\begin{subfigure}{0.49\textwidth}
\centering
\begin{tikzpicture}
\node at (0,0) {\includegraphics[width=\textwidth]{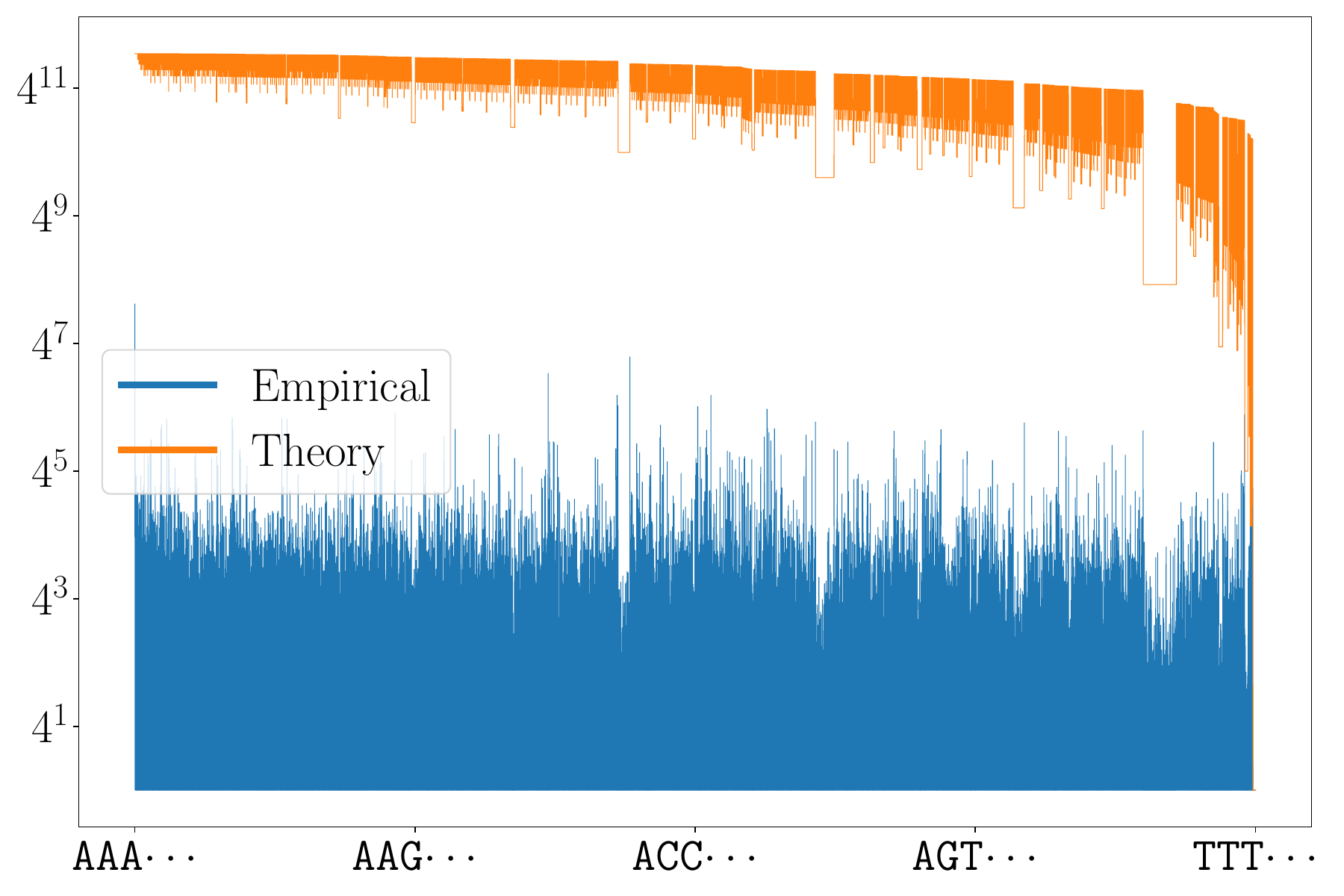}};
\draw[->,>=latex,ultra thick, red] (2,1)--(2.75,1);
\end{tikzpicture}
    \caption{Chromosome 1, $k=31, m=21$}
    \label{fig:empirical_partition_vs_theory:b} 
\end{subfigure}

\begin{subfigure}{0.49\textwidth}
\centering
\includegraphics[width=\textwidth]{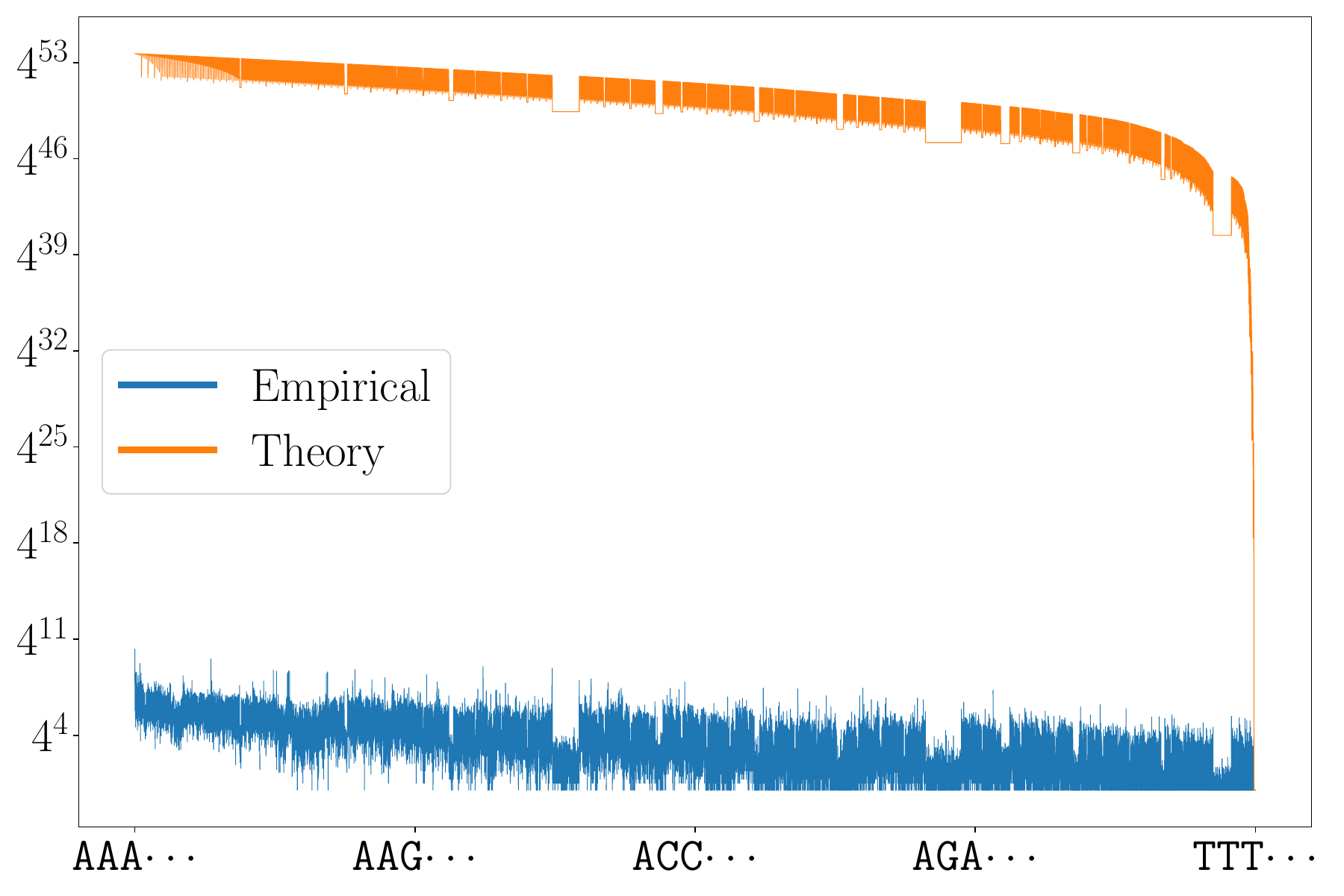}
    \caption{RNA fusion, $k=61, m=10$}
    \label{fig:empirical_partition_vs_theory:c} 
\end{subfigure}\hfill
\begin{subfigure}{0.49\textwidth}
\centering
\includegraphics[width=\textwidth]{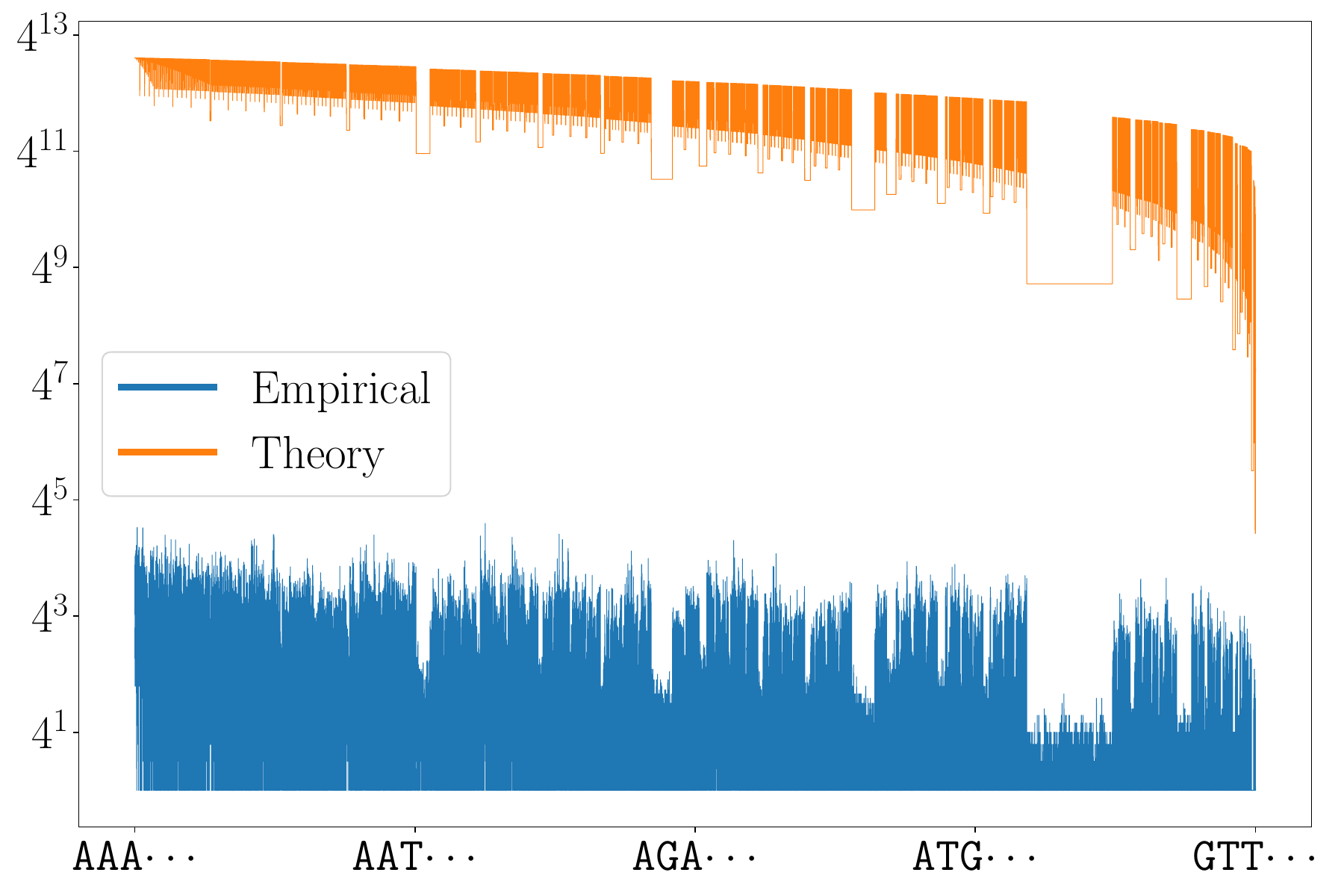}
    \caption{\emph{Escherichia Coli}, $k=21, m=10$}
    \label{fig:empirical_partition_vs_theory:d} 
\end{subfigure}
    \caption{Comparing the empirical partition $\widehat{\pi_k}(w)$ and the theory $\pi_k(w)$.}
    \label{fig:empirical_partition_vs_theory}
\end{figure}

As one can see, we are a long way from filling the buckets.  Remarkably, local decreases in $\pi_k(w)$ are also visible in $\widehat{\pi_k}(w)$ (for example, the plateau at the end of the distribution in Figure~\ref{fig:empirical_partition_vs_theory:d}). This indicates that the values observed empirically are not totally uncorrelated with what the theory predicts --- even if this is not true in all cases; for example, the local drop at the end of the distribution in Figure~\ref{fig:empirical_partition_vs_theory:b}, indicated by a red arrow, is not as obvious empirically as it is in theory. These alignments between theory and practice become all the more apparent  in upcoming Figure~\ref{fig:frequences}.

\paragraph{Comparing frequencies} Since the empirical and theoretical partitions appear to be correlated, we are interested here in comparing frequencies, i.e.

$$f_k(w) = \frac{\pi_k(w)}{|\Sigma|^k} \quad \text{and} \quad \widehat{f_k}(w)=\frac{\widehat{\pi_k}(w)}{\text{Number of $k$-mers in the data}}.$$

We compare them on a logarithmic scale, which in practice means translating the curves in Figure~\ref{fig:empirical_partition_vs_theory}; this brings us to Figure~\ref{fig:frequences}.

\begin{figure}[h!]
    \centering
\begin{subfigure}{0.49\textwidth}
\centering
\includegraphics[width=\textwidth]{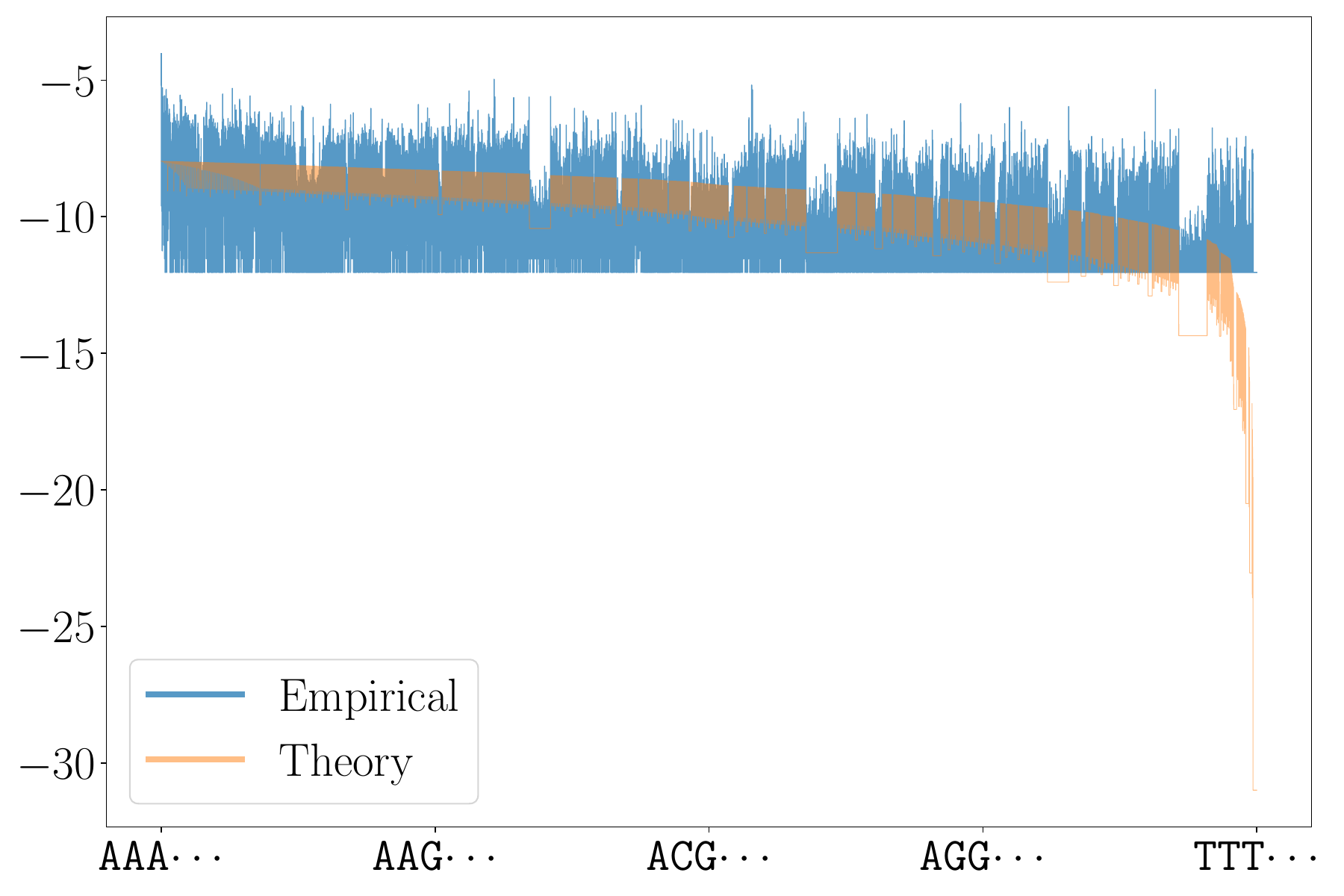}
    \caption{Chromosome Y, $k=31, m=10$}
    \label{fig:frequences:a} 
\end{subfigure}\hfill
\begin{subfigure}{0.49\textwidth}
\centering
\includegraphics[width=\textwidth]{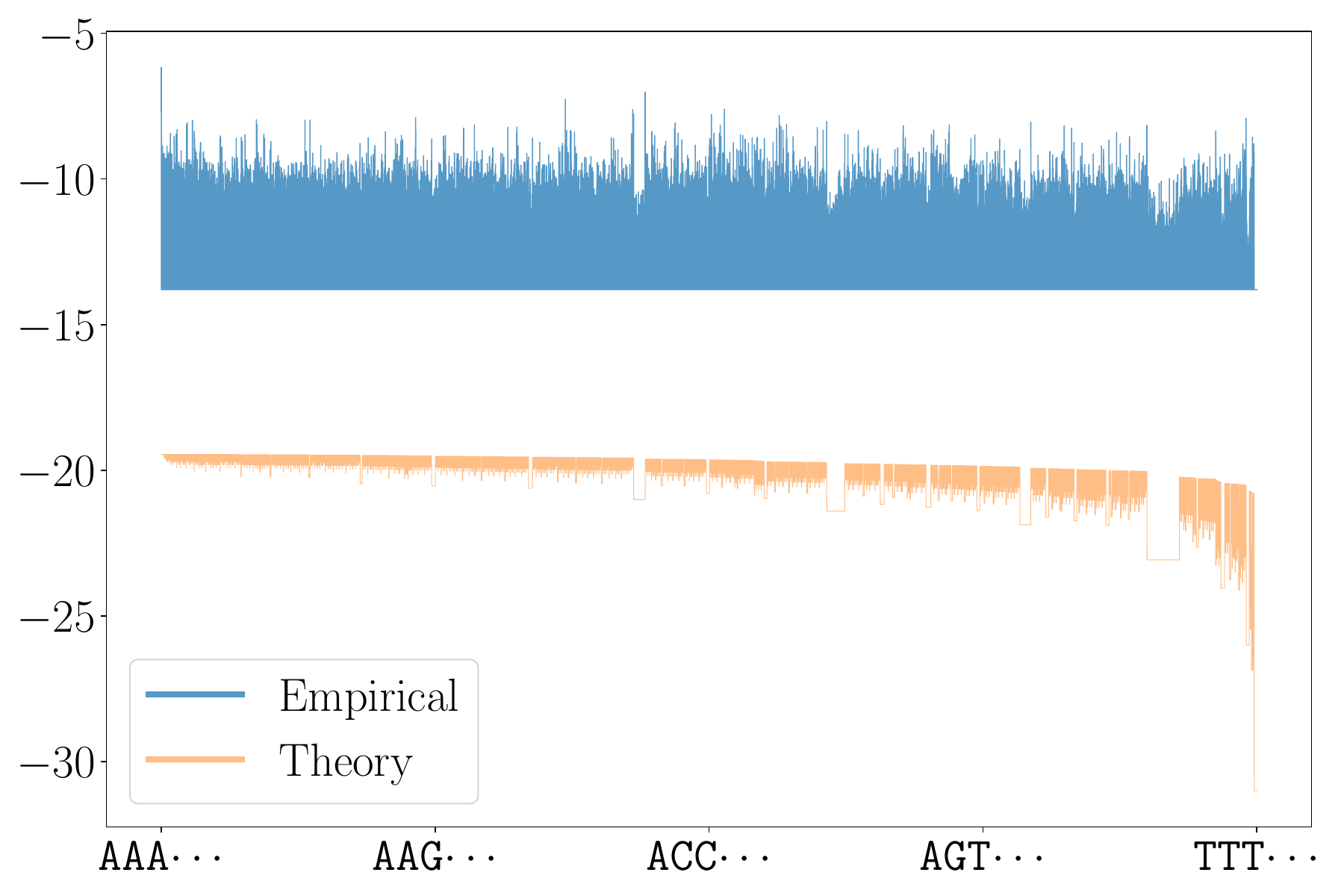}
    \caption{Chromosome 1, $k=31, m=21$}
    \label{fig:frequences:b} 
\end{subfigure}

\begin{subfigure}{0.49\textwidth}
\centering
\includegraphics[width=\textwidth]{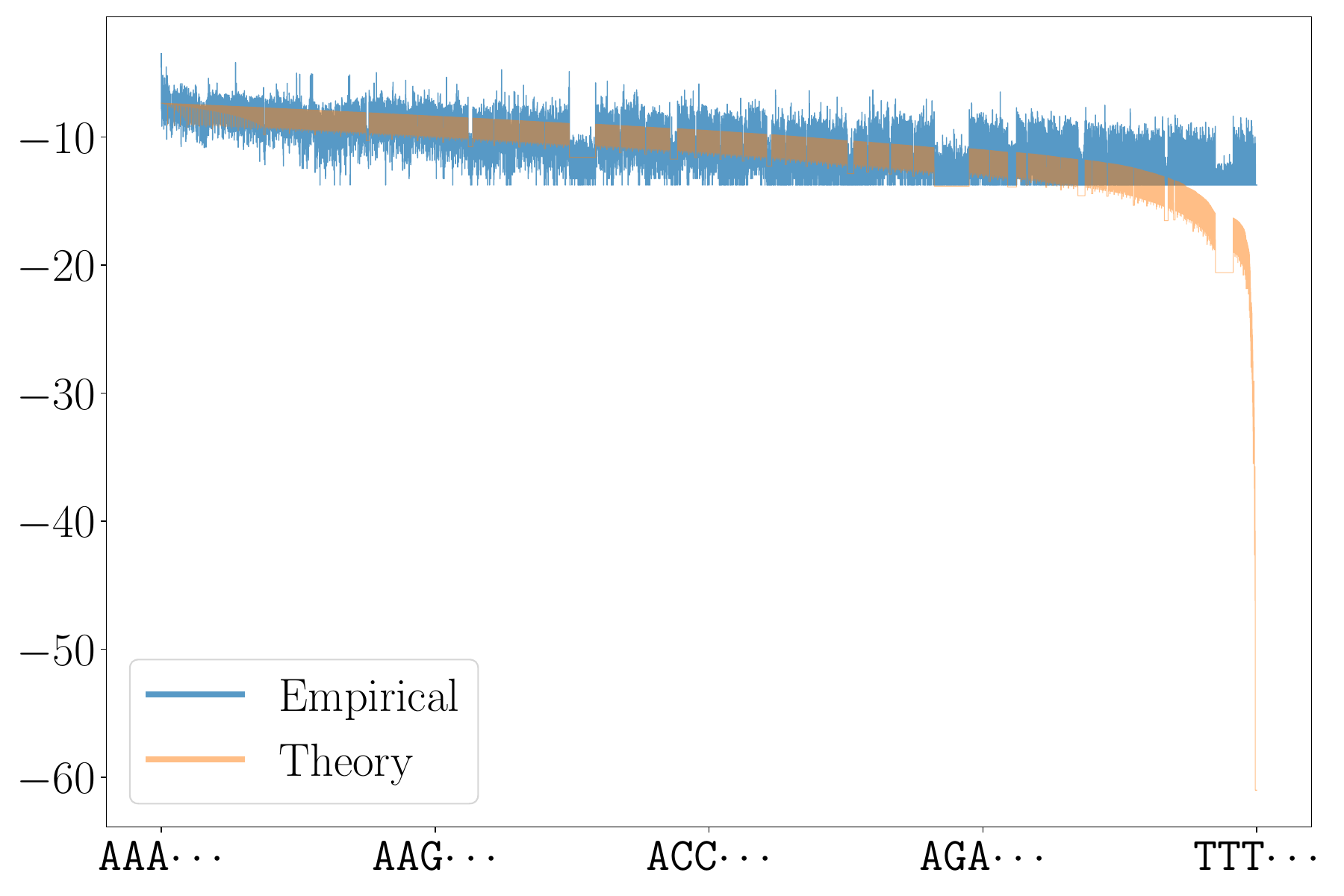}
    \caption{RNA fusion, $k=61, m=10$}
    \label{fig:frequences:c} 
\end{subfigure}\hfill
\begin{subfigure}{0.49\textwidth}
\centering
\includegraphics[width=\textwidth]{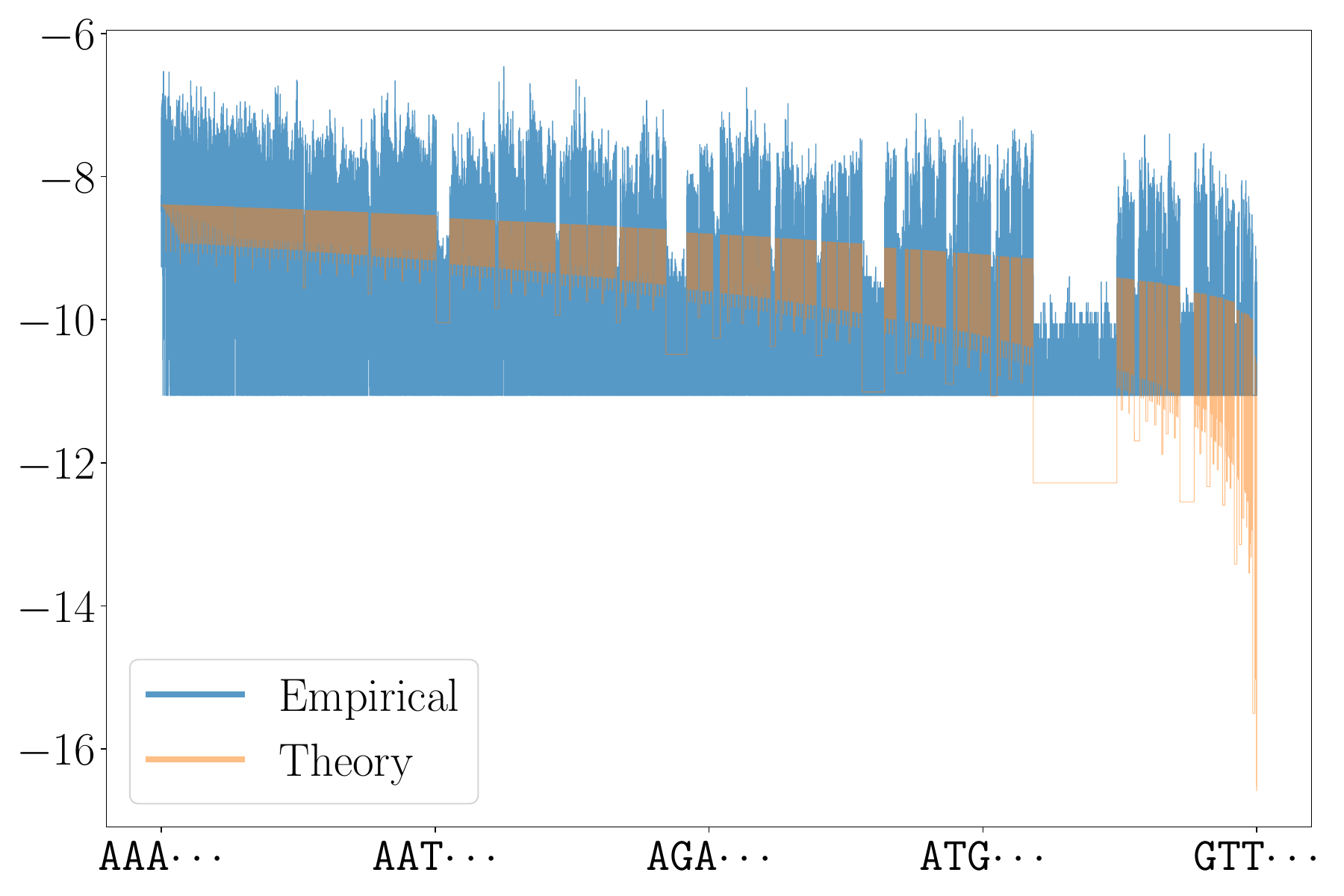}
    \caption{\emph{Escherichia Coli},$k=21, m=10$}
    \label{fig:frequences:d} 
\end{subfigure}
    \caption{Comparing the empirical frequencies $\log_{|\Sigma|}\widehat{f_k}(w)$ and the theory $\log_{|\Sigma|}f_k(w)$.}
    \label{fig:frequences}
\end{figure}

Quite remarkably, with the exception of Figure~\ref{fig:frequences:b}, we see that the theoretical frequency  seems to approximate the empirical frequency, particularly for small minimizers --- and not so much for large ones. Notably the local drops in the number of $k$-mers coincide very well between the empirical and theoretical distribution. It is not our ambition to quantify precisely to what extent $f_k(w)$ can correctly approximate $\widehat{f_k}(w)$ in this article, but it opens up an interesting avenue: using the theoretical frequency (or a transformation of it) to predict the empirical frequency. We can imagine partition heuristics that would use this oracle to guide the choice of minimizers, for example. We leave this question open for future work.

\paragraph{Quantifying the influence of $m$} We have seen that $f_k(w)$ and $\widehat{f_k}(w)$ do not align as well as for the other data in Figure~\ref{fig:frequences:b}. Among the data shown, this is the only one for which we have $m=21$. If we generate the same figure exclusively for this dataset (human chromosome 1) but with varying values of mm, we obtain Figure~\ref{fig:chr1}.

\begin{figure}[h!]
    \centering
\begin{subfigure}{0.49\textwidth}
\centering
\includegraphics[width=\textwidth]{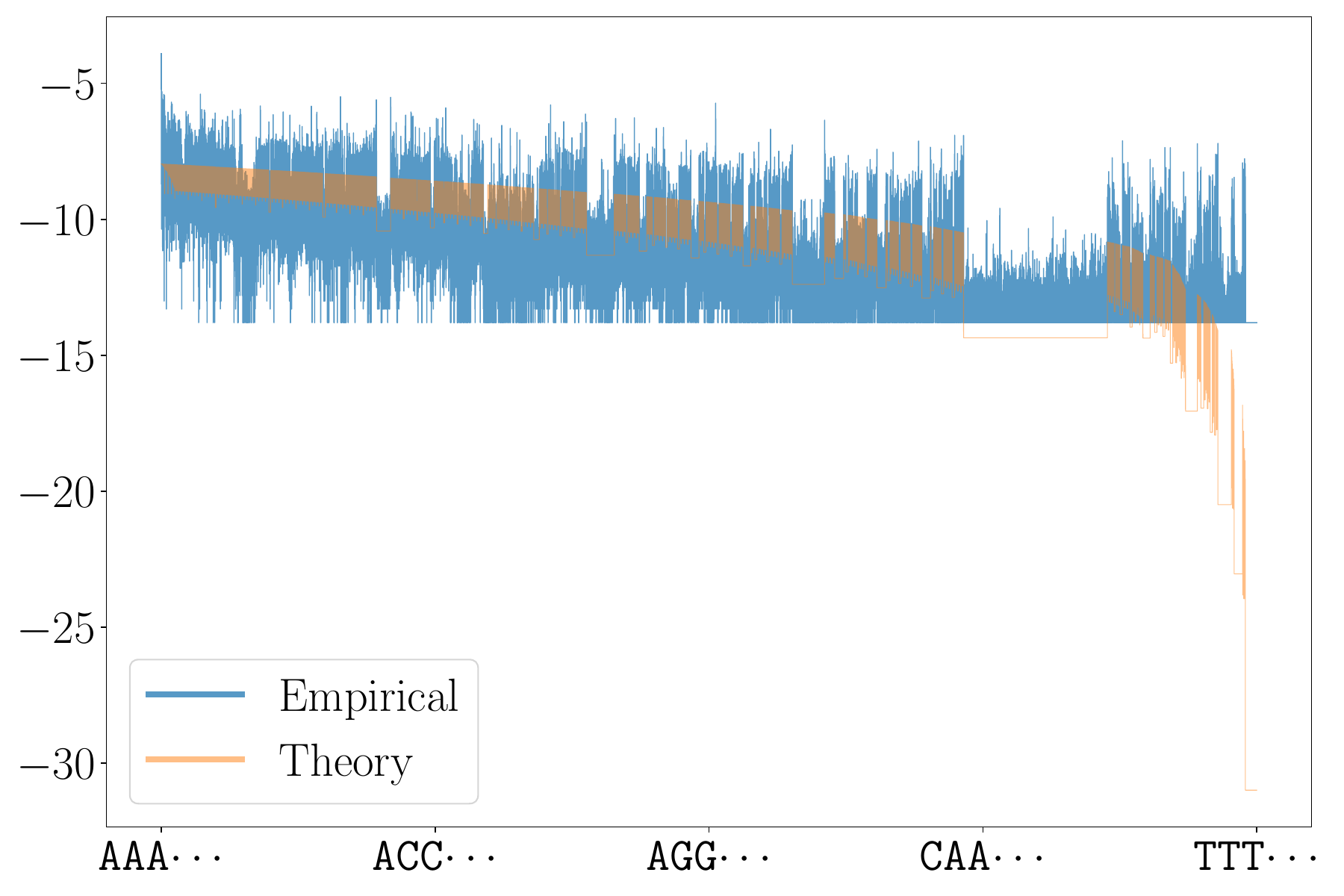}
    \caption{$m=10$}
    \label{fig:chr1:a} 
\end{subfigure}\hfill
\begin{subfigure}{0.49\textwidth}
\centering
\includegraphics[width=\textwidth]{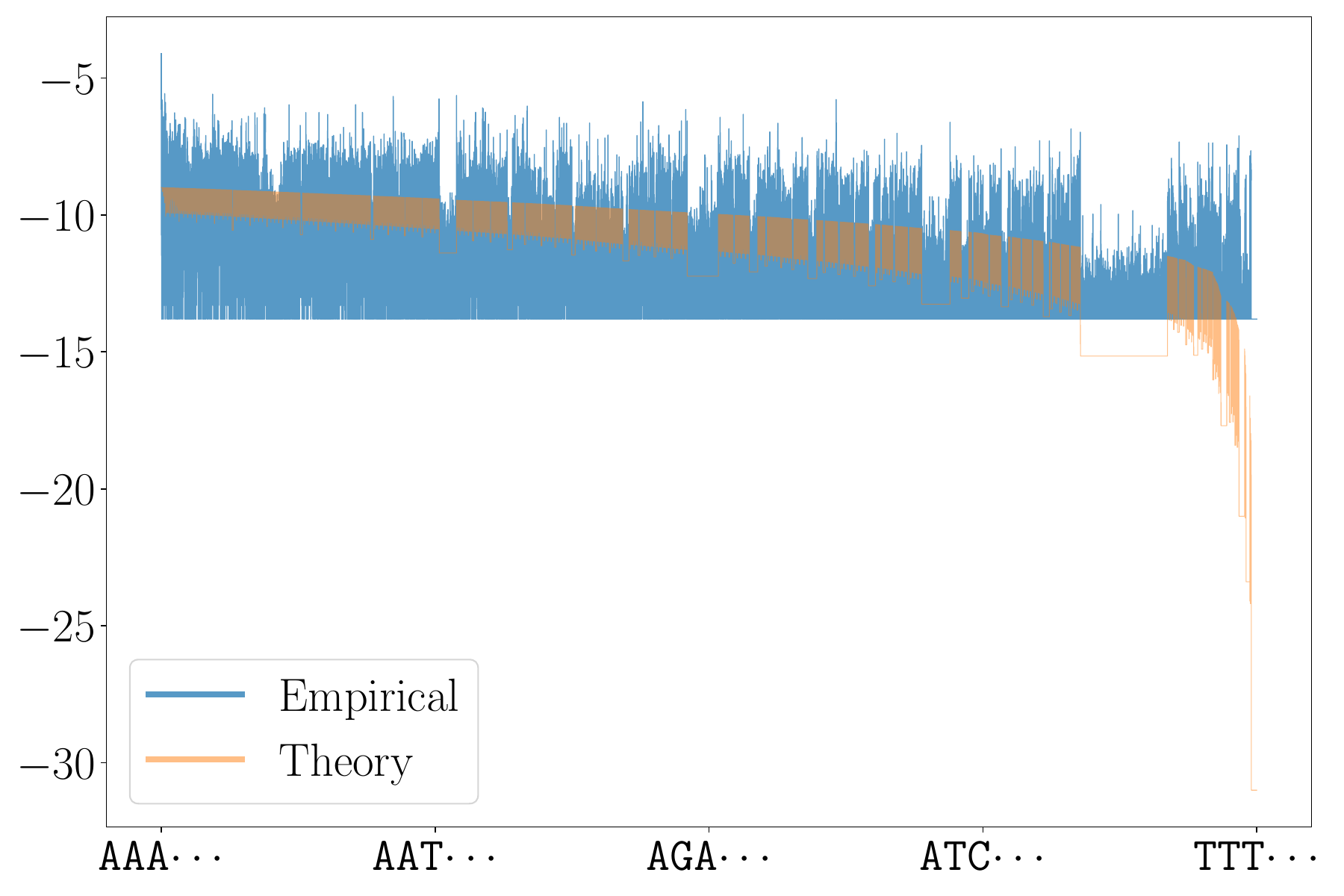}
    \caption{$m=11$}
    \label{fig:chr1:b} 
\end{subfigure}

\begin{subfigure}{0.49\textwidth}
\centering
\includegraphics[width=\textwidth]{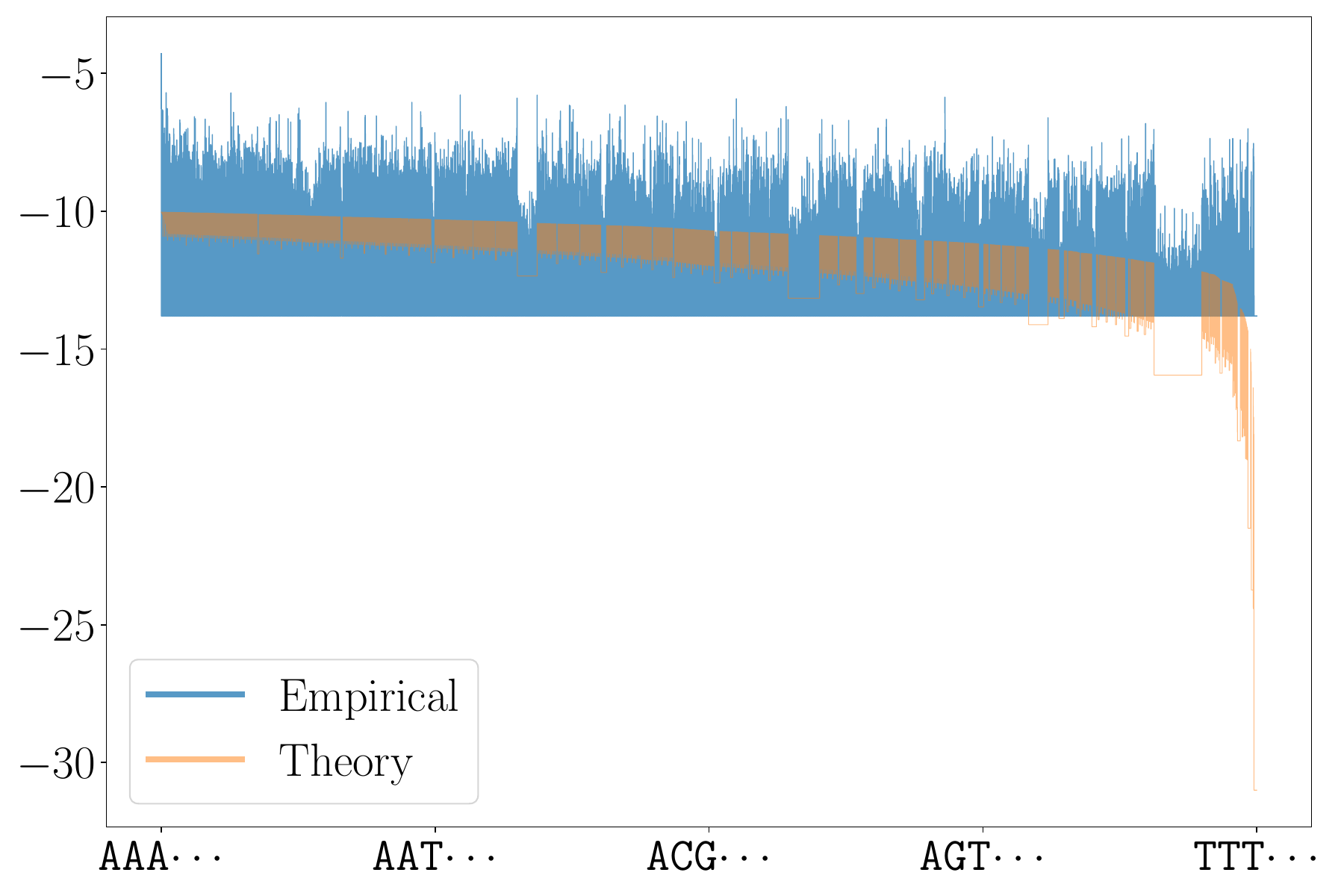}
    \caption{$m=12$}
    \label{fig:chr1:c} 
\end{subfigure}\hfill
\begin{subfigure}{0.49\textwidth}
\centering
\includegraphics[width=\textwidth]{figures_pratique/frequences_Hg_chr1_m21_k=31_m=21.pdf}
    \caption{$m=21$}
    \label{fig:chr1:d} 
\end{subfigure}
    \caption{Comparing the empirical frequencies $\log_{|\Sigma|}\widehat{f_k}(w)$ and the theory $\log_{|\Sigma|}f_k(w)$ for the chromosome 1 dataset, with different values of $m$, and $k=31$.}
    \label{fig:chr1}
\end{figure}

We can already see, simply by looking at the $x$-axis, that as $m$ increases, the minimizers selected become increasingly smaller. Furthermore, the theoretical curve seems to shift downwards as $m$ increases, explaining the poor fit when $m=21$. However, one can imagine looking for a renormalization factor depending on $m$ to ``pull up'' the theoretical curve --- if one wish to use $f_k(w)$ as an oracle for $\widehat{f_k}(w)$, as discussed above.

\section*{Conclusion}

In this article, we focused on quantifying theoretically $\pi_k(w)$, i.e. how many $k$-mers admit a given $m$-mer $w$ as a lexicographic minimizer. In a context of partitioning $k$-mer according to their minimizer, this amounts to calculating the size of the associated bucket in the worst case. Most of the article has been devoted to establishing the recursive equations for calculating these quantities. We showed that $\pi_k(w)$ can be computed in $O(km)$ space and $O(km^2)$ time, and proposed approximations that can be computed in $O(k)$ space and $O(km)$ time.

The results we obtained numerically in Section~\ref{sec:numerical} open up a number of perspectives for further research, which we hope to explore in the future. In particular, from a theoretical point of view:
\begin{description}
\item[Conjecture~\ref{conj:courbe_limite}] We found that the functions $ w\in\Sigma^m \mapsto \pi_k(w)$ are highly similar, and when renormalised correctly, they overlap almost perfectly. Is there a limit function to which those functions converge?
\item[Conjecture~\ref{conj:asymptotics}] We also found that, at a fixed $w$, the function $k\mapsto \pi_k(w)$ was remarkably well approximated by an exponential function. Is it possible to derive theoretically that $\pi_k(w)\sim \alpha(w) \cdot |\Sigma|^{\beta(w)k}$ for some constants $\alpha(w),\beta(w)$?
\end{description}

When considering genomic data, we found that, for small values of $m$, the empirical frequency was susceptible of being approximated by the theoretical frequency --- and that local variations in $\pi_k(w)$ could be retrieved empirically. As detailed in the introduction, it should be remembered that there are arguments in favour of using lexicographic minimizers, such as reduced storage and fine-grained comparisons, if only we could counterbalance the empirical skewness of the partitions. In the future, we intend to pursue our work in this direction, using the theoretical frequency as an oracle for the empirical frequency, and devise new partitions heuristics.

More generally, we believe that a better theoretical understanding of the distribution of lexicographic minimizers can lead to the development of new relevant practical methods. One of the theoretical directions we think it would be useful to explore in the future is the link between minimizers of different sizes --- especially in light of the recent article \cite{alanko2024finimizers}. Suppose $w$ and $w'$ are two words of different length where $w'$ is a substring of $w$. Is there any link between $\pi_k(w')$ and $\pi_k(w)$?

We are also interested in the distribution of \emph{canonical} $k$-mers among lexicographic minimizers. A $k$-mer $x=a_1\cdots a_k$ is said to be canonical if $x\leq\rc(x)$, where $\rc(x) = \overline{a_k}\cdots \overline{a_1}$ is called the reverse complement of $x$, and where $\overline{\cdot}$ is a involution of the DNA alphabet, defined by $\overline{\ch{A}}=\ch{T}$ and $\overline{\ch{C}}=\ch{G}$. Canonical $k$-mers are absolutely central to bioinformatics --- and are specific to this discipline, see \cite{marcais2024k}.  It would be of utmost interest to also be able to compute a minimizer counting function for canonical $k$-mers, that we would define as
$$\pi_k^c(w) = \left|\lbrace x\in \Sigma^k : (\min(x)=w)\wedge (x \text{ is canonical}\rbrace\right|.$$
To give a brief overview of this question, we have enumerated, in brute force, all $k$-mers for $k=12$, and calculated, for those that were canonical, their lexicographic minimizer (with $m=4$), and compared the distribution obtained with that of ``regular'' $k$-mers. We obtained Figure~\ref{fig:canonical}.

\begin{figure}[h!]
    \centering

\begin{minipage}[c]{0.49\textwidth}
    \includegraphics[width=\textwidth]{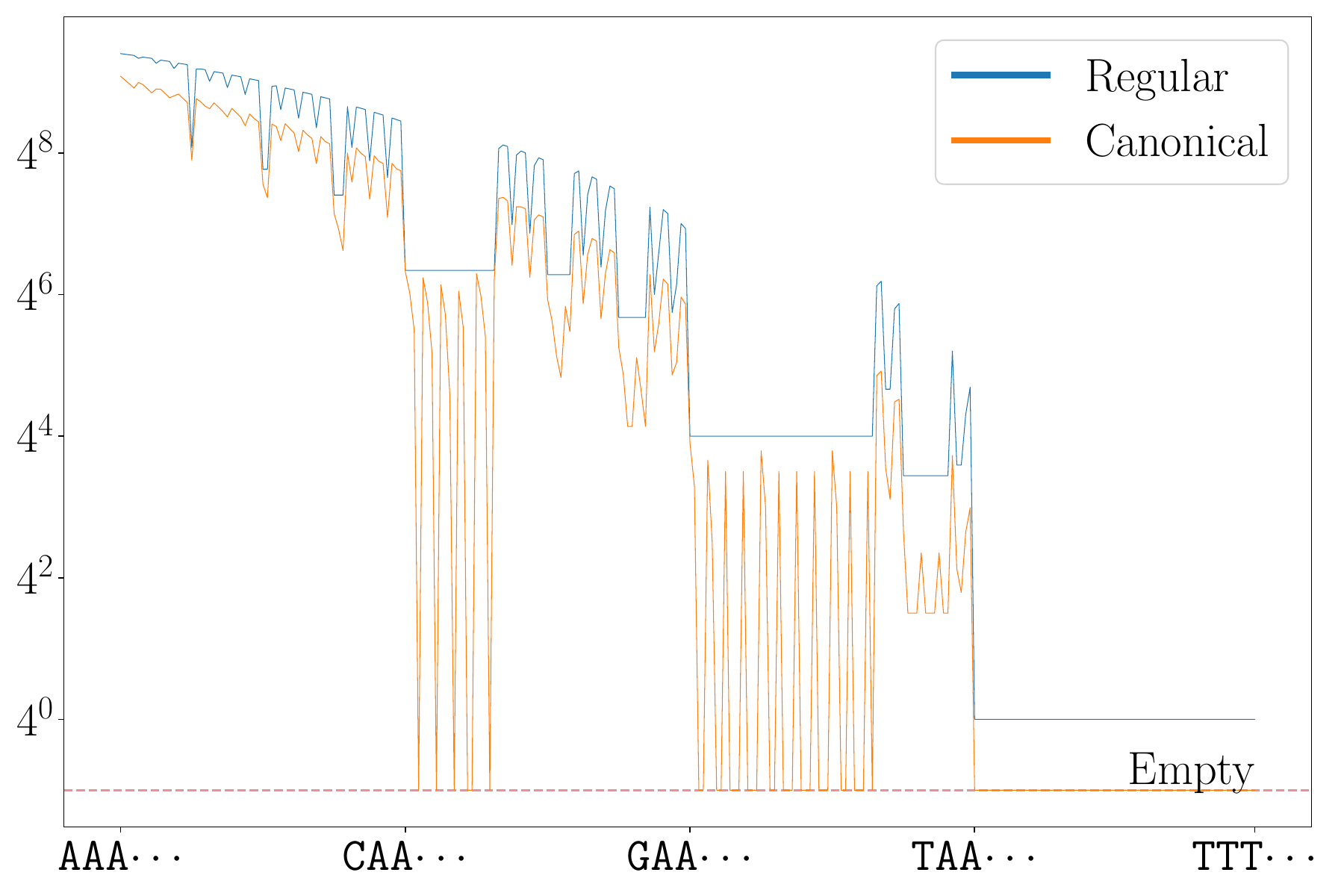}
\end{minipage}\hfill
\begin{minipage}[c]{0.49\textwidth}
    \caption{Distribution of $12$-mers among minimizers of size $m=4$ depending on whether they are canonical or not. The ``Empty'' line corresponds to minimizer that were not seen in the context of canonical $k$-mers but do exist nonetheless for regular $k$-mers.}
    \label{fig:canonical}
\end{minipage}

\end{figure}

Note that there are roughly half the number of canonical $k$-mers compared to regular one (see \cite{wittler2023general} for a precise count), which may explain why the canonical curve is globally below the regular curve. Observe that the plateaus disappear, as many minimizers are not found in the canonical $k$-mers. Nevertheless, we note that the curves are sufficiently different to justify our interest in canonical $k$-mers, and sufficiently close to each other to hope to use the same kind of methods developed in this article.

\paragraph{Closing remark} Through this work, we hope to convince the bioinformatics community that there are still things to explore in terms of lexicographic minimizers, with the hope of bringing them back into practical applications, in view of the perspectives mentioned above.

\section*{Acknowledgements}

This work is funded by a grant from the French ANR: Full-RNA ANR-22-CE45-0007.

\printbibliography

\appendix

\section{Preprocessing algorithm}\label{app:computation_autocorrelation_matrix}

In this section, we propose an algorithm for computing all the important quantities mentioned in the paper, i.e. 
\begin{multicols}{2}
\begin{itemize}
    \item $\mathbf{R}$ --- Definition~\ref{def:autocorrelation_matrix};
        \item $\beta_{\max}(w)$ --- Definition~\ref{def:postmer_beta_max};
    \item  $i_{\max}(w)$ --- Definition~\ref{def:prefix_max_size};
    \item  $\mathbf{T}_i(a)$ --- Definition~\ref{def:prefix_letter_vector};
    \item  $a_{\max}(i)$ --- Equation~\eqref{eq:a_max}; 
    \item $\Sigma_{=0}(i)$ and $\Sigma_{\neq0}(i)$ --- Proposition~\ref{prop:recursion_postmer_first_cases};
    \item $\widetilde{\mathbf{T}_i}(a,\beta)$ --- Definition~\ref{def:prefix_letter_vector_postmer};
    \item $\widetilde{a_{\max}}(i,\beta)$ --- Equation~\eqref{eq:a_max_tilde}.
\end{itemize}
\end{multicols}
While $\mathbf{R}$ can be computed in $O(m^2)$, it verifies the following properties that may speed up its computation.
\begin{lemma}
Let $1\leq i\leq m$. 
\begin{itemize}
    \item $\mathbf{R}_{i,1}^=$;
    \item if there exists $1\leq j\leq i$ so that $\mathbf{R}_{i,j}^<$ (resp. $\mathbf{R}_{i,j}^>$), then for all $i'>i$, $\mathbf{R}_{i',j}^<$ (resp. $\mathbf{R}_{i',j}^>$).
\end{itemize}
\end{lemma}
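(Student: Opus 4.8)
The plan is to handle the two bullet points separately; both follow directly from the definition of $\mathbf{R}$ (Definition~\ref{def:autocorrelation_matrix}) together with the fundamental property that a strict lexicographic comparison between two equal-length words is decided at their first differing position.

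For the first bullet, I would simply specialize the definition to $j=1$. The entry $\mathbf{R}_{i,1}$ compares $a_1\cdots a_i$ with $a_1\cdots a_{i-1+1}=a_1\cdots a_i$, i.e. a word with itself; these are equal, so $\mathbf{R}_{i,1}^=$ holds. This requires no further work.

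For the second bullet, fix $1\le j\le i$ and suppose $\mathbf{R}_{i,j}^<$, that is $a_j\cdots a_i < a_1\cdots a_{i-j+1}$, a strict inequality between two words of common length $i-j+1$. The key fact I would invoke is that lexicographic order is determined by the first position of disagreement: if $u$ and $v$ have the same length and $u<v$, then $ux<vy$ for arbitrary words $x,y$, since the deciding position lies within the first $|u|$ letters and is unaffected by any appended suffix. Now for any $i'$ with $i<i'\le m$ (so that $\mathbf{R}_{i',j}$ is defined and $j\le i'$), the entry $\mathbf{R}_{i',j}$ compares $a_j\cdots a_{i'}$ with $a_1\cdots a_{i'-j+1}$; these decompose respectively as $(a_j\cdots a_i)(a_{i+1}\cdots a_{i'})$ and $(a_1\cdots a_{i-j+1})(a_{i-j+2}\cdots a_{i'-j+1})$. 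Taking $u=a_j\cdots a_i$, $v=a_1\cdots a_{i-j+1}$ and the two bracketed tails as $x,y$, the cited property immediately gives $a_j\cdots a_{i'}<a_1\cdots a_{i'-j+1}$, i.e. $\mathbf{R}_{i',j}^<$. The argument for $\mathbf{R}_{i,j}^>$ is identical with all inequalities reversed.

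I do not anticipate a genuine obstacle here; the only point deserving care is the bookkeeping of indices, namely verifying that $u=a_j\cdots a_i$ and $v=a_1\cdots a_{i-j+1}$ really are the length-$(i-j+1)$ prefixes of the two words compared in $\mathbf{R}_{i',j}$, which is exactly the decomposition written above. I would optionally state the preservation property as a one-line standalone observation (it is a classical fact about lexicographic order on words) so that both the $<$ and the $>$ cases follow from a single invocation rather than being argued twice.
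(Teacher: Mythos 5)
Your proof is correct and takes essentially the same route as the paper's: the first bullet is the trivial self-comparison at $j=1$, and for the second bullet the paper likewise observes that $a_j\cdots a_i$ and $a_1\cdots a_{i-j+1}$ are prefixes of the two words compared in $\mathbf{R}_{i',j}$, so any strict comparison between them propagates to the longer words. Your explicit first-differing-position argument merely spells out what the paper's proof leaves implicit.
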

\begin{proof}
The first item is trivial. For the second one, it suffices to notice that, since $i'>i$, $a_j\cdots a_i$ is a prefix of $a_j\cdots a_{i'}$ and $a_1\cdots a_{i-j+1}$ is a prefix of $a_1\cdots  a_{i'-j+1}$. Therefore, as soon as either $a_j\cdots a_i$  or $a_1\cdots a_{i-j+1}$ becomes larger than the other, this propagates to the larger words of which they are prefixes.
\end{proof}

Algorithm~\ref{algo:preprocessing} allows to compute $\mathbf{R}$, $\beta_{\max}(w)$, $i_{\max}(w)$, $\mathbf{T}_i(a)$, and $a_{\max}(i)$ at the same time. Assuming $O(|\Sigma|)=O(1)$, the algorithm runs in $O(m^2)$. Note that the value of $\beta_{\max}(w)$ returned by the algorithm is not exactly $\beta_{\max(w)}$ as defined in Definition~\ref{def:postmer_beta_max}, to avoid the dependency on $k$. The actual value of $\beta_{\max}(w)$ is the minimum between $k-m$ and the value returned by the algorithm. We leave it to the reader to convince themselves of the correctness of the algorithm.

\begin{algorithm}[h!]
\caption{\textsc{Preprocessing}}\label{algo:preprocessing}
\KwIn{$w=a_1\cdots a_m$}
Initialize $\mathbf{R}$ as a lower triangular matrix of size $m$\;
$i_{\max}(w) \gets m$ and $\beta_{\max}(w)\gets \infty$\;
\For{$1\leq i\leq m$}{
$\mathbf{R}_{i,1}\gets ``="$\;
Initialize $\mathbf{T}_i$ as a vector of size $|\Sigma|$ filled with the value $\infty$\;
Initialize $a_{\max}(i)$ as the set $\lbrace a_1\rbrace$\;
}
\For{$a\in \Sigma$}{
$\mathbf{T}_1(a)\gets 2\cdot [a=a_1]$
}
\For{$2\leq j\leq m$}{
$b\gets \bot$ and $s\gets \emptyset$\;
\For{$j\leq i \leq m$}{
\If{$b=\bot$}{
$x \gets a_j\cdots a_i$ and $y \gets a_1\cdots a_{i-j+1}$\;
\uIf{$x=y$}{
$s\gets ``="$\;
$\mathbf{T}_i(a_{i-j+2}) \gets \min (\mathbf{T}_i(a_{i-j+2}),j+1)$\;
$a_{\max}(i)\gets a_{\max}(i)\cup \lbrace a_{i-j+2}\rbrace$\;}
\uElseIf{$x<y$}
{
$s\gets ``<"$\;
$b\gets \top$\;
$i_{\max}(w)\gets \min (i_{\max}(w), i)$\;
$\beta_{\max}(w)\gets \min (\beta_{\max}(w), j-2)$\;}
\Else{
$s\gets ``>"$\;
$b\gets \top$\;
}
}
$\mathbf{R}_{i,j}\gets s$
}
\For{$a\in\Sigma$}{
\If{$\mathbf{T}_j(a)=\infty$}{
$\mathbf{T}_j(a)\gets (j+1) \cdot [a=a_1]$\;
}
}
}
\For{$1\leq i\leq m$}{
$a_{\max}(i)\gets \max(a_{\max}(i))$\;
}
\Return $\mathbf{R}, i_{\max}(w),\beta_{\max}(w),\big(\mathbf{T}_i\big)_{1\leq i\leq m},\big(a_{\max}(i)\big)_{1\leq i\leq m}$
\end{algorithm}

\begin{algorithm}[h!]
\caption{\textsc{PreprocessingBis}}\label{algo:preprocessingbis}
\KwIn{$w=a_1\cdots a_m$, $(\mathbf{T}_i)_{1\leq i \leq m}$}
\For{$1\leq i \leq m$}{
$\Sigma_{=0}(i)\gets \emptyset$ and $\Sigma_{\neq0}(i) \gets \emptyset$\\
Initialize $\widetilde{\mathbf{T}_i}$ as a vector of size $|\Sigma|$ filled with the zero function $\beta \mapsto 0$\\
Let $f : 0\mapsto \varepsilon$ be a dictionary\\
\For{$a\in\Sigma$}{
\eIf{$\mathbf{T}_i(a)\neq 0$}{
$\Sigma_{\neq0}(i)\gets \Sigma_{\neq0}(i)\cup \lbrace a\rbrace$\\
$\widetilde{\mathbf{T}_i}(a) \gets \big(\beta \mapsto \mathbf{T}_i(a) \cdot [\beta \geq m-1 +\mathbf{T}_i(a)]\big)$\tcp*{Currying}
Define $f(m-1 +\mathbf{T}_i(a)) = a$\\
}{
$\Sigma_{=0}(i)\gets \Sigma_{=0}(i)\cup \lbrace a\rbrace$
}}
$\Sigma_{\neq0}(i)\gets \Sigma_{\neq0}(i) \setminus \lbrace a_i\rbrace$\\
$\Sigma_{=0}(i)\gets \Sigma_{=0}(i) \setminus \lbrace a_i\rbrace$\\
Let $x_1,\dots,x_n$ be the values for which $f$ is defined, sorted in increasing order\\ 
Let $I$ be a table of size $n$ with $I[j] = \max(\lbrace f(x_{j'}) : j'\leq j \rbrace)$\\
$\widetilde{a_{\max}}(i) \gets \left(\beta \mapsto \left(\displaystyle\sum_{j=1}^{n-1} I[j] \cdot [x_j \leq \beta < x_{j+1}] \right) + I[n]\cdot [\beta \geq x_n]\right)$\tcp*{Currying}}
\Return $\big(\Sigma_{\neq0}(i)\big)_{1\leq i \leq m-2},\big(\Sigma_{=0}(i)\big)_{1\leq i \leq m-2}$, $\big(\widetilde{\mathbf{T}_i}\big)_{1\leq i \leq m},\big(\widetilde{a_{\max}}(i)\big)_{1\leq i \leq m}$
\end{algorithm}

The remaining quantities to be calculated, $\Sigma_{\neq0}(i)$, $\Sigma_{=0}(i)$, $\widetilde{\mathbf{T}_i}(a,\beta)$, and $\widetilde{a_{\max}}(i)$, are developed in Algorithm~\ref{algo:preprocessingbis}, running in $O(m)$ --- once again assuming $O(|\Sigma|)=O(1)$. $\Sigma_{\neq0}(i)$ and $\Sigma_{=0}(i)$ can be computed in a straightforward fashion from their definition --- see Proposition~\ref{prop:recursion_postmer_first_cases}. Although we only need the values for $1\leq i \leq m-2$, we decided to include the calculation for $m-1$ and $m$ in the pseudocode to avoid burdening it with extra conditions. 

Concerning $\widetilde{\mathbf{T}_i}(a,\beta)$, recall from Definition~\ref{def:prefix_letter_vector_postmer} that
$$\widetilde{\mathbf{T}_i}(a,\beta) = \mathbf{T}_i(a) \cdot [\beta \geq m-1 +\mathbf{T}_i(a)]$$
To avoid conditional loops every time we need to know whether $\widetilde{\mathbf{T}_i}(a,\beta)$ is 0 or not depending on the value of $\beta$, we prefer to use currying \cite{curry1980some} and define the following functions:
$$\widetilde{\mathbf{T}_i}(a) : \big(\beta \mapsto \mathbf{T}_i(a) \cdot [\beta \geq m-1 +\mathbf{T}_i(a)]\big),$$
where $\widetilde{\mathbf{T}_i}(a,\beta)$ in the text of the paper is replaced by calling $\widetilde{\mathbf{T}_i}(a)(\beta)$ in the implementation. Obviously, if $\mathbf{T}_i(a)=0$, then $\widetilde{\mathbf{T}_i}(a,\beta)=0$ for any value of $\beta$, so in this case we directly use $\widetilde{\mathbf{T}_i}(a) : (\beta \mapsto 0)$ to accelerate evaluation.

For $\widetilde{a_{\max}}(i,\beta)$, from Equation~\eqref{eq:a_max_tilde} and the previous discussion, we have
\begin{align*}
 \widetilde{a_{\max}}(i,\beta) &=\max \left(\lbrace \varepsilon \rbrace \cup \left\lbrace a \in \Sigma : \widetilde{\mathbf{T}_i}(a,\beta)\neq 0\right\rbrace\right)\\
 &= \max\left(\lbrace \varepsilon\rbrace \cup \big\lbrace a\in \Sigma: (\mathbf{T}_i(a)\neq 0) \wedge (\beta\geq m-1+\mathbf{T}_i(a)) \big\rbrace\right)
\end{align*}
Consider the list of pairs $(a,m-1+\mathbf{T}_i(a))$ for which $\mathbf{T}_i(a)\neq 0$, also containing the pair $(\varepsilon,0)$. Sorting the second item of each pair in increasing order, we get a list $x_1,\dots,x_n$ (with $x_1=0$). Note that since there cannot be two letters $a\neq b$ with $\mathbf{T}_i(a)=\mathbf{T}_i(b)$ (as soon as these values are not $0$, of course), the $x_j$'s are unique. Note also that $n\leq |\Sigma|+1$.

Let us denote $b_j$ the letter associated to $x_j$ (hence $b_1=\varepsilon$). $b_j$ belongs to the set used to evaluate $\widetilde{a_{\max}}(i,\beta)$ if and only if $\beta \geq x_j$. Since the $x_j$'s are sorted, if $x_{j+1}>\beta \geq x_j$, then $\widetilde{a_{\max}}(i,\beta)$ is exactly the maximum letter among $\lbrace b_1,\dots,b_j\rbrace$, as illustrated below.
\begin{center}
 \begin{tikzpicture}[>=latex]
\draw (0,0)--(6.5,0);
\draw[dashed] (6.5,0)--(8.5,0);
\draw[->] (8.5,0)--(15,0);
\node[right=8pt] at (15,0) {$\beta$};
\node[] at (0,0) {$\Big[$};
\node[below=8pt] at (0,0) {$x_1=0$};
\node[above=8pt] at (1.5,0) {$\varepsilon$};
\node[] at (3,0) {$\Big[$};
\node[below=8pt] at (3,0) {$x_2$};
\node[above=8pt] at (4.5,0) {$\varepsilon, b_2$};
\node[] at (6,0) {$\Big[$};
\node[below=8pt] at (6,0) {$x_3$};
\node[] at (9,0) {$\Big[$};
\node[below=8pt] at (9,0) {$x_{n-1}$};
\node[above=8pt] at (10.5,0) {$\varepsilon, b_2,\dots,b_{n-1}$};
\node[] at (12,0) {$\Big[$};
\node[below=8pt] at (12,0) {$x_{n}$};
\node[above=8pt] at (13.5,0) {$\varepsilon, b_2,\dots, b_n$};
\end{tikzpicture}  
\end{center}
To match the notations of the pseudocode, let us denote $I[j] = \max \lbrace b_1\cdots b_j\rbrace$. We now have a piecewise reformulation of $\widetilde{a_{\max}}(i,\beta)$:
$$\widetilde{a_{\max}}(i,\beta) = I[1] \cdot [x_1 \leq \beta < x_2] + \cdots + I[n-1] \cdot [x_{n-1}\leq \beta < x_n] + I[n] \cdot [\beta \geq x_n]$$
where addition is understood as character concatenation, and where $a\cdot[P] = a$ if $P$ is true, and $\varepsilon$ otherwise.

Once again, currying is used in practice, where $\widetilde{a_{\max}}(i) : \big(\beta \mapsto \widetilde{a_{\max}}(i,\beta)\big)$ (using the above reformulation) and occurrences of $\widetilde{a_{\max}}(i,\beta)$ in the equations are transformed into evaluations of $\widetilde{a_{\max}}(i)(\beta)$ in the implementation.

\begin{example}
From Example~\ref{ex:postmer_prefix_vectors}, for $w_2=\ch{ACACAC}$ and $i=5$, we have
$$\widetilde{\mathbf{T}_i}(\ch{A},\beta) = 6 \cdot [\beta \geq 11] \quad \text{and} \quad \widetilde{\mathbf{T}_i}(\ch{C},\beta) = 3 \cdot [\beta \geq 8];$$
the other values being $0$.

Following the above procedure, we constitute the following list: $(\varepsilon,0)$, $(\ch{A}, 11)$ and $(\ch{C},8)$. Sorting the list by the second item, we have $x_1=0$, $x_1=8$ and $x_2=11$, as well as $b_1=\varepsilon$, $b_2=\ch{C}$ and $b_3=\ch{A}$. We then compute $I[1] = \varepsilon$ and $I[2]=I[3]=\ch{C}$, leading to 
$$\widetilde{a_{\max}}(i,\beta) = \varepsilon\cdot [0\leq \beta < 8] + \ch{C}\cdot [8\leq \beta < 11] + \ch{C} \cdot[\beta\geq 11].$$
\end{example}

\end{document}